\documentclass[11pt]{amsart}
\usepackage[T1]{fontenc}
\usepackage[utf8]{inputenc}
\usepackage{amsmath,amssymb,amsthm,mathtools}
\usepackage{ebproof}
\usepackage{forest}
\usepackage{bm}
\usepackage{esvect}
\usepackage{enumitem}
\setlist[enumerate,1]{label=\textup{(\roman*)}}
\usepackage{fullpage}
\usepackage{microtype}
\usepackage{mleftright}
\usepackage{subcaption}
\usepackage{float}
\usepackage[pdfusetitle,colorlinks=true,linkcolor=blue,citecolor=blue,urlcolor=blue,pagebackref]{hyperref}
\usepackage[alphabetic,backrefs,msc-links]{amsrefs}
\usepackage{tikz}
\usetikzlibrary{positioning,calc}

\numberwithin{equation}{section}

\theoremstyle{plain}
\newtheorem{theorem}{Theorem}[section]
\newtheorem{lemma}[theorem]{Lemma}
\newtheorem{proposition}[theorem]{Proposition}
\newtheorem{corollary}[theorem]{Corollary}
\newtheorem{theoremA}{Theorem}

\newtheorem{theoremB}{Theorem}

\theoremstyle{definition}
\newtheorem{definition}[theorem]{Definition}
\newtheorem{remark}[theorem]{Remark}
\newtheorem{example}[theorem]{Example}
\newtheorem{openproblem}[theorem]{Open Problem}

\newcommand{\setmid}{\mathrel{}\middle|\mathrel{}}
\newcommand{\Z}{\mathbb{Z}}
\newcommand{\N}{\mathbb{N}}
\newcommand{\abs}[1]{\mleft|#1\mright|}
\newcommand{\len}[1]{\mleft|#1\mright|}
\newcommand{\card}[1]{\mleft|#1\mright|}
\newcommand{\tta}{\mathtt{a}}
\newcommand{\ttb}{\mathtt{b}}
\newcommand{\ttc}{\mathtt{c}}
\newcommand{\ttd}{\mathtt{d}}
\newcommand{\Sigman}[1]{\Sigma_{#1}}
\newcommand{\psin}[1]{\psi_{#1}}
\newcommand{\MIX}{\mathrm{MIX}}
\newcommand{\MCFL}{\mathrm{MCFL}}
\newcommand{\MCFLwn}{\mathrm{MCFL}_{\mathrm{wn}}}
\newcommand{\mMCFL}[1]{#1\text{-}\MCFL}
\newcommand{\mMCFLwn}[1]{#1\text{-}\MCFLwn}
\newcommand{\CFL}{\mathrm{CFL}}
\newcommand{\TAL}{\mathrm{TAL}}
\newcommand{\IL}{\mathrm{IL}}
\newcommand{\Wpsi}{W_\psi}
\DeclareMathOperator{\Regroup}{Regroup}
\DeclareMathOperator{\Perm}{Perm}
\DeclareMathOperator{\concat}{concat}
\DeclareMathOperator{\proj}{proj}
\DeclareMathOperator{\join}{join}
\DeclareMathOperator{\Split}{split}
\DeclareMathOperator{\tc}{tc}
\DeclareMathOperator{\inst}{inst}
\DeclareMathOperator{\pad}{pad}
\DeclareMathOperator{\Rep}{Rep}
\newcommand{\Minsubseq}{\operatorname{Min}_{\preccurlyeq}}
\newcommand{\Reppsi}{\Rep_\psi}
\newcommand{\separator}{\prime}
\newcommand{\placeholder}{\mathord{\bullet}}
\newcommand{\Gpsi}[1]{G_\psi[#1]}
\newcommand{\Gnbpsi}[1]{G_\psi^{\mathrm{nb}}[#1]}
\newcommand{\Gn}[2]{G_{#1}[#2]}
\newcommand{\Gnbn}[2]{G_{#1}^{\mathrm{nb}}[#2]}
\newcommand{\proofheading}[1]{%
  \par\addvspace{\medskipamount}%
  \noindent\textbf{#1.}\enspace\ignorespaces
}

\title{On the Kanazawa--Salvati Conjecture}
\author{Takao Yuyama}
\address{Faculty of Social Informatics, ZEN University, 3-12-11 Shinjuku, Zushi, Kanagawa 249-0007, Japan}
\email{takao\_yuyama@zen.ac.jp}
\date{}

\begin{document}

\begin{abstract}
  The language \(\mathrm{MIX}\) consists of all words over a three-letter alphabet that have an equal number of occurrences of each letter.
  It is also the word problem of \(\mathbb{Z}^2\) with respect to a suitable choice of generators.
  The Kanazawa--Salvati conjecture states that \(\mathrm{MIX}\) is not a well-nested multiple context-free language.
  Every well-nested multiple context-free language is an indexed language.
  We reduce the conjecture to an explicit combinatorial problem about tuples of words,
  which is easier to state than the original formulation in terms of arbitrary well-nested multiple context-free grammars.
  More generally, for every surjective monoid homomorphism \(\psi\colon \Sigma^* \to \mathbb{Z}^d\),
  we define a family of well-nested multiple context-free grammars \(G_\psi[r]\) for \(r \geq 1\),
  each of which generates a sublanguage of \(\psi^{-1}(\mathbf{0})\).
  We prove that every well-nested multiple context-free sublanguage of \(\psi^{-1}(\mathbf{0})\)
  is contained in \(L(G_\psi[r])\) for some \(r \geq 1\).
  Using this family, we prove that the four-letter analogue \(\mathrm{MIX}_4\),
  which is a word problem of \(\mathbb{Z}^3\),
  is not a well-nested multiple context-free language.
  The proof reduces this claim to a result of Bishop--Elder--Evetts--Gallot--Levine stating
  that the two-letter analogue \(\mathrm{MIX}_2\) is not generated by any non-branching multiple context-free grammar.
  The Kanazawa--Salvati conjecture itself remains open.
\end{abstract}

\maketitle
\vspace*{-1.5\baselineskip}

\begingroup
  \linespread{0.85}\selectfont
  \tableofcontents
\endgroup

\section{Introduction}\label{sec:introduction}

\subsection{Background}
\label{subsec:intro-background}

The Kanazawa--Salvati conjecture concerns the language
\[
  \MIX
  =
  \MIX_3
  =
  \{w \in \{\tta, \ttb, \ttc\}^* \mid \len{w}_{\tta} = \len{w}_{\ttb} = \len{w}_{\ttc}\},
\]
where \(\len{w}_x\) denotes the number of occurrences of the letter \(x\) in \(w\).
Thus \(\MIX\) consists of all words over \(\{\tta, \ttb, \ttc\}\) that have an equal number of occurrences of each letter.
There is no restriction on the order of the letters.
For example, the words \(\tta\ttb\ttc\), \(\ttc\tta\ttb\), \(\ttb\tta\tta\ttc\ttb\ttc\), and the empty word \(\varepsilon\) all belong to \(\MIX\),
whereas \(\ttb\tta\tta\) and \(\tta\tta\ttb\ttc\ttc\) do not.

The language \(\MIX\) has been studied from two perspectives
(see the introductions in
\cites{kanazawa-salvati2012,salvati2015,gebhardt-meunier-salvati2022}).
In mathematical and computational linguistics, \(\MIX\) is regarded as an extreme case of free word order.
In group theory, \(\MIX\) is a word problem of \(\Z^2\).
We describe these motivations below and then formulate the Kanazawa--Salvati conjecture.

\subsubsection{Free word order and mild context-sensitivity}

One problem in mathematical and computational linguistics is to determine
how much generative and computational power is needed to model the structure of natural language.

Bach used \(\MIX\) in an exercise showing that permutation closure does not preserve context-freeness
and discussed the example in connection with scrambling
\cite{bach1981discontinuous}*{Exercise~2 and the answer to Exercise~2}.
Bach also remarked that human languages do not appear to permit such complete freedom of order
and that certain constituents act as boundary domains for scrambling \cite{bach1981discontinuous}*{p.~5}.

In the 1980s, evidence accumulated that context-free grammars do not adequately characterize
some aspects of the structure of natural language
\cite{joshi-vijay-shanker-weir1991}*{Section~1}.
For example, Shieber argued that Swiss German is not weakly context-free by analyzing case-marking requirements in cross-serial constructions \cite{shieber1985contextfreeness}*{Sections~1--3}.
Under the linguistic assumptions of his analysis, the language
\begin{equation}\label{eq:shieber-language}
  L = \{\tta^m \ttb^n \ttc^m \ttd^n \mid m, n \geq 1\}
\end{equation}
can be obtained from Swiss German by taking homomorphic images and intersections with regular languages.
The class of context-free languages is closed under these operations \cite{hopcroft-motwani-ullman2007}*{Theorems~7.23 and 7.27}.
Thus, if Swiss German were context-free, then \(L\) would also be context-free.
However, \(L\) is not context-free \cite{hopcroft-motwani-ullman2007}*{Example~7.20}, yielding a contradiction.

Based on formal properties of tree-adjoining grammars, Joshi proposed that the grammars needed to describe natural languages might be characterized
as \emph{mildly context-sensitive grammars}.
He proposed three properties intended to characterize this class roughly:
limited cross-serial dependencies, constant growth, and polynomial parsing
\cite{joshi1985tag}*{Section~6.3.3}.
These properties provide only a rough characterization, not a precise definition, of mildly context-sensitive grammars.
Joshi, Vijay-Shanker, and Weir later suggested that mildly context-sensitive grammars should perhaps not generate \(\MIX\)
\cite{joshi-vijay-shanker-weir1991}*{Section~1}.

Joshi also described \(\MIX\) as an extreme case of free word order that is linguistically irrelevant
\cite{joshi1985tag}*{Example~6.3.4}.
Gazdar similarly remarked that it seemed unlikely that any natural language would have a \(\MIX\)-like characteristic
\cite{gazdar1988indexed}*{p.~86}.
In the same discussion, Gazdar recorded Marsh's conjecture that \(\MIX\) is not an indexed language
\cite{gazdar1988indexed}*{p.~86}; see \cite{marsh1985conjectures}.

Kanazawa and Salvati interpret the suggestion of Joshi, Vijay-Shanker, and Weir to mean that \(\MIX\) violates the condition of limited cross-serial dependencies
\cite{kanazawa-salvati2012}*{Section~1}.
These linguistic considerations lead to the question of whether particular grammar formalisms can generate \(\MIX\).

\subsubsection{Word problem for groups}

One problem in group theory is to determine which finitely generated groups
have word problems in a given class of formal languages.
A \emph{choice of generators} for a finitely generated group \(G\) is a surjective monoid homomorphism
\(\pi\colon \Sigma^* \to G\), where \(\Sigma\) is a finite alphabet.
The \emph{word problem} of \(G\) with respect to \(\pi\) is the language \(\pi^{-1}(1_G)\),
where \(1_G\) is the neutral element of \(G\).

The language \(\MIX\) is the word problem of \(\Z^2\) with respect to the homomorphism
\(\psin{3}\colon \{\tta, \ttb, \ttc\}^* \to \Z^2\) defined by
\(\psin{3}(\tta) = (1, 0)\), \(\psin{3}(\ttb) = (0, 1)\), and \(\psin{3}(\ttc) = (-1, -1)\).
Let \(\varphi_2\colon \{a_1, \bar{a}_1, a_2, \bar{a}_2\}^* \to \Z^2\) be the choice of generators defined by
\(\varphi_2(a_1) = (1, 0)\), \(\varphi_2(\bar{a}_1) = (-1, 0)\),
\(\varphi_2(a_2) = (0, 1)\), and \(\varphi_2(\bar{a}_2) = (0, -1)\).
The word problem with respect to \(\varphi_2\) is
\[
  O_2
  =
  \{w \in \{a_1, \bar{a}_1, a_2, \bar{a}_2\}^*
  \mid
  \text{\(\len{w}_{a_1} = \len{w}_{\bar{a}_1}\) and \(\len{w}_{a_2} = \len{w}_{\bar{a}_2}\)}\}.
\]
Thus \(\MIX\) and \(O_2\) are both word problems of the same group \(\Z^2\) with respect to two different choices of generators.

The languages \(\MIX\) and \(O_2\) have a geometric interpretation.
The language \(\MIX\) consists of the closed walks in \(\Z^2\) with step vectors
\((1, 0)\), \((0, 1)\), and \((-1, -1)\)
(see Figure~\ref{fig:mix-o2-closed-walks}(\subref{fig:mix-closed-walk})).
The language \(O_2\) consists of the closed walks in \(\Z^2\) with step vectors
\((1, 0)\), \((-1, 0)\), \((0, 1)\), and \((0, -1)\)
(see Figure~\ref{fig:mix-o2-closed-walks}(\subref{fig:o2-closed-walk})).

\begin{figure}[htb]
  \centering
  \begin{subfigure}[t]{0.44\textwidth}
    \centering
    \begin{tikzpicture}[scale = 0.95]
      \foreach \x in {-3, ..., 3} {
        \draw (\x, -2.2) to (\x, 3.2);
      }
      \foreach \y in {-2, ..., 3} {
        \draw (-3.2, \y) to (3.2, \y);
      }
      \coordinate (p) at (0, 0);
      \newcommand{\ArrowTo}[4]{\draw[line cap = round, -latex, ultra thick] (p) to node[midway, #1] {#2} ++(#3, #4) coordinate (p);}
      \newcommand{\MoveA}[1][]{\ArrowTo{#1}{\(\tta\)}{1}{0}}
      \newcommand{\MoveB}[1][]{\ArrowTo{#1}{\(\ttb\)}{0}{1}}
      \newcommand{\MoveC}[1][]{\ArrowTo{#1}{\(\ttc\)}{-1}{-1}}
      \MoveA[below]
      \MoveA[below]
      \MoveB[right]
      \MoveB[right]
      \MoveC[below]
      \MoveB[below left]
      \MoveC[left]
      \MoveC[left]
      \MoveC[left]
      \MoveA[below]
      \MoveB[right]
      \MoveA[below]
      \fill (0, 0) node[above right] {O} circle (1mm);
    \end{tikzpicture}
    \caption{The word \(\tta\tta\ttb\ttb\ttc\ttb\ttc\ttc\ttc\tta\ttb\tta\) in \(\MIX\).}
    \label{fig:mix-closed-walk}
  \end{subfigure}
  \hfill
  \begin{subfigure}[t]{0.54\textwidth}
    \centering
    \begin{tikzpicture}[scale = 0.95]
      \foreach \x in {-3, ..., 3} {
        \draw (\x, -2.2) to (\x, 3.2);
      }
      \foreach \y in {-2, ..., 3} {
        \draw (-3.2, \y) to (3.2, \y);
      }
      \coordinate (p) at (0, 0);
      \newcommand{\ArrowTo}[5]{\draw[line cap = round, -latex, ultra thick, #1] (p) to node[midway, #2] {#3} ++(#4, #5) coordinate (p);}
      \newcommand{\MoveX}[2][]{\ArrowTo{#1}{#2}{\(a_1\)}{1}{0}}
      \newcommand{\MoveY}[2][]{\ArrowTo{#1}{#2}{\(a_2\)}{0}{1}}
      \newcommand{\MoveXi}[2][]{\ArrowTo{#1}{#2}{\(\bar{a}_1\)}{-1}{0}}
      \newcommand{\MoveYi}[2][]{\ArrowTo{#1}{#2}{\(\bar{a}_2\)}{0}{-1}}
      \MoveX{below}
      \MoveX{below}
      \MoveY{right}
      \MoveY{right}
      \MoveXi{above}
      \MoveYi[transform canvas = {xshift = 2}]{right}
      \MoveY[transform canvas = {xshift = -2}]{left}
      \MoveXi{above}
      \MoveYi{above left}
      \MoveXi{above}
      \MoveYi{above left}
      \MoveXi{above}
      \MoveYi{left}
      \MoveX{below}
      \MoveY{below right}
      \MoveX{below}
      \fill (0, 0) node[above right] {O} circle (1mm);
    \end{tikzpicture}
    \caption{The word \(a_1 a_1 a_2 a_2 \bar{a}_1 \bar{a}_2 a_2 \bar{a}_1 \bar{a}_2 \bar{a}_1 \bar{a}_2 \bar{a}_1 \bar{a}_2 a_1 a_2 a_1\) in \(O_2\).}
    \label{fig:o2-closed-walk}
  \end{subfigure}
  \caption{Words in \(\MIX\) and \(O_2\) as closed walks in \(\Z^2\).}
  \label{fig:mix-o2-closed-walks}
\end{figure}

Whether the word problem of \(\Z^2\) is indexed has also been studied in group theory.
Gilman remarked that it did not seem to be known whether the word problem of
\(\Z \times \Z\) is indexed \cite{gilman2005formal}.
Elder later explicitly conjectured that the answer is negative \cite{elder2007gautomata}.
The relevant results and conjectures are reviewed in
Subsection~\ref{sec:group-word-problems}.

As proved in Subsection~\ref{sec:rational-equivalence}, \(\MIX\) and \(O_2\) are rationally equivalent.
Since indexed languages are closed under rational transductions,
\(\MIX\) is indexed if and only if \(O_2\) is indexed.
Thus Marsh's conjecture that \(\MIX\) is not indexed is equivalent to the group-theoretic assertion
that the word problem of \(\Z^2\) is not indexed.
To our knowledge, this question remains open
\cites{gebhardt-meunier-salvati2022,nyberg-brodda2023freeproducts}.

\subsubsection{Multiple context-free grammars}

To state the Kanazawa--Salvati conjecture, we first explain what a well-nested \emph{multiple context-free grammar} (MCFG) is.
Every well-nested MCFG generates an indexed language.
We begin by explaining how MCFGs generalize context-free grammars.

Consider the context-free grammar (CFG) with start symbol \(S\) and rules
\[
  S \to \varepsilon,
  \qquad
  S \to S A,
  \qquad
  A \to \tta S \ttb.
\]
Figure~\ref{fig:cfg-to-mcfg}(\subref{fig:cfg-derivation-tree}) shows a derivation tree for the word \(\tta\ttb\tta\tta\ttb\ttb\).
We write \(B \Rightarrow^* w\) if a nonterminal \(B \in \{S, A\}\) derives a terminal word \(w\) in zero or more steps.
A subtree rooted at a nonterminal \(B\) shows that \(B \Rightarrow^* w\), where \(w\) is obtained by reading its leaves from left to right and omitting \(\varepsilon\).
In Figure~\ref{fig:cfg-to-mcfg}(\subref{fig:cfg-inference-tree}), these assertions are displayed as an inference tree.
Each inference corresponds to an application of a CFG rule at a node of the derivation tree.
Its premises are the assertions associated with the nonterminal children of that node.
The rule \(S \to \varepsilon\) corresponds to an inference with no premises.

We can now regard each nonterminal \(B\) as a unary predicate and write \(B(w)\) for \(B \Rightarrow^* w\).
The CFG rules can then be written as
\begin{equation}\label{eq:cfg-as-mcfg}
  S(\varepsilon) \gets,
  \qquad
  S(x y) \gets S(x), A(y),
  \qquad
  A(\tta x \ttb) \gets S(x).
\end{equation}
Here the variables \(x\) and \(y\) range over terminal words.
The arrow \(\gets\) means ``if'': for example, if \(S(x)\) and \(A(y)\) hold, then \(S(x y)\) holds.
The rule \(S(\varepsilon) \gets\) has no premises.
Figure~\ref{fig:cfg-to-mcfg}(\subref{fig:cfg-predicate-tree}) displays the same inference tree in this notation.

\begin{figure}[htb]
  \centering
  \begin{subfigure}[t]{0.32\textwidth}
    \centering
    \vspace{0pt}
    \begin{forest}
      for tree={s sep=3mm, l sep=3mm, inner sep=1pt}
      [\(S\)
        [\(S\)
          [\(S\)
            [\(\varepsilon\)]
          ]
          [\(A\)
            [\(\tta\)]
            [\(S\)
              [\(\varepsilon\)]
            ]
            [\(\ttb\)]
          ]
        ]
        [\(A\)
          [\(\tta\)]
          [\(S\)
            [\(S\)
              [\(\varepsilon\)]
            ]
            [\(A\)
              [\(\tta\)]
              [\(S\)
                [\(\varepsilon\)]
              ]
              [\(\ttb\)]
            ]
          ]
          [\(\ttb\)]
        ]
      ]
    \end{forest}
    \caption{A derivation tree of the CFG.}
    \label{fig:cfg-derivation-tree}
  \end{subfigure}
  \begin{minipage}[t]{0.5\textwidth}
    \vspace{0pt}
    \begin{subfigure}{\textwidth}
      \centering
      \begin{prooftree}
        \infer0{S \Rightarrow^* \varepsilon}
        \infer0{S \Rightarrow^* \varepsilon}
        \infer1{A \Rightarrow^* \tta\ttb}
        \infer2{S \Rightarrow^* \tta\ttb}
        \infer0{S \Rightarrow^* \varepsilon}
        \infer0{S \Rightarrow^* \varepsilon}
        \infer1{A \Rightarrow^* \tta\ttb}
        \infer2{S \Rightarrow^* \tta\ttb}
        \infer1{A \Rightarrow^* \tta\tta\ttb\ttb}
        \infer2{S \Rightarrow^* \tta\ttb\tta\tta\ttb\ttb}
      \end{prooftree}
      \caption{The corresponding inference tree.}
      \label{fig:cfg-inference-tree}
    \end{subfigure}

    \medskip
    \begin{subfigure}{\textwidth}
      \centering
      \begin{prooftree}
        \infer0{S(\varepsilon)}
        \infer0{S(\varepsilon)}
        \infer1{A(\tta\ttb)}
        \infer2{S(\tta\ttb)}
        \infer0{S(\varepsilon)}
        \infer0{S(\varepsilon)}
        \infer1{A(\tta\ttb)}
        \infer2{S(\tta\ttb)}
        \infer1{A(\tta\tta\ttb\ttb)}
        \infer2{S(\tta\ttb\tta\tta\ttb\ttb)}
      \end{prooftree}
      \caption{The same inference tree in predicate notation.}
      \label{fig:cfg-predicate-tree}
    \end{subfigure}
  \end{minipage}
  \caption{Three representations of a derivation of \(\tta\ttb\tta\tta\ttb\ttb\).
  In (b), each conclusion records the word derived by a nonterminal in the corresponding subtree of (a).
  The inference tree in (c) is obtained from that in (b) by replacing each assertion \(B \Rightarrow^* w\) with \(B(w)\).}
  \label{fig:cfg-to-mcfg}
\end{figure}

An MCFG generalizes this description by allowing each nonterminal \(B\) to have a fixed arity \(r \geq 1\) and to derive an \(r\)-tuple of words.
We write \(B(w_1, \dotsc, w_r)\) if the tuple \((w_1, \dotsc, w_r)\) is derivable from \(B\).
A rule of an MCFG derives a tuple by concatenating terminals and components of the tuples in its premises, using each component of those tuples at most once.
The start symbol has arity \(1\), so the grammar still generates a language of words.
The \emph{dimension} of an MCFG is the maximum arity of its nonterminals.
Thus the grammar with rules \eqref{eq:cfg-as-mcfg} has dimension \(1\).

We give an example of an MCFG that generates the non-context-free language \(L\) in \eqref{eq:shieber-language}.
Consider the MCFG with start symbol \(S\), a nonterminal \(A\) of arity \(2\), and rules
\begin{equation}\label{eq:mcfg-shieber-language}
  \begin{aligned}
    A(\tta\ttb, \ttc\ttd) &\gets,\\
    A(\tta x, \ttc y) &\gets A(x, y),\\
    A(x \ttb, y \ttd) &\gets A(x, y),\\
    S(x y) &\gets A(x, y).
  \end{aligned}
\end{equation}
The tuples derivable from \(A\) are exactly \((\tta^m \ttb^n, \ttc^m \ttd^n)\) with \(m, n \geq 1\).
Indeed, the first rule derives the tuple \((\tta\ttb, \ttc\ttd)\), corresponding to \(m = n = 1\).
Each application of the second rule increases \(m\) by \(1\) and leaves \(n\) unchanged.
Each application of the third rule increases \(n\) by \(1\) and leaves \(m\) unchanged.
The start rule concatenates the two components, so the grammar generates exactly \(L\).
For example, for \(m = 2\) and \(n = 3\), we have the following derivation:
\[
  \begin{prooftree}
    \infer0{A(\tta\ttb, \ttc\ttd)}
    \infer1{A(\tta\tta\ttb, \ttc\ttc\ttd)}
    \infer1{A(\tta\tta\ttb\ttb, \ttc\ttc\ttd\ttd)}
    \infer1{A(\tta\tta\ttb\ttb\ttb, \ttc\ttc\ttd\ttd\ttd)}
    \infer1{S(\tta\tta\ttb\ttb\ttb\ttc\ttc\ttd\ttd\ttd)\rlap{.}}
  \end{prooftree}
\]

A \emph{well-nested} MCFG is an MCFG with additional restrictions on how its rules use and arrange the components of the tuples in their premises.
The grammars with rules \eqref{eq:cfg-as-mcfg} and \eqref{eq:mcfg-shieber-language} are both well-nested.
Precise definitions of MCFGs, their dimensions, and well-nestedness are given in Subsection~\ref{sec:well-nested-mcfl}.

\subsubsection{The Kanazawa--Salvati conjecture as an intermediate problem}

Following Sorokin~\cite{sorokin2014}, we call the conjecture that \(\MIX\) is not generated by any well-nested MCFG the \emph{Kanazawa--Salvati conjecture}.
Kanazawa and Salvati had proposed that well-nested MCFGs might provide a better formal approximation to Joshi's informal notion of mildly context-sensitive grammars \cite{kanazawa-salvati2010copying}*{Section~1}.
Salvati later explicitly conjectured that neither \(\MIX\) nor \(O_2\) is a well-nested multiple context-free language \cite{salvati2015}*{Section~5}.

We first consider well-nested MCFGs of dimension at most \(2\).
The class of languages generated by well-nested MCFGs of dimension at most \(2\) coincides with the class of tree-adjoining languages
\cites{joshi-schabes1997,kanazawa2009pumping}.
Joshi conjectured that \(\MIX\) is not a tree-adjoining language
\cite{joshi1985tag}*{Example~6.3.4}.
Kanazawa and Salvati proved this conjecture.

\begin{theorem}[Kanazawa--Salvati {\cite{kanazawa-salvati2012}}]\label{thm:kanazawa-salvati-dimension-two}
  The language \(\MIX\) is not generated by any well-nested MCFG of dimension at most \(2\). \qed
\end{theorem}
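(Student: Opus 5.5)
The plan follows the structure of Kanazawa and Salvati's original argument. We argue by contradiction: suppose \(G\) is a well-nested MCFG of dimension at most \(2\) with \(L(G) = \MIX\). After standard normalizations, every rule has at most two premises, there are no useless nonterminals, and there are no erasing or unit cycles. Each nonterminal \(B\) of arity \(2\) derives pairs \((u_1, u_2)\). Well-nestedness of a dimension-\(2\) grammar implies that, in any derivation of a word of \(\MIX\), the yield of each subtree rooted at \(B\) occupies a pattern \(w = z_0 u_1 z_1 u_2 z_2\), and the contexts combine in a properly nested way. Projecting to Parikh vectors in \(\Z^2\) via \(\psin{3}\), each derivable pair \((u_1,u_2)\) with a fixed context forces \(\psin{3}(u_1u_2)\) to equal a value determined by the context. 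By pumping (every non-useless \(B\) can be placed in some context), \(\psin{3}(u_1 u_2)\) is in fact constant, equal to \(\mathbf{0}\), for all pairs derived from \(B\) once we fix a suitable completing context. Hence each nonterminal \(B\) carries a \emph{defect} vector \(\delta_B \in \Z^2\) with \(\psin{3}(u_1u_2) = \delta_B\) for all its derivable tuples.

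The key combinatorial step is to pick a word of \(\MIX\) that tree-adjoining structure cannot produce. The natural candidate, following the winding-number idea, is the word tracing a large closed walk whose image in \(\Z^2\) is a big convex triangle \(\tta^n\ttb^n\ttc^n\), suitably perturbed. Take a derivation tree of such a word and descend from the root, always entering a child whose yield is ``large''. This yields a node \(B\) whose yield \((u_1,u_2)\) has total length between \(n/c\) and \(cn\), where \(c\) depends only on \(G\). The two components \(u_1,u_2\) then correspond to two subpaths of the closed walk, and the rest of the word to the complementary two subpaths. Condition \(\psin{3}(u_1u_2)=\delta_B\) is bounded, which means that the two subpaths have nearly opposite displacement vectors.

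The main obstacle is the geometric lemma. It must show that for a carefully chosen family of closed curves (in Kanazawa–Salvati, a curve whose winding geometry prevents decomposition), no two disjoint subwalks of proportional length can have nearly cancelling displacements while their complement nests properly. It must also show that replacing \((u_1,u_2)\) by another derivable pair of \(B\) (for instance a short one, through pumping down) would produce a word outside \(\MIX\) or contradict minimality. I would attempt to prove this by a continuity/intermediate-value argument along the curve: parametrize the splitting points and use that the displacement function on a triangle-shaped walk is injective enough that cancellation forces the two pieces to be symmetric. Then I would iterate the descent inside \(u_1\) and \(u_2\), using that dimension \(2\) allows only nested splitting, to derive that the walk must self-intersect in a forbidden pattern. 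I expect this geometric core to be the hardest step and the place where the precise choice of the witness word matters most.
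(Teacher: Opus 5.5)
\textbf{There is a genuine gap: the proposal reaches the constant-value lemma and then stops before the actual argument, and the one geometric lemma it does state cannot hold for any witness.}

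The paper does not prove this statement; it quotes Kanazawa--Salvati. The only ingredient of their argument reproduced here is the constant-value lemma (Lemma~\ref{lem:nonterminal-value}, a generalization of their Lemma~2). Your first paragraph recovers that lemma, with a slip. The constant is \(\bm{v}_B\), not \(\bm{0}\). It comes from plugging into one fixed context, not from pumping. And you cannot assume it is \(\bm{0}\) while staying in dimension \(2\): the neutralization of Theorem~\ref{thm:neutralization} raises arities.

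The gap is in what follows. Apart from the undefined clause about the complement ``nesting properly'', the geometric lemma you state is a property of a \emph{single} node: two disjoint factors of length proportional to \(\len{z}\) whose displacements nearly cancel. Every long word of \(\MIX\) has such a node in some dimension-\(2\) derivation, for the following reason.
\begin{itemize}
\item By Theorem~\ref{thm:salvati-dimension-two}, \(\MIX\) is generated by a (non-well-nested) MCFG of dimension \(2\). After the usual normalization it may be taken non-deleting and reduced.
\item The proof of the constant-value lemma applies to it verbatim, since that proof uses only non-deletion.
\item Take a derivation of a long \(z \in \MIX\) and descend from the root, always into the largest child. This gives a node whose yield consists of at most two disjoint factors of total length between \((\len{z}/2 - c)/f\) and \(\len{z}/2\), where \(f\) is the branching factor and \(c\) the maximal terminal contribution. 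The \(\psin{3}\)-value of this yield lies in the finite set \(\{\bm{v}_B\}\).
\end{itemize}
So single-node information cannot distinguish well-nested from non-well-nested derivations, and no ``suitable perturbation'' of the triangle will help. For the triangle itself you can see this even inside a well-nested grammar: \(G_{\mathrm{ex}}\) of Example~\ref{ex:neutral-binary-grammar} derives \(\tta^n\ttb^n\ttc^n\) through the nodes \((\tta^i\ttb, \ttb^{i-1}\ttc^i)\), whose displacements cancel exactly.

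Your fallback of replacing \((u_1, u_2)\) by another \(B\)-derivable pair to leave \(\MIX\) cannot work either. Since \(\bm{v}_B\) is constant and \(L(G) = \MIX\), every such substitution stays in \(L(G)\), and the ``minimality'' you mention is never specified.

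What is missing is the core of the theorem. You need a specific word \(z \in \MIX\), depending on the constants of \(G\), and an argument that \(z\) cannot be assembled from bounded pieces by dimension-\(2\) combinations of the following kind: at each binary step, the components of one premise are inserted into a single gap of the other, and every intermediate pair has \(\psi\)-value in a fixed finite set. Any such argument must use the relative position of nested nodes, because that is the only place well-nestedness is visible. This is what the lengthy, geometric Kanazawa--Salvati proof supplies. Your sentence about iterating the descent inside \(u_1\) and \(u_2\) until the walk ``self-intersects in a forbidden pattern'' names neither the witness, nor the invariant, nor the pattern. As it stands, the proposal carries out the easy reductions but not the proof.
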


In contrast, Salvati proved the following.

\begin{theorem}[Salvati {\cite{salvati2015}}]\label{thm:salvati-dimension-two}
  The language \(\MIX\) is generated by an MCFG of dimension \(2\). \qed
\end{theorem}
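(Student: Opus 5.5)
The plan is to give an explicit MCFG of dimension \(2\) and to prove that it generates exactly \(\MIX\). I will work with \(O_2\) first, because its generators come in inverse pairs and the geometry is cleaner there. I will then transfer the result to \(\MIX\) by rational equivalence (Subsection~\ref{sec:rational-equivalence}), using that multiple context-free languages of a fixed dimension are closed under rational transductions. Alternatively, the same construction can be carried out directly over \(\{\tta,\ttb,\ttc\}\), with walks taking steps \((1,0)\), \((0,1)\) and \((-1,-1)\).

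\textbf{The grammar.} It has a start symbol \(S\) and one binary nonterminal \(W\). The intended meaning is that \(W(u,v)\) holds exactly when \(uv \in O_2\), that is, when \(u\) followed by \(v\) is a closed walk. The rules are as follows.
\begin{enumerate}
  \item Finitely many axioms \(W(u,v) \gets\) for all pairs with \(\len{uv} \leq N\) and \(uv \in O_2\), for a constant \(N\) fixed later.
  \item Combination rules \(W(s_1,s_2) \gets W(x_1,y_1), W(x_2,y_2)\), where \(s_1 s_2\) is obtained from \(x_1,y_1,x_2,y_2\) by choosing an arrangement that keeps \(x_i\) before \(y_i\), and cutting the result into two words. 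Examples are \((x_1x_2, y_2y_1)\), \((x_1x_2, y_1y_2)\) and \((x_1, x_2y_1y_2)\). Each arrangement may also have terminal letters inserted, so that the grammar matches the splits used in the completeness lemma below.
  \item The rule \(S(xy) \gets W(x,y)\).
\end{enumerate}
Several of the combination rules, such as \((x_1x_2, y_1y_2)\), are not well-nested. This is unavoidable in view of Theorem~\ref{thm:kanazawa-salvati-dimension-two}.

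\textbf{Soundness.} This is immediate. Each rule only concatenates components, and \(\varphi_2\) is a homomorphism to the commutative group \(\Z^2\). Hence if both premises are closed walks, so is the conclusion, and an axiom is closed by construction.

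\textbf{Completeness.} I will argue by induction on \(\len{uv}\). The key lemma is the following. For every pair \((u,v)\) with \(uv \in O_2\) and \(\len{uv} > N\), one can write \(u\) and \(v\) as concatenations of pieces, and regroup those pieces into two nonempty pairs \((x_1,y_1)\) and \((x_2,y_2)\) such that:
\begin{itemize}
  \item each of \(x_1y_1\) and \(x_2y_2\) is a closed walk;
  \item the arrangement matches one of the finitely many combination rules.
\end{itemize}
Equivalently, the closed walk \(uv\), marked with the cut point between \(u\) and \(v\), can be split into at most four consecutive arcs. These arcs group into two closed sub-walks, each consisting of at most two arcs, and the marked point falls on an arc boundary.

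I will prove this geometrically, following Salvati's idea. View \(u\) as a lattice path from the origin to some point \(p\), and \(v\) as a path from \(p\) back to the origin. Translate \(v\) by suitable vectors and use a discrete intermediate-value argument: a displacement function that changes by unit steps and changes sign must vanish somewhere. This should yield two occurrences of the same displacement vector, one on each side, which give the required cut points. The delicate case is when \(u\) and \(v\) wind around each other so that no naive two-cut split exists. Here I expect to need a Jordan-curve or winding-number argument in the plane. The argument would show that some sub-arc of \(u\) and some sub-arc of \(v\) have exactly opposite displacements, which lets a closed sub-walk be peeled off.

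\textbf{Main obstacle.} I expect the main obstacle to be this combinatorial-geometric lemma, and in particular guaranteeing that both pieces are nonempty. Only then does the induction strictly decrease length while staying within a bounded number of arcs, so that the grammar remains of dimension \(2\) with finitely many rules. My first attempt will consider the continuous closed curve traced by \(uv\) and apply the intermediate value theorem to the difference of two positions moving along \(u\) and along the reversal of \(v\). The discrete fix-ups at lattice points are then absorbed into the finitely many axioms and the terminal letters allowed in the combination rules.
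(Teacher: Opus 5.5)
The paper does not prove this statement; it quotes it from Salvati and notes that Nederhof later reproved it through \(O_2\) with essentially the same grammar. Your overall architecture matches theirs:
\begin{itemize}
\item one arity-\(2\) nonterminal with intended meaning \(W(u,v) \iff uv \in O_2\);
\item binary rules for the interleavings of two pairs, necessarily including the non-well-nested head \((x_1 y_1, x_2 y_2)\);
\item soundness from the commutativity of \(\Z^2\);
\item transfer to \(\MIX = h_1^{-1}(O_2)\) by closure of \(\mMCFL{2}\) under rational transductions.
\end{itemize}
Those parts are fine. The gap is that completeness rests entirely on your decomposition lemma, which you state but do not prove. That lemma is the main technical content of Salvati's proof, which uses planar topology via the complex exponential, and of Nederhof's shorter proof.

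Concretely, let \(u\) run from \(\bm{0}\) to \(P\) and \(v\) from \(P\) back to \(\bm{0}\), and write the premises as \(W(x_1,x_2)\) and \(W(y_1,y_2)\). A dimension-\(2\) binary rule can only realize anchored configurations:
\begin{itemize}
\item[(i)] \(u\) and \(v\) share a point other than the forced coincidences at \(\bm{0}\) and \(P\) (head \((x_1 y_1, y_2 x_2)\));
\item[(ii)] a prefix of \(u\) and a prefix of \(v\) have opposite displacements, that is, \(u\) meets \(P - v\) nontrivially (head \((x_1 y_1, x_2 y_2)\));
\item[(iii)] \(u\) has a proper factor of displacement \(P\) (head \((y_1 x_1 y_2, x_2)\));
\item[(iv)] \(v\) has a proper factor of displacement \(-P\) (head \((x_1, y_1 x_2 y_2)\));
\item[] plus degenerate cases where \(u\) or \(v\) returns to an endpoint.
\end{itemize}
You must show that one of these always occurs with both premises nonempty. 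Your sketch does not establish this, for three reasons.
\begin{itemize}
\item Finding ``some sub-arc of \(u\) and some sub-arc of \(v\) with exactly opposite displacements'' is not enough. If both sub-arcs are interior factors, the rest of the walk falls into four separated pieces. These cannot be the two components of a single premise, however many terminal letters the rules insert. This is incompatible with your own at-most-four-arcs formulation.
\item A one-parameter discrete intermediate-value argument forces a coincidence in one coordinate only. Configurations (i) and (ii) are genuinely two-dimensional coincidences.
\item (i) and (ii) can both fail. For \(u = a_1^L\) and \(v = a_2^L \bar{a}_1^L \bar{a}_2^L\) (a rectangle cut at \((L,0)\)), only (iv) applies among the anchored configurations. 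So your case analysis must also produce, when \(uv\) is a simple closed curve and \(u\) misses \(P - v\), a proper factor of one arc with the full displacement. This amounts to a self-overlap of an arc with its translate by \(\pm P\), in the correct time order.
\end{itemize}
Proving that one of (i)--(iv) always holds nontrivially is the missing idea. Saying you ``expect to need a Jordan-curve or winding-number argument'' names a tool, not a proof, so completeness is not established.
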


By Theorem~\ref{thm:kanazawa-salvati-dimension-two}, every MCFG of dimension at most \(2\) that generates \(\MIX\) must contain a non-well-nested rule.
Salvati's MCFG indeed contains such a rule.
Nederhof later gave another proof that \(\MIX\) is generated by an MCFG of dimension at most \(2\),
using the rationally equivalent language \(O_2\) and essentially the same MCFG \cite{nederhof2016}*{Sections~1 and 7}.
To our knowledge, it remains open whether \(\MIX\) is generated by a well-nested MCFG of dimension at most \(3\).

We use the following notation.
We write \(\TAL\) for the class of tree-adjoining languages and \(\IL\) for the class of indexed languages.
For \(m \geq 1\), let \(\mMCFL{m}\) denote the class of languages generated by MCFGs of dimension at most \(m\),
and let \(\mMCFLwn{m}\) denote the class of languages generated by well-nested MCFGs of dimension at most \(m\).
Let \(\MCFLwn\) denote the class of all well-nested multiple context-free languages.

These classes satisfy
\[
  \TAL
  =
  \mMCFLwn{2}
  \subsetneq
  \mMCFLwn{3}
  \subsetneq
  \dotsb,
  \qquad
  \MCFLwn
  =
  \bigcup_{m \geq 1} \mMCFLwn{m}
  \subsetneq
  \IL.
\]
For the first equality \(\TAL = \mMCFLwn{2}\), see
\cites{joshi-schabes1997,kanazawa2009pumping}.
The second equality holds by the definition of \(\MCFLwn\),
and the strict inclusions are proved in Lemma~\ref{lem:language-class-separations}.
Theorem~\ref{thm:kanazawa-salvati-dimension-two} shows that \(\MIX \notin \mMCFLwn{2}\).
The Kanazawa--Salvati conjecture asserts that \(\MIX \notin \mMCFLwn{m}\) for every \(m \geq 1\),
whereas Marsh's conjecture asserts that \(\MIX \notin \IL\)
\cite{marsh1985conjectures}.
Consequently,
\[
  \text{Marsh's conjecture}
  \implies
  \text{the Kanazawa--Salvati conjecture}
  \implies
  \MIX \notin \TAL.
\]

The Kanazawa--Salvati conjecture is therefore a natural intermediate problem
when one asks whether the word problem of \(\Z^2\) is indexed.
An affirmative resolution of the Kanazawa--Salvati conjecture would establish that \(\MIX \notin \MCFLwn\) but would not determine whether \(\MIX\) is indexed, since \(\MCFLwn \subsetneq \IL\).
A negative resolution would establish that \(\MIX \in \MCFLwn\), which would imply that \(\MIX\) is indexed and refute Marsh's conjecture.

\subsection{Main results}
\label{subsec:intro-main-results}

Let \(\Sigma\) be a finite alphabet, let \(d \geq 1\), and let \(\psi\colon \Sigma^* \to \Z^d\) be a surjective monoid homomorphism.
Put \(\Wpsi = \psi^{-1}(\bm{0})\).
This is the word problem of \(\Z^d\) with respect to \(\psi\).
For each \(r \geq 1\), we construct a well-nested MCFG \(\Gpsi{r}\)
(see Subsection~\ref{sec:simulation} for the definition).

\begin{theoremA}\label{thm:theorem-a}
  For every well-nested multiple context-free grammar \(G\) over \(\Sigma\) with \(L(G) \subseteq \Wpsi\),
  there exists \(r \geq 1\) such that \(L(G) \subseteq L(\Gpsi{r}) \subseteq \Wpsi\).
\end{theoremA}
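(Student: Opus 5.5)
The plan is a simulation argument. The idea is that \(\Gpsi{r}\) should act as a ``universal'' well-nested MCFG for sublanguages of \(\Wpsi\), with every parameter bounded by \(r\). I expect its nonterminals to record only an arity \(m \leq r\) and a \emph{total displacement} \(\bm{c} \in \Z^d\) with \(\|\bm{c}\| \leq r\). Its rules should be all well-nested rules of size bounded in terms of \(r\) that respect displacement: the sum of \(\psi\) over the terminals of the rule, plus the displacements of the premises, equals the displacement of the head. Given \(G\), I would first normalize it by removing nonterminals that are unreachable from the start symbol or non-productive. This does not change \(L(G)\), and it keeps the grammar well-nested, since deleting rules preserves well-nestedness.

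The key observation concerns a useful nonterminal \(B\) of arity \(m\) in the normalized grammar. It has at least one context: a derivation of the start symbol in which \(B\) occurs, so that substituting a derivable tuple \((w_1,\dots,w_m)\) of \(B\) into the context yields a word of \(L(G)\subseteq \Wpsi\). Because \(\psi\) maps into the commutative group \(\Z^d\), the value of \(\psi\) on the resulting word is \(\psi(\text{context terminals}) + \sum_i \psi(w_i)\), independently of how the components are interleaved. Hence \(\sum_{i}\psi(w_i) = -\psi(\text{context terminals})\).

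Fix one context for \(B\). The quantity \(\sum_i \psi(w_i)\) is then the same vector \(\bm{c}_B\) for every derivable tuple of \(B\). The individual \(\psi(w_i)\) may be unbounded, as in \((\tta^n,\ttc^n)\), but their sum is constant. Note also that \(\bm{c}_S=\bm{0}\) for the start symbol \(S\), since \(S\) has the trivial context and \(L(G)\subseteq\Wpsi\). The grammar has finitely many nonterminals and rules, so I can choose \(r\) at least:
\begin{itemize}
\item the dimension of \(G\);
\item every \(\|\bm{c}_B\|\);
\item every rule size (number of premises and number of terminal occurrences).
\end{itemize}

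Next I map each nonterminal \(B\) to the nonterminal \((m_B,\bm{c}_B)\) of \(\Gpsi{r}\), and each rule of \(G\) to the rule of \(\Gpsi{r}\) with the same linear, well-nested string function on the relabelled nonterminals. This relabelled rule is displacement-respecting. Indeed, pick any derivable tuples of the premises, which exist because every nonterminal is productive, and apply the rule. The head tuple has total displacement \(\bm{c}_{\text{head}}\), and it also equals the terminals' contribution plus \(\sum_j \bm{c}_{B_j}\).

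An induction on derivations then shows that every tuple derivable from \(B\) in \(G\) is derivable from \((m_B,\bm{c}_B)\) in \(\Gpsi{r}\). Since \(\bm{c}_S=\bm{0}\), this gives \(L(G)\subseteq L(\Gpsi{r})\). The inclusion \(L(\Gpsi{r})\subseteq\Wpsi\) is a separate induction: every tuple derived from \((m,\bm{c})\) has total displacement \(\bm{c}\), and the start nonterminal has displacement \(\bm{0}\).

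The main obstacle is matching whatever precise definition of \(\Gpsi{r}\) Subsection~\ref{sec:simulation} uses. If \(\Gpsi{r}\) is built from a small fixed set of rule shapes (for example binary concatenation and wrapping combinators, plus terminal rules), rather than all bounded well-nested rules, then each rule of \(G\) must be decomposed into a sequence of such elementary steps. Each intermediate nonterminal then also needs a bounded displacement. I would handle this by first putting \(G\) into a well-nested normal form with rules of rank at most \(2\), keeping the dimension bounded. The intermediate displacements are sums of finitely many \(\bm{c}_{B_j}\) and terminal contributions, so they remain bounded, and \(r\) can be enlarged to cover them.
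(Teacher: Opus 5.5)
There is a genuine gap. You prove the inclusion for a grammar you have guessed, not for the paper's \(\Gpsi{r}\), and the theorem's content lies in that difference.

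The actual \(\Gpsi{r}\) has just two nonterminals, \(S\) and a single \(r\)-ary nonterminal \(I\). Its rules are:
\begin{itemize}
\item the nullary rule \(I(\varepsilon,\dotsc,\varepsilon)\gets\);
\item letter insertion rules, which put the letters of one \(M_\psi\)-word, \(M_\psi=\Minsubseq(\Wpsi\setminus\{\varepsilon\})\), at the two ends of components;
\item unary regrouping rules;
\item binary rules that insert one \(I\)-tuple into a gap of another and then regroup;
\item the start rule.
\end{itemize}
Every \(I\)-derivable tuple has \(\psi\)-value \(\bm{0}\), and nothing records displacements. Your fallback (decompose rules into elementary steps whose intermediate nonterminals carry bounded displacements) presupposes displacement-labelled nonterminals, so it does not apply. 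If \(\bm{c}_B\neq\bm{0}\), no \(B\)-derivable tuple is \(I\)-derivable, so \(B\) cannot be simulated by relabelling. Likewise, a rule of \(G\) with terminals inside its head components has no counterpart with ``the same string function'': \(\Gpsi{r}\) adds letters only at component ends, and only in \(M_\psi\)-word packages.

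The missing idea is the Neutralization Theorem~\ref{thm:neutralization}. Your constant-displacement observation is the right starting point (it is Lemma~\ref{lem:nonterminal-value}). You must then replace \(G\) by an equivalent well-nested grammar all of whose nonterminals are neutral. The paper does this as follows:
\begin{itemize}
\item It removes from each \(B\)-tuple a subsequence \(u\in\Reppsi(\bm{v}_B)=\Minsubseq(\psi^{-1}(\bm{v}_B))\) and turns the removed positions into extra component boundaries (placeholders).
\item It records \(u\) and the layout of separators and placeholders in a new nonterminal \(B^u_\eta\).
\item At each rule application, the head's representative is chosen as a subsequence of the rule's terminals together with the representatives stored in the premises.
\item The new grammar is finite because each \(\Reppsi(\bm{v})\) is finite by Higman's lemma.
\end{itemize}

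Only then does the simulation work. Take the resulting reduced, neutral grammar of branching at most \(2\) (you do anticipate the binary reduction), and let \(m\) be its dimension. Every rule \(\pi\) now has \(\tc(\pi)\in\Wpsi\). Cutting the head at its terminal positions gives at most \(m+\len{\tc(\pi)}\le r\) components, which is why \(r=m+c\). The premise tuples are regrouped, or inserted into a gap, to produce these pieces. The terminals are then restored by one letter insertion rule per part of an \(M_\psi\)-decomposition of \(\tc(\pi)\), and a final regrouping recovers the head tuple.

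For your displacement-labelled grammar the argument is routine. But it would not give the single-nonterminal grammar whose \(I\)-tuples are exactly \(\mathcal{A}_r\) when \(\psi=\psin{3}\), which is what the theorem is for.
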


In other words, the family \((L(\Gpsi{r}))_{r \geq 1}\) is cofinal with respect to inclusion
among the well-nested multiple context-free sublanguages of \(\Wpsi\).
Applying Theorem~\ref{thm:theorem-a} to \(\psin{3}\) and using Proposition~\ref{prop:g3-tuples} yields Corollary~\ref{cor:combinatorial-reformulation}, which establishes the explicit combinatorial reformulation stated in Subsection~\ref{subsec:intro-combinatorial-reformulation}.
This reformulation provides a concrete way to approach the conjecture.
Here \(r\) is the number of components in the tuples manipulated by \(\Gpsi{r}\), equivalently, the arity of its nonterminal \(I\).
In general, it may be necessary to choose \(r\) larger than the dimension \(m\) of \(G\), since the inclusion \(L(G) \subseteq L(\Gpsi{m})\) need not hold.

The second main result concerns the language
\[
  \MIX_4
  =
  \{w \in \{\tta, \ttb, \ttc, \ttd\}^*
  \mid
  \len{w}_{\tta} = \len{w}_{\ttb} = \len{w}_{\ttc} = \len{w}_{\ttd}\}.
\]
Theorem~\ref{thm:origin-crossing-dimension} and the rational equivalence of \(O_3\) and \(\MIX_4\) show that \(O_3, \MIX_4 \in \mMCFL{3} \setminus \mMCFL{2}\).
To our knowledge, it had remained open whether \(\MIX_4\), or equivalently \(O_3\), belonged to \(\mMCFLwn{3}\), or more generally to \(\MCFLwn\).

\begin{theoremB}\label{thm:theorem-b}
  The language \(\MIX_4\) is not a well-nested multiple context-free language.
  Equivalently, \(O_3\) is not a well-nested multiple context-free language.
\end{theoremB}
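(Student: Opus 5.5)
We argue by contradiction: suppose \(\MIX_4\) is well-nested multiple context-free. Since \(\MCFLwn\) is closed under rational transductions and \(\MIX_4\) and \(O_3\) are rationally equivalent, it suffices to work with whichever presentation is more convenient. We take \(\psi\) to be the choice of generators \(\varphi_3\colon\{a_1,\bar a_1,a_2,\bar a_2,a_3,\bar a_3\}^*\to\Z^3\), so that \(\Wpsi=O_3\).

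By Theorem~\ref{thm:theorem-a}, applied to a well-nested grammar \(G\) with \(L(G)=O_3\), there is some \(r\geq 1\) with \(O_3 = L(\Gpsi{r})\). The strategy is to extract from \(\Gpsi{r}\) a non-branching MCFG for \(O_2\) (or \(\MIX_2\)). This would contradict the theorem of Bishop--Elder--Evetts--Gallot--Levine that \(\MIX_2\) (equivalently, after a rational transduction, the one-dimensional word problem and related languages) has no non-branching MCFG.

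The intended mechanism is the following. Use the third coordinate as a ``budget'' that forbids branching. Intersect \(O_3\) with a regular language such as
\[
R=\bar a_3^{\,*}\{a_1,\bar a_1,a_2,\bar a_2,a_3\}^*,
\]
or with words where each \(a_3\) is interleaved in a controlled pattern. Then every derivation of a word in this intersection must be essentially linear along the \(a_3\)/\(\bar a_3\) structure. The point of having the explicit grammar \(\Gpsi{r}\), rather than an arbitrary well-nested \(G\), is that its rules have a concrete shape: each branching rule combines tuples whose images under \(\psi\) satisfy explicit constraints. Starting from this shape, we would build a modified grammar \(\Gnbpsi{r}\), in which branching rules are replaced by non-branching ones, and show that \(L(\Gnbpsi{r})\cap R\), projected onto \(\Z^2\) by erasing \(a_3,\bar a_3\), still covers enough of \(O_2\). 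Concretely, the plan is to prove that each branch of a branching inference in \(\Gpsi{r}\) that is ``balanced'' in the first two coordinates can be simulated by a non-branching derivation of bounded dimension. The third coordinate forces all but one branch to be trivial (evaluating to a fixed bounded tuple) on the words we select. Those trivial branches can then be absorbed into finitely many constants.

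The main obstacle is the step relating \(\Gpsi{r}\) to the non-branching setting. We must show that branching in \(\Gpsi{r}\) is genuinely dispensable once one extra dimension of \(\Z^3\) is sacrificed. This is where well-nestedness, which is built into the normal form of \(\Gpsi{r}\), must be used; without it the statement would be false, since \(O_3\in\mMCFL{3}\). I would first try a direct choice of test words such as \(a_3^n u\,\bar a_3^n\) with \(u\in O_2\). I would then argue that, by well-nestedness and a pumping and counting argument on \(\psi\)-values of components, the \(a_3\)-block and the \(\bar a_3\)-block must be produced along a single spine of the derivation tree. Off-spine subtrees would then have bounded \(\psi\)-displacement, so they can be replaced by a finite set of subtrees, yielding a non-branching grammar of dimension about \(2r\) for a language rationally equivalent to \(O_2\), and hence for \(\MIX_2\)-type words. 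Finally, a rational transduction converts this to \(\MIX_2\), giving the contradiction.
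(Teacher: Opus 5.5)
There is a genuine gap, and it sits at the step you flag as the main obstacle. Your test words $a_3^n u\bar a_3^n$ with $u\in O_2$, and words of $O_3\cap R$ in general, have many nonempty proper factors lying in $O_3$: $u$ itself, $a_3u\bar a_3$, and every balanced factor of $u$. A binary rule of $\Gpsi{r}$ inserts one entire premise tuple into a gap of the other. So nothing prevents a derivation in which one premise is $(a_3^n,\bar a_3^n,\varepsilon,\dotsc,\varepsilon)$ and the other is an arbitrarily long tuple with concatenation $u$. Hence the $a_3$- and $\bar a_3$-blocks need not lie on a single spine, and the off-spine subtrees are unbounded.

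The inference ``bounded $\psi$-displacement, hence replaceable by finitely many subtrees'' also fails. Every nonterminal of $\Gpsi{r}$ is neutral, so \emph{every} subtree has displacement $\bm{0}$, yet yields are arbitrarily long. A sanity check shows the heuristic cannot work as stated, because it is dimension-agnostic. Run one dimension lower, with the second coordinate as budget and $O_2\to O_1\cong\MIX_2$, it would derive the open Kanazawa--Salvati conjecture from the Bishop--Elder--Evetts--Gallot--Levine theorem. In fact Proposition~\ref{prop:binary-rule-elimination-three-letters} exhibits $\mathcal{W}_2\ttc^N\in L(\Gn{3}{2})\setminus L(\Gnbn{3}{2})$, so branching is genuinely not dispensable there.

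The missing idea is to choose test words with \emph{no} nonempty proper factor in the word problem, i.e.\ self-avoiding closed walks. You then reduce from $\Z^3$ all the way to $\Z^1$, spending two auxiliary directions, rather than to $\Z^2$.

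The paper takes $\mathcal{W}_r\in\MIX_2\setminus L(\Gnbn{2}{r})$, with $k$ occurrences of each letter, and considers $\mathcal{W}_r\ttc^k\ttd^k\in\MIX_4=L(\Gn{4}{r})$. Above a binary rule closest to the root, only letter insertion, regrouping, and start rules occur. So that rule's conclusion has concatenation $w_0\ttc^m\ttd^m$ with $w_0\in\Sigman{2}^*$. The inserted premise's concatenation $t$ is a factor of this word lying in $\MIX_4$. Either $t=\varepsilon$, or $t$ contains all four letters and hence the whole $\ttc^m$ block. In the second case the other premise's concatenation $s$, being neutral and free of $\ttc$, must be $\varepsilon$. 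Either way one premise is the all-$\varepsilon$ tuple, and the binary rule is replaced \emph{exactly} by a unary regrouping rule (Lemma~\ref{lem:binary-rule-elimination}). No pumping or boundedness argument is involved.

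Finally, $\proj_{\Sigman{2}}$ sends letter insertion rules of $\Gnbn{4}{r}$ to letter insertion rules of $\Gnbn{2}{r}$, and regrouping rules to regrouping rules. This gives $\mathcal{W}_r\in L(\Gnbn{2}{r})$, contradicting the choice of that specific witness. It also removes any need for closure of non-branching MCFLs under rational transductions.
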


Equivalently, Theorem~\ref{thm:theorem-b} shows that
\(\MIX_4 \notin \mMCFLwn{m}\) for every dimension \(m \geq 1\).
Corollary~\ref{cor:mix-n-not-well-nested} shows that
\(\MIX_n \notin \MCFLwn\) for every \(n \geq 4\) and that
\(O_n \notin \MCFLwn\) for every \(n \geq 3\).
In other words, for every \(n \geq 3\), the word problem \(O_n\) of the free abelian group
\(\Z^n\) of rank \(n\) does not belong to \(\MCFLwn\).
For rank \(0\), \(\Z^0 = \{0\}\), and its word problem with respect to any choice of generators
\(\pi\colon \Sigma^* \to \Z^0\) is \(\Sigma^*\), hence regular.
For rank \(1\), \(O_1\) agrees with \(\MIX_2\) up to a renaming of letters and is context-free
\cite{hopcroft-motwani-ullman2007}*{Exercises~5.1.8 and 6.2.1~c)}.
Thus \(\Z^2\) is the only finitely generated free abelian group for which it remains open
whether it has a word problem in \(\MCFLwn\).
The Kanazawa--Salvati conjecture asserts that it does not.

\subsection{A combinatorial reformulation of the Kanazawa--Salvati conjecture}
\label{subsec:intro-combinatorial-reformulation}

Before turning to the technical development, we restate the Kanazawa--Salvati conjecture as an explicit combinatorial problem.
All objects needed to state this problem are defined below, and the reformulation itself does not require the definition of a multiple context-free grammar.
Recall that
\[
  \MIX
  =
  \{w \in \{\tta, \ttb, \ttc\}^*
  \mid
  \len{w}_{\tta} = \len{w}_{\ttb} = \len{w}_{\ttc}\},
\]
where \(\len{w}_x\) is the number of occurrences of \(x\) in \(w\),
and \(\varepsilon\) denotes the empty word.
Write \(\MCFLwn\) for the class of languages generated by well-nested multiple context-free grammars.
The Kanazawa--Salvati conjecture asserts that \(\MIX \notin \MCFLwn\).
The following reformulation uses one explicitly defined set of tuples for each positive integer \(r\)
and does not refer to arbitrary well-nested multiple context-free grammars.

\begin{definition}\label{def:regroup}
  Let \(\vv{w} = (w_1, \dotsc, w_m)\) be a tuple of words.
  For \(0 \leq i \leq j \leq m\), put
  \[
    w_{(i, j]} = w_{i + 1} \dotsm w_j,
  \]
  where \(w_{(i, i]} = \varepsilon\).
  For \(r \geq 1\), define
  \[
    \Regroup_r(\vv{w})
    =
    \{
    (w_{(j(0), j(1)]}, w_{(j(1), j(2)]}, \dotsc, w_{(j(r - 1), j(r)]})
    \mid
    0 = j(0) \leq j(1) \leq \dotsb \leq j(r) = m
    \}.
  \]
  Thus \(\Regroup_r(\vv{w})\) consists of all \(r\)-tuples obtained by regrouping \(w_1, \dotsc, w_m\) into \(r\) consecutive components,
  allowing empty components.
  For example,
  \begin{align*}
    \Regroup_3(\tta, \ttb, \ttc, \ttb\tta) =
    \mleft\{
      \begin{aligned}
        &(\tta\ttb\ttc\ttb\tta, \varepsilon, \varepsilon),
        (\tta\ttb\ttc, \ttb\tta, \varepsilon),
        (\tta\ttb\ttc, \varepsilon, \ttb\tta),
        (\tta\ttb, \ttc\ttb\tta, \varepsilon),
        (\tta\ttb, \ttc, \ttb\tta),\\
        &(\tta\ttb, \varepsilon, \ttc\ttb\tta),
        (\tta, \ttb\ttc\ttb\tta, \varepsilon),
        (\tta, \ttb\ttc, \ttb\tta),
        (\tta, \ttb, \ttc\ttb\tta),
        (\tta, \varepsilon, \ttb\ttc\ttb\tta),\\
        &(\varepsilon, \tta\ttb\ttc\ttb\tta, \varepsilon),
        (\varepsilon, \tta\ttb\ttc, \ttb\tta),
        (\varepsilon, \tta\ttb, \ttc\ttb\tta),
        (\varepsilon, \tta, \ttb\ttc\ttb\tta),
        (\varepsilon, \varepsilon, \tta\ttb\ttc\ttb\tta)
      \end{aligned}
    \mright\}.
  \end{align*}
  In general, if \(w_1, \dotsc, w_m\) are all nonempty words,
  then \(\Regroup_r(\vv{w})\) has \(\binom{m + r - 1}{m}\) elements
  by the usual ``balls-and-bars'' argument.
\end{definition}

For each \(r \geq 1\), define \(\mathcal{A}_r\) to be the smallest subset of
\((\{\tta, \ttb, \ttc\}^*)^r\) satisfying the following three conditions.

\begin{enumerate}[leftmargin=*]
  \item\label{item:combinatorial-empty-tuple}
        \((\underbrace{\varepsilon, \dotsc, \varepsilon}_r) \in \mathcal{A}_r\).
  \item\label{item:combinatorial-letter-insertion}
        If \((w_1, \dotsc, w_r) \in \mathcal{A}_r\) and
        \(x_1 y_1 \dotsm x_r y_r\) is a permutation of \(\tta\ttb\ttc\) with
        \(x_j, y_j \in \{\varepsilon, \tta, \ttb, \ttc\}\) for \(1 \leq j \leq r\), then
  \[
    (x_1 w_1 y_1, \dotsc, x_r w_r y_r) \in \mathcal{A}_r .
  \]
  For example, when \(r = 3\), if \((w_1, w_2, w_3) \in \mathcal{A}_3\), then the tuples
  \[
    (\tta w_1 \ttb, \ttc w_2, w_3), \quad
    (\ttc w_1 \tta, \ttb w_2, w_3), \quad
    (w_1 \ttb, w_2 \tta, \ttc w_3), \quad
    (w_1, w_2 \ttc, \tta w_3 \ttb)
  \]
  all belong to \(\mathcal{A}_3\).
  \item\label{item:combinatorial-binary-combination}
        If \((w_1, \dotsc, w_r), (w'_1, \dotsc, w'_r) \in \mathcal{A}_r\), then for every \(0 \leq i \leq r\),
  \[
    \Regroup_r
    (w_1, \dotsc, w_i, w'_1, \dotsc, w'_r, w_{i + 1}, \dotsc, w_r)
    \subseteq \mathcal{A}_r .
  \]
  For example, when \(r = 2\), if \((w_1, w_2), (w'_1, w'_2) \in \mathcal{A}_2\), then
  \[
    \Regroup_2(w_1, w'_1, w'_2, w_2)
    =
    \{
    (\varepsilon, w_1 w'_1 w'_2 w_2),
    (w_1, w'_1 w'_2 w_2),
    \dotsc,
    (w_1 w'_1 w'_2 w_2, \varepsilon)
    \}
    \subseteq \mathcal{A}_2 .
  \]
  Similarly,
  \[
    \Regroup_2(w_1, w_2, w'_1, w'_2)
    =
    \{
    (\varepsilon, w_1 w_2 w'_1 w'_2),
    (w_1, w_2 w'_1 w'_2),
    \dotsc,
    (w_1 w_2 w'_1 w'_2, \varepsilon)
    \}
    \subseteq \mathcal{A}_2 .
  \]
  This operation first inserts the components \(w'_1, \dotsc, w'_r\) into one of the \(r + 1\) gaps before, between,
  or after the components \(w_1, \dotsc, w_r\).
  It then applies \(\Regroup_r\), regrouping the resulting components into \(r\) consecutive components,
  allowing empty components, without cutting any word \(w_j\) or \(w'_j\) internally.
\end{enumerate}

Conditions~\ref{item:combinatorial-empty-tuple} and~\ref{item:combinatorial-binary-combination}
imply the following useful closure property:
\begin{equation}\label{eq:a-r-regrouping-closure}
  (w_1, \dotsc, w_r) \in \mathcal{A}_r
  \implies
  \Regroup_r(w_1, \dotsc, w_r)
  =
  \Regroup_r
  (w_1, \dotsc, w_r,
  \underbrace{\varepsilon, \dotsc, \varepsilon}_r)
  \subseteq
  \mathcal{A}_r .
\end{equation}

Finally, define \(\concat(\mathcal{A}_r)\) by
\[
  \concat(\mathcal{A}_r) = \{w_1 \dotsm w_r \mid (w_1, \dotsc, w_r) \in \mathcal{A}_r\}.
\]
By induction on the definition of \(\mathcal{A}_r\), we have \(\concat(\mathcal{A}_r) \subseteq \MIX\).
Corollary~\ref{cor:combinatorial-reformulation} characterizes the Kanazawa--Salvati conjecture as follows:
\[
  \MIX \notin \MCFLwn
  \iff
  \text{\(\concat(\mathcal{A}_r) \subsetneq \MIX\) for every \(r \geq 1\).}
\]
Equivalently, the conjecture asks whether the following statement holds:
\[
  \forall r \geq 1\;
  \exists w \in \MIX\;
  \forall (w_1, \dotsc, w_r) \in (\{\tta, \ttb, \ttc\}^*)^r\;
  [
    w = w_1 \dotsm w_r
    \implies
    (w_1, \dotsc, w_r) \notin \mathcal{A}_r
  ].
\]

It is easy to see that
\[
  \concat(\mathcal{A}_1)
  \subseteq
  \concat(\mathcal{A}_2)
  \subseteq
  \dotsb,
  \qquad
  \MIX
  =
  \bigcup_{r \geq 1} \concat(\mathcal{A}_r)
\]
(see Remark~\ref{rem:first-two-combinatorial-cases} for details).
Thus the Kanazawa--Salvati conjecture is equivalent to the assertion that this ascending chain does not stabilize.

\begin{example}\label{ex:combinatorial-witness}
  Let \(w = \tta\tta\tta\ttb\ttb\ttb\ttc\ttb\ttb\ttc\ttc\ttc\ttc\tta\tta\).
  We show that \(w \in \concat(\mathcal{A}_3)\).
  \begin{enumerate}[label=\textup{\arabic*.}, ref=\arabic*, leftmargin=*]
    \item\label{step:witness-empty}
      By condition~\ref{item:combinatorial-empty-tuple}, we have \((\varepsilon, \varepsilon, \varepsilon) \in \mathcal{A}_3\).
    \item\label{step:witness-abc}
      By step~\ref{step:witness-empty} and condition~\ref{item:combinatorial-letter-insertion}, we have \((\tta, \ttb, \ttc) \in \mathcal{A}_3\).
    \item\label{step:witness-aabbcc}
      By step~\ref{step:witness-abc} and condition~\ref{item:combinatorial-letter-insertion}, we have \((\tta\tta, \ttb\ttb, \ttc\ttc) \in \mathcal{A}_3\).
    \item\label{step:witness-binary}
      Applying condition~\ref{item:combinatorial-binary-combination} with \(i = 1\) to the tuples in steps~\ref{step:witness-aabbcc} and~\ref{step:witness-abc}, we obtain
      \[
        (\tta\tta\tta, \ttb\ttc\ttb\ttb\ttc\ttc, \varepsilon)
        \in \Regroup_3(\tta\tta, \tta, \ttb, \ttc, \ttb\ttb, \ttc\ttc)
        \subseteq \mathcal{A}_3.
      \]
    \item\label{step:witness-four}
      By step~\ref{step:witness-binary} and condition~\ref{item:combinatorial-letter-insertion}, we have \((\tta\tta\tta, \ttb\ttb\ttc\ttb\ttb\ttc\ttc\ttc, \tta) \in \mathcal{A}_3\).
    \item\label{step:witness-five}
      By step~\ref{step:witness-four} and condition~\ref{item:combinatorial-letter-insertion}, we have \((\tta\tta\tta, \ttb\ttb\ttb\ttc\ttb\ttb\ttc\ttc\ttc\ttc, \tta\tta) \in \mathcal{A}_3\).
    \item
      By step~\ref{step:witness-five} and the definition of \(\concat\), we have
      \(w = \tta\tta\tta\ttb\ttb\ttb\ttc\ttb\ttb\ttc\ttc\ttc\ttc\tta\tta \in \concat(\mathcal{A}_3)\).
  \end{enumerate}
  On the other hand, an exhaustive computer search shows that \(w \notin \concat(\mathcal{A}_2)\); see Remark~\ref{rem:first-two-combinatorial-cases}.
\end{example}

It is also noted in Remark~\ref{rem:first-two-combinatorial-cases} that
\(\concat(\mathcal{A}_r) \subsetneq \MIX\) for \(r = 1, 2\),
and hence \(r = 3\) is the first open case.

\subsection{Structure of the paper}
\label{subsec:intro-outline}

After the preliminaries in Section~\ref{sec:preliminaries},
Section~\ref{sec:proof-theorem-a} proves Theorem~\ref{thm:theorem-a}
and derives the combinatorial reformulation of the Kanazawa--Salvati conjecture,
and Section~\ref{sec:four-letter-consequence} proves Theorem~\ref{thm:theorem-b}.
Section~\ref{sec:further-background} provides supplementary background,
and Section~\ref{sec:conclusion} summarizes the consequences of the main results
and states the remaining combinatorial problem.
The appendices contain deferred proofs, auxiliary results, and related open problems.

\section{Preliminaries}\label{sec:preliminaries}

\subsection{Basic notation}
\label{sec:words-languages-homomorphisms}

We write \(\N = \{0, 1, 2, \dotsc\}\).
An \emph{alphabet} is a finite set, and its elements are called \emph{letters}.
A \emph{word} over an alphabet \(\Sigma\) is a finite sequence of letters of \(\Sigma\).
For a word \(w\), we write \(\len{w}\) for its \emph{length}.
The unique word of length zero is the \emph{empty word}, denoted by \(\varepsilon\).
We write \(\Sigma^*\) for the set of words over \(\Sigma\) and \(\Sigma^n\) for the set of words of length \(n\).
We sometimes identify a letter \(a \in \Sigma\) with the corresponding word \(a \in \Sigma^*\) of length one.
For a word \(w \in \Sigma^*\) and a letter \(a \in \Sigma\), we write \(\len{w}_a\) for the number of occurrences of \(a\) in \(w\).
A \emph{language} over \(\Sigma\) is a subset of \(\Sigma^*\).
If \(K \subseteq L \subseteq \Sigma^*\), we call \(K\) a \emph{sublanguage} of \(L\).
For words \(u, v \in \Sigma^*\), their \emph{concatenation} is denoted by \(u v\).
A word \(u\) is called a \emph{factor} of a word \(v\) if \(v = xuy\) for some \(x, y \in \Sigma^*\).
A factor \(u\) of \(v\) is called a \emph{proper factor} if \(u \neq v\).

With concatenation as the binary operation and \(\varepsilon\) as the neutral element, \(\Sigma^*\) is the \emph{free monoid} generated by \(\Sigma\).
Let \(M\) be a monoid with neutral element \(1_M\).
A map \(\psi\colon \Sigma^* \to M\) is called a \emph{monoid homomorphism} if \(\psi(\varepsilon) = 1_M\) and \(\psi(u v) = \psi(u) \psi(v)\) for every \(u, v \in \Sigma^*\).
If \(M = \Z^d\) for some \(d \geq 1\), then the group operation on \(\Z^d\) is written additively, and its neutral element is \(\bm{0}\).
Thus, for a homomorphism \(\psi\colon \Sigma^* \to \Z^d\), we have \(\psi(\varepsilon) = \bm{0}\) and \(\psi(u v) = \psi(u) + \psi(v)\).
Such a homomorphism is determined by its values on the letters of \(\Sigma\).

For \(n \in \N\), put \([n] = \{1, \dotsc, n\}\), where \([0] = \varnothing\).
For \(w = a_1 \dotsm a_n \in \Sigma^n\) and a nonempty subset \(J = \{j_1, \dotsc, j_l\} \subseteq [n]\), where \(j_1 < \dotsb < j_l\),
define \(w[J] = a_{j_1} \dotsm a_{j_l}\),
and put \(w[\varnothing] = \varepsilon\).
We call a word \(u\) a \emph{subsequence} of \(w\), and write \(u \preccurlyeq w\), if \(u = w[J]\) for some \(J \subseteq [n]\).
We write \(u \prec w\) if \(u \preccurlyeq w\) and \(u \neq w\).
For a language \(L \subseteq \Sigma^*\),
we write \(\Minsubseq(L)\) for the set of minimal elements of \(L\) with respect to \(\preccurlyeq\),
that is, the set of elements \(u \in L\) such that there is no \(v \in L\) with \(v \prec u\).
If \(w = a_1 \dotsm a_k\) (\(a_i \in \Sigma\)) is a word, we write \(\Perm(w)\) for the set of all words \(a_{\sigma(1)} \dotsm a_{\sigma(k)}\),
where \(\sigma\) ranges over all permutations of \(\{1, \dotsc, k\}\).

For an alphabet \(\Omega\) and a subset \(\Delta \subseteq \Omega\), let
\(\proj_\Delta\colon \Omega^* \to \Delta^*\) be the monoid homomorphism defined by
\(\proj_\Delta(a) = a\) for \(a \in \Delta\) and \(\proj_\Delta(a) = \varepsilon\) for \(a \in \Omega \setminus \Delta\).
We call \(\proj_\Delta\) the \emph{projection onto \(\Delta\)}.
We regard \((\Omega^*)^*\) as the set of all tuples of words over \(\Omega\).
For tuples of words, we use symbols with arrows, such as \(\vv{w} = (w_1, \dotsc, w_r)\).
For alphabets \(\Omega\) and \(\Gamma\) and a monoid homomorphism \(h\colon \Omega^* \to \Gamma^*\),
we also write \(h\) for its componentwise extension to tuples, defined by
\[
  h(\vv{w}) = (h(w_1), \dotsc, h(w_r))
  \quad\text{for}\quad
  \vv{w} = (w_1, \dotsc, w_r) \in (\Omega^*)^*.
\]
For an \(r\)-ary symbol \(A\) and \(\vv{w} = (w_1, \dotsc, w_r)\), we write \(A(\vv{w})\) for \(A(w_1, \dotsc, w_r)\).
Define the map \(\concat\colon (\Omega^*)^* \to \Omega^*\) by
\(\concat(\vv{w}) = w_1 \dotsm w_r\) for \(\vv{w} = (w_1, \dotsc, w_r) \in (\Omega^*)^*\), where \(r \geq 0\).

Let \(\vv{u}\) and \(\vv{v}\) be tuples of words over the same alphabet, each having at least one component.
We write \(\vv{u} \sqsubseteq \vv{v}\) if
\(\vv{u} \in \Regroup_r(\vv{v})\) for some \(r \geq 1\),
where \(\Regroup_r\) is defined in Definition~\ref{def:regroup}.
The relation \(\sqsubseteq\) is reflexive and transitive.
Componentwise extensions of monoid homomorphisms preserve \(\sqsubseteq\):
if \(h\colon \Omega^* \to \Gamma^*\) is a monoid homomorphism and \(\vv{u} \sqsubseteq \vv{v}\), then
\(h(\vv{u}) \sqsubseteq h(\vv{v})\).
If \(\vv{v}'\) is obtained from \(\vv{v}\) by inserting empty components, then
\(\vv{v} \sqsubseteq \vv{v}' \sqsubseteq \vv{v}\); in particular, \(\sqsubseteq\) is not antisymmetric.

\subsection{The languages
\texorpdfstring{\(\Wpsi\)}{Wpsi},
\texorpdfstring{\(\MIX_n\)}{MIXn}, and
\texorpdfstring{\(O_n\)}{On}}
\label{sec:lpsi-mixn}

For a finite alphabet \(\Sigma\), an integer \(d \geq 1\), and a surjective monoid homomorphism
\(\psi\colon \Sigma^* \to \Z^d\), put
\[
  \Wpsi = \psi^{-1}(\bm{0}).
\]
The language \(\Wpsi\) is the word problem of \(\Z^d\) with respect to the choice of generators \(\psi\).

For \(n \geq 2\), set \(\Sigman{n} = \{\tta_1, \dotsc, \tta_n\}\), and define the homomorphism \(\psin{n}\colon \Sigman{n}^* \to \Z^{n - 1}\) by
\[
  \psin{n}(\tta_i) = \bm{e}_i\quad(1 \leq i \leq n - 1),
  \qquad
  \psin{n}(\tta_n) = (-1, \dotsc, -1),
\]
where \(\bm{e}_i\) is the \(i\)-th standard basis vector of \(\Z^{n - 1}\).
Then define \(\MIX_n\) by
\[
  \MIX_n = W_{\psin{n}} = \psin{n}^{-1}(\bm{0}).
\]
For \(n \leq 4\), we write \(\tta = \tta_1\), \(\ttb = \tta_2\), \(\ttc = \tta_3\), and \(\ttd = \tta_4\).

For \(n \geq 1\), set
\(\Delta_n = \{a_1, \bar{a}_1, \dotsc, a_n, \bar{a}_n\}\) and define
\[
  O_n
  =
  \{w \in \Delta_n^*
  \mid
  \text{\(\len{w}_{a_i} = \len{w}_{\bar{a}_i}\) for every \(1 \leq i \leq n\)}\}.
\]
This is the word problem of \(\Z^n\) with respect to the choice of generators
\(\varphi_n\colon \Delta_n^* \to \Z^n\) defined by
\(\varphi_n(a_i) = \bm{e}_i\) and \(\varphi_n(\bar{a}_i) = -\bm{e}_i\) for \(1 \leq i \leq n\).

Thus \(O_n\) is the word problem of \(\Z^n\) with respect to \(\varphi_n\),
whereas \(\MIX_{n + 1}\) is the word problem of \(\Z^n\) with respect to \(\psin{n + 1}\).

\subsection{Well-nested multiple context-free languages}
\label{sec:well-nested-mcfl}

\subsubsection{Multiple context-free grammars}

Multiple context-free grammars (MCFGs) were introduced by Seki--Matsumura--Fujii--Kasami
as a generalization of context-free grammars \cite{seki1991}.
Vijay-Shanker, Weir, and Joshi introduced linear context-free rewriting systems (LCFRSs)
as a general framework for comparing the structural descriptions produced by several grammatical formalisms
\cite{vijay-shanker-weir-joshi1987}.
We recall their relation to MCFGs in Subsection~\ref{sec:equivalent-formalisms}.
Following \cite{kanazawa2019ogden}*{Section~2.1}, we write rules of MCFGs as Horn clauses;
the same rule notation is used in \cite{kanazawa-michaelis-salvati-yoshinaka2011}*{Section~2}.
This way of writing rules is equivalent to the original presentation in terms of functions.

\begin{definition}
  A \emph{multiple context-free grammar} (MCFG for short) is a tuple \(G = (N, \Sigma, P, S)\),
  where
  \begin{itemize}
    \item \(N = \bigcup_{r \geq 1} N^{(r)}\) is a finite set of \emph{nonterminals},
      and the family \((N^{(r)})_{r \geq 1}\) is pairwise disjoint,
    \item \(\Sigma\) is a finite alphabet of \emph{terminals} disjoint from \(N\),
    \item \(P\) is a finite set of \emph{rules} (defined below), and
    \item \(S \in N^{(1)}\) is the \emph{start symbol}.
  \end{itemize}
  A \emph{rule} is an expression of the form
  \begin{equation}\label{eq:rule}
    A(t_1, \dotsc, t_r) \gets B_1(x_{1, 1}, \dotsc, x_{1, r_1}), \dotsc, B_n(x_{n, 1}, \dotsc, x_{n, r_n})
  \end{equation}
  where
  \begin{itemize}
    \item \(A \in N^{(r)}\) and \(B_i \in N^{(r_i)}\) for \(1 \leq i \leq n\),
    \item the \emph{variables} \(x_{i, j}\) are pairwise distinct, and
    \item each \(t_k\) (\(1 \leq k \leq r\)) is a word over \(\Sigma \cup \{x_{i, j} \mid 1 \leq i \leq n, 1 \leq j \leq r_i\}\)
      such that each variable \(x_{i, j}\) (\(1 \leq i \leq n, 1 \leq j \leq r_i\)) occurs at most once in the concatenation \(t_1 \dotsm t_r\).
  \end{itemize}
  The left-hand side and right-hand side of \eqref{eq:rule} are called the \emph{head} and the \emph{body} of the rule, respectively.
  An \emph{atom} is an expression \(A(u_1, \dotsc, u_r)\), where \(A \in N^{(r)}\) and each \(u_i\) is a word over terminals and variables.
  An atom is \emph{ground} if it contains no variables.
\end{definition}

If \(A \in N^{(r)}\), then \(r\) is called the \emph{arity} of \(A\).
The \emph{dimension} of \(G\) is the maximum arity of its nonterminals.
The \emph{branching} of a rule is the number of occurrences of nonterminals in its body.
The \emph{branching factor} of \(G\) is the maximum branching of its rules (or \(0\) if \(P = \varnothing\)).
Rules of branching \(0\), \(1\), and \(2\) are called nullary, unary, and binary, respectively.
More generally, a rule of branching \(n\) is \(n\)-ary.
A rule whose head is the start symbol is called a \emph{start rule}.
A grammar is \emph{non-branching} if its branching factor is at most \(1\), \emph{binary branching} if its branching factor is at most \(2\),
and, more generally, \emph{\(n\)-ary branching} if its branching factor is at most \(n\).

\begin{definition}
  A rule
  \[
    A(t_1, \dotsc, t_r) \gets B_1(x_{1, 1}, \dotsc, x_{1, r_1}), \dotsc, B_n(x_{n, 1}, \dotsc, x_{n, r_n})
  \]
  is \emph{non-deleting} if every variable \(x_{i, j}\) occurs exactly once in \(t_1 \dotsm t_r\).
  It is \emph{non-permuting} if there do not exist indices \(i, j, k\) such that
  \(1 \leq i \leq n\), \(1 \leq j < k \leq r_i\), and \(x_{i, k} x_{i, j} \preccurlyeq t_1 \dotsm t_r\).
  Note that the rule is non-deleting and non-permuting if and only if \(x_{i, 1} \dotsm x_{i, r_i} \preccurlyeq t_1 \dotsm t_r\) for each \(1 \leq i \leq n\).

  A non-deleting and non-permuting rule is \emph{well-nested} if there do not exist indices \(i, j, k, i', j', k'\) such that
  \(1 \leq i, i' \leq n\), \(i \neq i'\), \(1 \leq j < k \leq r_i\), \(1 \leq j' < k' \leq r_{i'}\),
  and \(x_{i, j} x_{i', j'} x_{i, k} x_{i', k'} \preccurlyeq t_1 \dotsm t_r\).
  An MCFG is \emph{well-nested} if every rule in the grammar is well-nested.
\end{definition}

\begin{example}
  Let \(A\), \(B\), and \(C\) be nonterminals of arity \(2\).
  Then the rules
  \[
    A(x_1 y_1, y_2 x_2) \gets B(x_1, x_2), C(y_1, y_2),
    \qquad
    A(x_1, x_2 y_1 y_2) \gets B(x_1, x_2), C(y_1, y_2)
  \]
  are both well-nested, while the rule
  \[
    A(x_1 y_1, x_2 y_2) \gets B(x_1, x_2), C(y_1, y_2)
  \]
  is not well-nested.
  All three rules are non-deleting and non-permuting.
  Figure~\ref{fig:well-nested-rules} shows the order of the variables in their heads.
\end{example}

\begin{figure}[htb]
  \centering
  \begin{subfigure}[t]{0.31\textwidth}
    \centering
    \begin{tikzpicture}[x=1em, y=1ex, every node/.style={anchor=base, inner sep=0pt}]
      \path[use as bounding box] (-1.8, -6) rectangle (6.5, 2);
      \node[anchor=base east] at (-0.55, 0) {\(A(\)};
      \node at (0, 0) {\(x_1\)};
      \node at (1.6, 0) {\(y_1\)};
      \node at (2.35, 0) {\(,\)};
      \node at (3.6, 0) {\(y_2\)};
      \node at (5.2, 0) {\(x_2\)};
      \node[anchor=base west] at (5.7, 0) {\()\)};
      \draw (0, -0.7) -- (0, -5.5) -- (5.2, -5.5) -- (5.2, -0.7);
      \draw (1.6, -0.7) -- (1.6, -3.3) -- (3.6, -3.3) -- (3.6, -0.7);
    \end{tikzpicture}
    \caption{Well-nested.}
    \label{fig:well-nested-nested}
  \end{subfigure}
  \hfill
  \begin{subfigure}[t]{0.31\textwidth}
    \centering
    \begin{tikzpicture}[x=1em, y=1ex, every node/.style={anchor=base, inner sep=0pt}]
      \path[use as bounding box] (-1.8, -6) rectangle (6.5, 2);
      \node[anchor=base east] at (-0.55, 0) {\(A(\)};
      \node at (0, 0) {\(x_1\)};
      \node at (0.75, 0) {\(,\)};
      \node at (2, 0) {\(x_2\)};
      \node at (3.6, 0) {\(y_1\)};
      \node at (5.2, 0) {\(y_2\)};
      \node[anchor=base west] at (5.7, 0) {\()\)};
      \draw (0, -0.7) -- (0, -3.3) -- (2, -3.3) -- (2, -0.7);
      \draw (3.6, -0.7) -- (3.6, -3.3) -- (5.2, -3.3) -- (5.2, -0.7);
    \end{tikzpicture}
    \caption{Well-nested.}
    \label{fig:well-nested-disjoint}
  \end{subfigure}
  \hfill
  \begin{subfigure}[t]{0.31\textwidth}
    \centering
    \begin{tikzpicture}[x=1em, y=1ex, every node/.style={anchor=base, inner sep=0pt}]
      \path[use as bounding box] (-1.8, -6) rectangle (6.5, 2);
      \node[anchor=base east] at (-0.55, 0) {\(A(\)};
      \node at (0, 0) {\(x_1\)};
      \node at (1.6, 0) {\(y_1\)};
      \node at (2.35, 0) {\(,\)};
      \node at (3.6, 0) {\(x_2\)};
      \node at (5.2, 0) {\(y_2\)};
      \node[anchor=base west] at (5.7, 0) {\()\)};
      \draw (0, -0.7) -- (0, -3.3) -- (3.6, -3.3) -- (3.6, -0.7);
      \draw (1.6, -0.7) -- (1.6, -5.5) -- (5.2, -5.5) -- (5.2, -0.7);
    \end{tikzpicture}
    \caption{Not well-nested.}
    \label{fig:not-well-nested}
  \end{subfigure}
  \caption{Two well-nested rules and a rule that is not well-nested.
  Only the heads are shown; all three rules have body \(B(x_1, x_2), C(y_1, y_2)\).
  The lines under each head join \(x_1\) to \(x_2\) and \(y_1\) to \(y_2\).}
  \label{fig:well-nested-rules}
\end{figure}

\begin{definition}
  For a rule
  \[
    \pi\colon A(t_1, \dotsc, t_r) \gets B_1(x_{1, 1}, \dotsc, x_{1, r_1}), \dotsc, B_n(x_{n, 1}, \dotsc, x_{n, r_n}),
  \]
  the \emph{terminal contribution} of \(\pi\) is defined by \(\tc(\pi) = \proj_\Sigma(t_1 \dotsm t_r)\).
\end{definition}

\begin{lemma}\label{lem:terminal-contribution-formula}
  Let \(\Sigma\) be an alphabet, let \(d \geq 1\), and let
  \(\psi\colon \Sigma^* \to \Z^d\) be a monoid homomorphism.
  Let
  \[
    \pi\colon
    A(t_1, \dotsc, t_r)
    \gets
    B_1(x_{1, 1}, \dotsc, x_{1, r_1}), \dotsc,
    B_n(x_{n, 1}, \dotsc, x_{n, r_n})
  \]
  be a non-deleting rule of an MCFG over \(\Sigma\).
  Put \(\vv{t} = (t_1, \dotsc, t_r)\) and
  \(\vv*{x}{i} = (x_{i, 1}, \dotsc, x_{i, r_i})\) for \(1 \leq i \leq n\).
  Let
  \[
    \sigma\colon
    (\Sigma \cup \{x_{i, j} \mid 1 \leq i \leq n, 1 \leq j \leq r_i\})^*
    \to \Sigma^*
  \]
  be a monoid homomorphism such that \(\sigma(a) = a\) for \(a \in \Sigma\).
  Then
  \[
    \psi(\concat(\sigma(\vv{t})))
    =
    \psi(\tc(\pi))
    +
    \sum_{i = 1}^n \psi(\concat(\sigma(\vv*{x}{i}))).
  \]
\end{lemma}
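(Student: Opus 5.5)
The identity will follow from two facts: \(\psi\) turns concatenation into addition, and a non-deleting rule uses every variable exactly once. Put \(t = \concat(\vv{t}) = t_1 \dotsm t_r\), a word over \(\Sigma \cup X\), where \(X = \{x_{i, j} \mid 1 \leq i \leq n, 1 \leq j \leq r_i\}\).

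\emph{Step 1.} Since \(\sigma\) is a monoid homomorphism and acts componentwise on tuples, \(\concat(\sigma(\vv{t})) = \sigma(t_1) \dotsm \sigma(t_r) = \sigma(t)\). In the same way, \(\concat(\sigma(\vv*{x}{i})) = \sigma(x_{i, 1} \dotsm x_{i, r_i})\).

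\emph{Step 2.} Write \(t = s_1 \dotsm s_N\) as a sequence of symbols \(s_k \in \Sigma \cup X\). Applying the homomorphism \(\psi \circ \sigma\colon (\Sigma \cup X)^* \to \Z^d\) gives
\[
  \psi(\sigma(t)) = \sum_{k = 1}^{N} \psi(\sigma(s_k)).
\]
Split this sum according to whether \(s_k \in \Sigma\) or \(s_k \in X\).

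For \(s_k \in \Sigma\) we have \(\sigma(s_k) = s_k\). The terminal positions, read in order, spell out \(\proj_\Sigma(t) = \tc(\pi)\). Because \(\psi\) is a homomorphism, these terms therefore sum to \(\psi(\tc(\pi))\).

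The remaining terms are indexed by the occurrences of variables in \(t\). By the non-deleting hypothesis, each \(x_{i, j}\) occurs exactly once in \(t\), so these terms sum to
\[
  \sum_{i = 1}^n \sum_{j = 1}^{r_i} \psi(\sigma(x_{i, j})).
\]
Since \(\Z^d\) is commutative, the order in which the variables occur in \(t\) does not matter; this is why no non-permuting or well-nestedness assumption is needed.

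\emph{Step 3.} For each \(i\), the inner sum \(\sum_{j} \psi(\sigma(x_{i, j}))\) equals \(\psi(\sigma(x_{i, 1} \dotsm x_{i, r_i}))\), which by Step 1 is \(\psi(\concat(\sigma(\vv*{x}{i})))\). Substituting this into Step 2 yields the stated formula.

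There is no real obstacle. The one point to handle carefully is the bookkeeping in Step 2: the multiset of variable occurrences in \(t\) must equal \(X\) exactly. The non-deleting condition is what rules out a missing variable, and the MCFG requirement that each variable occurs at most once in \(t_1 \dotsm t_r\) is what rules out a duplicate.
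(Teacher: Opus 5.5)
Your proof is correct and follows the same route as the paper. Non-deletion makes each variable occur exactly once in \(\concat(\vv{t})\), and applying \(\psi\) together with the commutativity of \(\Z^d\) separates \(\psi(\tc(\pi))\) from the variable contributions; you simply write out the bookkeeping (splitting the sum by symbol type and regrouping it by body atom) that the paper leaves implicit.
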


\begin{proof}
  Since \(\pi\) is non-deleting, every variable of \(\pi\) occurs exactly once in \(\concat(\vv{t})\).
  The equality follows by applying \(\psi\) and using the commutativity of \(\Z^d\).
\end{proof}

\subsubsection{Derivations and languages generated by MCFGs}

We regard each rule
\[
  \pi\colon
  A(t_1, \dotsc, t_r)
  \gets
  B_1(x_{1, 1}, \dotsc, x_{1, r_1}), \dotsc, B_n(x_{n, 1}, \dotsc, x_{n, r_n})
\]
of branching \(n\) as an \(n\)-ary \emph{function symbol}.
The \emph{terms over \(P\)} are defined inductively: if \(\pi\) has branching \(n\) and \(T_1, \dotsc, T_n\) are terms over \(P\),
then \(\pi(T_1, \dotsc, T_n)\) is a term over \(P\), and there are no other terms over \(P\).
When \(n = 0\), we write \(\pi\) instead of \(\pi()\).
We identify rules up to renaming of variables, that is, up to \(\alpha\)-equivalence.
Before applying a rule, we rename its variables if necessary to keep them distinct from all other variables in the inference.

For \(r \geq 1\), a nonterminal \(A \in N^{(r)}\), a term \(T\) over \(P\), and words \(w_1, \dotsc, w_r \in \Sigma^*\),
an expression of the form
\[
  \vdash_G T : A(w_1, \dotsc, w_r)
\]
is called a \emph{judgment}.
The judgments that hold form the smallest set closed under the following inference schema.
Let
\[
  \pi\colon
  A(t_1, \dotsc, t_r)
  \gets
  B_1(x_{1, 1}, \dotsc, x_{1, r_1}), \dotsc, B_n(x_{n, 1}, \dotsc, x_{n, r_n})
\]
be a rule of \(G\), let \(w_{i, j} \in \Sigma^*\) for each \(1 \leq i \leq n\) and \(1 \leq j \leq r_i\), and let
\[
  \sigma\colon (\Sigma \cup \{x_{i, j} \mid 1 \leq i \leq n, 1 \leq j \leq r_i\})^* \to \Sigma^*
\]
be the monoid homomorphism determined by \(\sigma(a) = a\) for \(a \in \Sigma\) and
\(\sigma(x_{i, j}) = w_{i, j}\) for \(1 \leq i \leq n\) and \(1 \leq j \leq r_i\).
If
\[
  \vdash_G T_i : B_i(w_{i, 1}, \dotsc, w_{i, r_i})
\]
holds for each \(1 \leq i \leq n\), then
\[
  \vdash_G \pi(T_1, \dotsc, T_n) : A(\sigma(t_1), \dotsc, \sigma(t_r))
\]
holds.
If \(n = 0\), then the schema has no premises and \(\vdash_G \pi : A(t_1, \dotsc, t_r)\) holds.
A term \(T\) over \(P\) is a \emph{derivation tree} for the ground atom \(A(w_1, \dotsc, w_r)\) if
\(\vdash_G T : A(w_1, \dotsc, w_r)\).
The ground atom \(A(w_1, \dotsc, w_r)\) is \emph{\(G\)-derivable} if it has a derivation tree.
Equivalently, the tuple \((w_1, \dotsc, w_r)\) is \emph{derivable from \(A\) in \(G\)}
(or \emph{\(A\)-derivable in \(G\)}).
We write
\[
  \vdash_G A(w_1, \dotsc, w_r)
\]
if the ground atom \(A(w_1, \dotsc, w_r)\) is \(G\)-derivable.
The \emph{language generated by \(G\)} is
\[
  L(G) = \{w \in \Sigma^* \mid {\vdash_G S(w)}\}.
\]
We sometimes display a derivation tree as an inference tree, with ground atoms as conclusions and rule symbols beside the inference lines.

\begin{example}\label{ex:neutral-binary-grammar}
  Let \(G_{\mathrm{ex}} = (N, \Sigma_3, P, S)\) be the MCFG over \(\Sigma_3 = \{\tta, \ttb, \ttc\}\)
  such that \(N = \{A, S\}\) with \(A \in N^{(2)}\) and \(S \in N^{(1)}\),
  and \(P\) consists of the following rules:
  \begin{align*}
    \pi_1&\colon A(\tta\ttb, \ttc) \gets,\\
    \pi_2&\colon A(\tta x_1, \ttb x_2 \ttc) \gets A(x_1, x_2),\\
    \pi_3&\colon A(x_1 y_1, y_2 x_2) \gets A(x_1, x_2), A(y_1, y_2),\\
    \pi_4&\colon S(x_1 x_2) \gets A(x_1, x_2).
  \end{align*}
  The grammar \(G_{\mathrm{ex}}\) has dimension \(2\) and branching factor \(2\).
  The grammar \(G_{\mathrm{ex}}\) is well-nested: in the unique binary rule \(\pi_3\), the variables \(y_1, y_2\) occur in the gap between \(x_1\) and \(x_2\).
  For example, the derivation tree \(\pi_4(\pi_2(\pi_3(\pi_2(\pi_1), \pi_3(\pi_1, \pi_1))))\) of \(G_{\mathrm{ex}}\) is depicted as follows:
  \[
    \begin{prooftree}
      \infer0[\(\pi_1\)]{A(\tta\ttb, \ttc)}
      \infer1[\(\pi_2\)]{A(\tta\tta\ttb, \ttb\ttc\ttc)}
      \infer0[\(\pi_1\)]{A(\tta\ttb, \ttc)}
      \infer0[\(\pi_1\)]{A(\tta\ttb, \ttc)}
      \infer2[\(\pi_3\)]{A(\tta\ttb\tta\ttb, \ttc\ttc)}
      \infer2[\(\pi_3\)]{A(\tta\tta\ttb\tta\ttb\tta\ttb, \ttc\ttc\ttb\ttc\ttc)}
      \infer1[\(\pi_2\)]{A(\tta\tta\tta\ttb\tta\ttb\tta\ttb, \ttb\ttc\ttc\ttb\ttc\ttc\ttc)}
      \infer1[\(\pi_4\)]{S(\tta\tta\tta\ttb\tta\ttb\tta\ttb\ttb\ttc\ttc\ttb\ttc\ttc\ttc)\rlap{.}}
    \end{prooftree}
  \]
  Hence the judgment
  \[
    \vdash_{G_{\mathrm{ex}}} \pi_4(\pi_2(\pi_3(\pi_2(\pi_1), \pi_3(\pi_1, \pi_1)))) : S(\tta\tta\tta\ttb\tta\ttb\tta\ttb\ttb\ttc\ttc\ttb\ttc\ttc\ttc)
  \]
  holds and \(\tta\tta\tta\ttb\tta\ttb\tta\ttb\ttb\ttc\ttc\ttb\ttc\ttc\ttc \in L(G_{\mathrm{ex}})\).
  Induction on derivation trees shows that if \((w_1, w_2)\) is derivable from \(A\),
  then \(w_1\) begins with \(\tta\) and ends with \(\ttb\), and \(w_1 w_2 \in \MIX\).
  Hence \(L(G_{\mathrm{ex}}) \subseteq \MIX\).
  For every \(n \geq 1\), the derivation tree \(\pi_4(\pi_2(\dotsm \pi_2(\pi_1) \dotsm))\),
  in which \(\pi_2\) occurs \(n - 1\) times, derives \(\tta^n\ttb^n\ttc^n\).
  Every application of the binary rule \(\pi_3\) creates the factor \(\ttb\tta\)
  and no subsequent rule application removes this factor.
  Thus, a derivation of a word in \(\tta^* \ttb^* \ttc^*\) cannot use \(\pi_3\)
  and hence consists of an application of \(\pi_1\), followed by some number of applications of \(\pi_2\), and finally an application of \(\pi_4\).
  Thus we have \(L(G_{\mathrm{ex}}) \cap \tta^* \ttb^* \ttc^* = \{\tta^n\ttb^n\ttc^n \mid n \geq 1\}\),
  which is not a context-free language
  \cite{hopcroft-motwani-ullman2007}*{Example~7.19}.
  Since the class of context-free languages is closed under intersection with regular languages
  \cite{hopcroft-motwani-ullman2007}*{Theorem~7.27},
  \(L(G_{\mathrm{ex}})\) is not context-free.
\end{example}

For integers \(m, f \geq 1\), let \(\mMCFL{m}(f)\) denote the class of languages generated by MCFGs of dimension at most \(m\) and branching factor at most \(f\).
Similarly, let \(\mMCFLwn{m}(f)\) denote the class of languages generated by well-nested MCFGs of dimension at most \(m\) and branching factor at most \(f\).
Put
\begin{align*}
  \MCFL(f)
  &=
  \bigcup_{m \geq 1} \mMCFL{m}(f),
  &
  \MCFLwn(f)
  &=
  \bigcup_{m \geq 1} \mMCFLwn{m}(f),
  \\
  \mMCFL{m}
  &=
  \bigcup_{f \geq 1} \mMCFL{m}(f),
  &
  \mMCFLwn{m}
  &=
  \bigcup_{f \geq 1} \mMCFLwn{m}(f),
  \\
  \MCFL
  &=
  \bigcup_{m, f \geq 1} \mMCFL{m}(f),
  &
  \MCFLwn
  &=
  \bigcup_{m, f \geq 1} \mMCFLwn{m}(f).
\end{align*}
The correspondence illustrated in Figure~\ref{fig:cfg-to-mcfg} converts every CFG into an equivalent well-nested MCFG of dimension \(1\), hence \(\CFL \subseteq \mMCFLwn{1} \subseteq \mMCFL{1}\), where \(\CFL\) denotes the class of context-free languages.
Conversely, given an MCFG of dimension \(1\), first remove all rules whose bodies contain a nonterminal from which no terminal word is derivable.
For each remaining rule \(A(t) \gets B_1(x_1), \dotsc, B_n(x_n)\), form a CFG rule \(A \to t'\) by replacing each occurrence of \(x_i\) in \(t\) with \(B_i\).
The resulting CFG generates the same language, so \(\mMCFL{1} \subseteq \CFL\).
Thus \(\CFL = \mMCFLwn{1} = \mMCFL{1}\).

Kanazawa--Michaelis--Salvati--Yoshinaka prove that \(\mMCFL{m}(1) = \mMCFLwn{m}(1)\) for every \(m \geq 1\)
\cite{kanazawa-michaelis-salvati-yoshinaka2011}*{Corollary~1}.
Consequently, \(\MCFL(1) = \MCFLwn(1)\).

\subsubsection{Derivation tree contexts and reduced grammars}

We use derivation tree contexts with one hole in two later arguments.
In Lemma~\ref{lem:nonterminal-value}, plugging a derivation from a nonterminal into a context from that nonterminal to the start symbol shows that all tuples derivable from the nonterminal have the same \(\psi\)-value.
In Lemma~\ref{lem:binary-rule-elimination}, a context isolates a subtree whose root is labeled by a binary rule, allowing that subtree to be replaced by another derivation of the same tuple from the same nonterminal.
Let \(H \in N^{(s)}\) be a nonterminal and fix pairwise distinct variables \(y_1, \dotsc, y_s\).
Let \(\Box_H\) be a new nullary function symbol called the \emph{hole}.
For each nonterminal \(A \in N^{(r)}\), each term \(U\) over \(P \cup \{\Box_H\}\),
and words \(\alpha_1, \dotsc, \alpha_r \in (\Sigma \cup \{y_1, \dotsc, y_s\})^*\),
an expression of the form
\[
  H(y_1, \dotsc, y_s) \vdash_G U : A(\alpha_1, \dotsc, \alpha_r)
\]
is called a \emph{hypothetical judgment}.
The hypothetical judgments that hold form the smallest set closed under the following clauses.
First, the judgment
\[
  H(y_1, \dotsc, y_s) \vdash_G \Box_H : H(y_1, \dotsc, y_s)
\]
holds.
Let
\[
  \pi\colon
  A(t_1, \dotsc, t_r)
  \gets
  B_1(x_{1, 1}, \dotsc, x_{1, r_1}), \dotsc, B_n(x_{n, 1}, \dotsc, x_{n, r_n})
\]
be a rule of \(G\), and fix \(k \in \{1, \dotsc, n\}\).
Suppose that the hypothetical judgment
\[
  H(y_1, \dotsc, y_s) \vdash_G U_k : B_k(\beta_{k, 1}, \dotsc, \beta_{k, r_k})
\]
holds for a term \(U_k\) over \(P \cup \{\Box_H\}\),
and words \(\beta_{k, 1}, \dotsc, \beta_{k, r_k} \in (\Sigma \cup \{y_1, \dotsc, y_s\})^*\).
For each \(1 \leq i \leq n\) with \(i \neq k\), choose a term \(T_i\) over \(P\) and words \(w_{i, 1}, \dotsc, w_{i, r_i} \in \Sigma^*\) such that
\[
  \vdash_G T_i : B_i(w_{i, 1}, \dotsc, w_{i, r_i})
\]
holds.
Let
\[
  \sigma\colon (\Sigma \cup \{x_{i, j} \mid 1 \leq i \leq n, 1 \leq j \leq r_i\})^* \to (\Sigma \cup \{y_1, \dotsc, y_s\})^*
\]
be the monoid homomorphism determined by \(\sigma(a) = a\) for \(a \in \Sigma\),
\(\sigma(x_{k, j}) = \beta_{k, j}\) for \(1 \leq j \leq r_k\), and
\(\sigma(x_{i, j}) = w_{i, j}\) for \(1 \leq i \leq n\), \(i \neq k\), and \(1 \leq j \leq r_i\).
Then
\[
  H(y_1, \dotsc, y_s)
  \vdash_G
  \pi(T_1, \dotsc, T_{k - 1}, U_k, T_{k + 1}, \dotsc, T_n)
  :
  A(\sigma(t_1), \dotsc, \sigma(t_r))
\]
holds.
A term \(U\) occurring in such a judgment is a \emph{derivation tree context} from \(H\) to \(A\).
The defining clauses ensure that \(U\) contains exactly one occurrence of \(\Box_H\).
We write
\[
  H(y_1, \dotsc, y_s) \vdash_G A(\alpha_1, \dotsc, \alpha_r)
\]
if \(H(y_1, \dotsc, y_s) \vdash_G U : A(\alpha_1, \dotsc, \alpha_r)\) holds for some derivation tree context \(U\).

If \(H(y_1, \dotsc, y_s) \vdash_G U : A(\alpha_1, \dotsc, \alpha_r)\) and
\(\vdash_G T : H(w_1, \dotsc, w_s)\), define the term \(U[T]\) over \(P\), the result of plugging \(T\) into \(U\), recursively by
\begin{align*}
  \Box_H[T]
  &= T,\\
  \pi(T_1, \dotsc, T_{k - 1}, U_k, T_{k + 1}, \dotsc, T_n)[T]
  &= \pi(T_1, \dotsc, T_{k - 1}, U_k[T], T_{k + 1}, \dotsc, T_n).
\end{align*}

\begin{example}
  Let \(G_{\mathrm{ex}}\) be the MCFG from Example~\ref{ex:neutral-binary-grammar}.
  Then the term \(\pi_3(\pi_1, \pi_2(\Box_A))\) is a derivation tree context, which is depicted as follows:
  \[
    \begin{prooftree}
      \infer0[\(\pi_1\)]{A(\tta\ttb, \ttc)}
      \hypo{A(y_1, y_2)}
      \infer1[\(\pi_2\)]{A(\tta y_1, \ttb y_2 \ttc)}
      \infer2[\(\pi_3\)]{A(\tta\ttb\tta y_1, \ttb y_2 \ttc \ttc)\rlap{.}}
    \end{prooftree}
  \]
  Hence the hypothetical judgment \(A(y_1, y_2) \vdash_{G_{\mathrm{ex}}} \pi_3(\pi_1, \pi_2(\Box_A)) : A(\tta\ttb\tta y_1, \ttb y_2 \ttc \ttc)\) holds.
\end{example}

\begin{lemma}\label{lem:plugging-derivation-context}
  Let \(G = (N, \Sigma, P, S)\) be an MCFG, \(H \in N^{(s)}\), \(A \in N^{(r)}\), and suppose that
  \(H(y_1, \dotsc, y_s) \vdash_G U : A(\alpha_1, \dotsc, \alpha_r)\).
  If \(\vdash_G T : H(w_1, \dotsc, w_s)\), then
  \(\vdash_G U[T] : A(\sigma(\alpha_1), \dotsc, \sigma(\alpha_r))\), where \(\sigma\) is the monoid homomorphism determined by
  \(\sigma(a) = a\) for \(a \in \Sigma\) and \(\sigma(y_i) = w_i\) for \(1 \leq i \leq s\).
\end{lemma}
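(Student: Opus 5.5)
The plan is to argue by structural induction on the derivation of the hypothetical judgment \(H(y_1, \dotsc, y_s) \vdash_G U : A(\alpha_1, \dotsc, \alpha_r)\), following the two clauses that generate such judgments. Throughout, \(T\) with \(\vdash_G T : H(w_1, \dotsc, w_s)\) is fixed, and \(\sigma\) is the homomorphism with \(\sigma(a) = a\) for \(a \in \Sigma\) and \(\sigma(y_i) = w_i\).

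\emph{Base case.} Here \(U = \Box_H\), \(A = H\), and \(\alpha_i = y_i\). Then \(U[T] = T\) and \(\sigma(y_i) = w_i\), so the claim is exactly the hypothesis \(\vdash_G T : H(w_1, \dotsc, w_s)\).

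\emph{Inductive step.} Suppose \(U = \pi(T_1, \dotsc, T_{k-1}, U_k, T_{k+1}, \dotsc, T_n)\). The premises are:
\begin{itemize}
  \item \(H(\vv{y}) \vdash_G U_k : B_k(\beta_{k,1}, \dotsc, \beta_{k,r_k})\);
  \item \(\vdash_G T_i : B_i(w_{i,1}, \dotsc, w_{i,r_i})\) for \(i \neq k\).
\end{itemize}
The conclusion is \(A(\tau(t_1), \dotsc, \tau(t_r))\), where \(\tau\) is the substitution with \(\tau(x_{k,j}) = \beta_{k,j}\) and \(\tau(x_{i,j}) = w_{i,j}\) for \(i \neq k\).

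By the induction hypothesis, \(\vdash_G U_k[T] : B_k(\sigma(\beta_{k,1}), \dotsc, \sigma(\beta_{k,r_k}))\). Apply the ordinary inference schema for \(\pi\) with premises \(T_1, \dotsc, U_k[T], \dotsc, T_n\) and the substitution \(\tau'\) given by \(\tau'(x_{k,j}) = \sigma(\beta_{k,j})\) and \(\tau'(x_{i,j}) = w_{i,j}\) for \(i \neq k\). This yields
\[
  \vdash_G \pi(T_1, \dotsc, U_k[T], \dotsc, T_n) : A(\tau'(t_1), \dotsc, \tau'(t_r)),
\]
and the term on the left is \(U[T]\) by the recursive definition of plugging.

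It remains to check that \(\tau' = \sigma \circ \tau\) as homomorphisms on \((\Sigma \cup \{x_{i,j}\})^*\). Both fix terminals. On \(x_{k,j}\) both give \(\sigma(\beta_{k,j})\). On \(x_{i,j}\) with \(i \neq k\), \(\sigma(w_{i,j}) = w_{i,j}\) because \(w_{i,j} \in \Sigma^*\). Hence \(\tau'(t_l) = \sigma(\tau(t_l)) = \sigma(\alpha_l)\), which completes the induction.

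The only point needing care is this composition identity, together with variable hygiene: the \(y\)'s must be distinct from the renamed rule variables, so that \(\sigma\) acts only on the \(y\)'s and on terminals. This is guaranteed by the convention of renaming rule variables apart. I expect no step to be a genuine obstacle.
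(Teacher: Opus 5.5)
Your proof is correct and follows the paper's approach: the paper's proof is the single line ``by induction on the construction of the hypothetical judgment,'' and you have carried out that induction explicitly. Your base case for \(\Box_H\) and your inductive step, including the check that \(\tau' = \sigma \circ \tau\) on generators, are exactly the details that line leaves implicit.
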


\begin{proof}
  By induction on the construction of the hypothetical judgment.
\end{proof}

A \emph{subtree} of a term is a subterm at a specified occurrence.
Conversely, suppose that \(\vdash_G T : A(w_1, \dotsc, w_r)\), and choose a subtree \(T'\) of \(T\) such that
\(\vdash_G T' : H(v_1, \dotsc, v_s)\).
Replacing the chosen occurrence of \(T'\) by \(\Box_H\) yields a derivation tree context \(U\) and words
\(\alpha_1, \dotsc, \alpha_r \in (\Sigma \cup \{y_1, \dotsc, y_s\})^*\) such that
\(H(y_1, \dotsc, y_s) \vdash_G U : A(\alpha_1, \dotsc, \alpha_r)\) and \(T = U[T']\).
Moreover, \(w_i = \sigma(\alpha_i)\) for \(1 \leq i \leq r\), where \(\sigma\) is the monoid homomorphism determined by
\(\sigma(a) = a\) for \(a \in \Sigma\) and \(\sigma(y_j) = v_j\) for \(1 \leq j \leq s\).

\begin{lemma}\label{lem:context-variables-once}
  Let \(G\) be an MCFG whose rules are all non-deleting.
  If \(H(y_1, \dotsc, y_s) \vdash_G U : A(\alpha_1, \dotsc, \alpha_r)\), then each of \(y_1, \dotsc, y_s\) occurs exactly once in
  \(\alpha_1 \dotsm \alpha_r\).
\end{lemma}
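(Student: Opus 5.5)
We argue by induction on the construction of the hypothetical judgment \(H(y_1, \dotsc, y_s) \vdash_G U : A(\alpha_1, \dotsc, \alpha_r)\), following the two defining clauses.

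\emph{Base case.} The judgment is \(H(y_1, \dotsc, y_s) \vdash_G \Box_H : H(y_1, \dotsc, y_s)\). Here \(\alpha_1 \dotsm \alpha_r = y_1 \dotsm y_s\). The variables \(y_1, \dotsc, y_s\) are pairwise distinct by assumption, so each occurs exactly once.

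\emph{Inductive step.} Suppose the judgment comes from a rule
\[
  \pi\colon A(t_1, \dotsc, t_r) \gets B_1(x_{1,1}, \dotsc, x_{1,r_1}), \dotsc, B_n(x_{n,1}, \dotsc, x_{n,r_n}),
\]
with a distinguished index \(k\). There is a hypothetical judgment \(H(\vv{y}) \vdash_G U_k : B_k(\beta_{k,1}, \dotsc, \beta_{k,r_k})\), and there are ground derivations of \(B_i(w_{i,1}, \dotsc, w_{i,r_i})\) for \(i \neq k\). The head is \(\alpha_j = \sigma(t_j)\). By the induction hypothesis, each \(y_l\) occurs exactly once in \(\beta_{k,1} \dotsm \beta_{k,r_k}\).

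Since \(\sigma\) is a monoid homomorphism, \(\sigma(t_1 \dotsm t_r) = \alpha_1 \dotsm \alpha_r\). We count occurrences of \(y_l\) in this word by summing over the letters of \(t_1 \dotsm t_r\):
\begin{itemize}
  \item A terminal \(a\) maps to \(a\) and contributes no \(y_l\).
  \item A variable \(x_{i,j}\) with \(i \neq k\) maps to the terminal word \(w_{i,j} \in \Sigma^*\) and contributes no \(y_l\).
  \item A variable \(x_{k,j}\) maps to \(\beta_{k,j}\).
\end{itemize}
Because \(\pi\) is non-deleting, each \(x_{k,j}\) occurs exactly once in \(t_1 \dotsm t_r\). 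Hence the number of occurrences of \(y_l\) in \(\alpha_1 \dotsm \alpha_r\) equals
\[
  \sum_{j=1}^{r_k} \len{\beta_{k,j}}_{y_l} = \len{\beta_{k,1} \dotsm \beta_{k,r_k}}_{y_l} = 1.
\]

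There is no real obstacle. The only points needing care are that non-deletion is exactly what gives "exactly once" rather than "at most once", and that the substituted values for \(i \neq k\) are terminal words, so they contain none of the \(y_l\).
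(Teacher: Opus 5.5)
Your proof is correct and is the same as the paper's: induction on the construction of the hypothetical judgment, where the base case uses that \(y_1, \dotsc, y_s\) are pairwise distinct and the inductive step uses that the applied rule is non-deleting. You have also written out the occurrence count that the paper's one-line proof leaves implicit, namely that the terminal images of \(x_{i,j}\) for \(i \neq k\) contribute no \(y_l\).
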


\begin{proof}
  The claim follows by induction on the construction of the hypothetical judgment.
  The inductive step uses the fact that the surrounding rule is non-deleting.
\end{proof}

\begin{definition}
  Let \(G = (N, \Sigma, P, S)\) be an MCFG\@.
  A nonterminal \(A \in N^{(r)}\) is \emph{productive} if \(\vdash_G A(w_1, \dotsc, w_r)\) for some \(w_1, \dotsc, w_r \in \Sigma^*\).
  The nonterminal \(A\) is \emph{accessible from the start symbol} if
  \(A(y_1, \dotsc, y_r) \vdash_G S(t)\) for pairwise distinct variables \(y_1, \dotsc, y_r\) and some
  \(t \in (\Sigma \cup \{y_1, \dotsc, y_r\})^*\).
  The grammar \(G\) is \emph{reduced} if every nonterminal is productive and accessible from the start symbol.
\end{definition}

\begin{lemma}\label{lem:reduced}
  Let \(G = (N, \Sigma, P, S)\) be an MCFG such that \(L(G) \neq \varnothing\).
  Then there exist \(N' \subseteq N\) and \(P' \subseteq P\) such that \(G' = (N', \Sigma, P', S)\) is reduced and \(L(G') = L(G)\).
  Moreover, if \(G\) is well-nested, then \(G'\) is well-nested.
\end{lemma}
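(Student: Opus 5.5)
The plan is the standard two-stage reduction: first remove unproductive nonterminals, then remove inaccessible ones, in that order, checking at each stage that the generated language is unchanged.

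\textbf{Stage 1: productive nonterminals.} Let \(N_{\mathrm{p}} \subseteq N\) be the set of productive nonterminals. Since \(L(G) \neq \varnothing\), the start symbol \(S\) is productive, so \(S \in N_{\mathrm{p}}\). Let \(P_{\mathrm{p}}\) be the set of rules of \(P\) whose head and all body nonterminals lie in \(N_{\mathrm{p}}\), and put \(G_1 = (N_{\mathrm{p}}, \Sigma, P_{\mathrm{p}}, S)\).

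I will prove by induction on derivation trees that every derivation tree of \(G\) uses only rules in \(P_{\mathrm{p}}\). At each node, every nonterminal in the body of the rule applied there is the conclusion of a subtree, hence productive. The head is productive as well. Consequently \(G_1\)-derivability coincides with \(G\)-derivability on ground atoms. In particular, every nonterminal of \(N_{\mathrm{p}}\) is still productive in \(G_1\), and \(L(G_1) = L(G)\).

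\textbf{Stage 2: accessible nonterminals.} Let \(N'\) be the set of nonterminals of \(G_1\) accessible from \(S\) in \(G_1\). It contains \(S\) via the trivial context \(\Box_S\). Let \(P'\) be the set of rules of \(P_{\mathrm{p}}\) whose head lies in \(N'\), and put \(G' = (N', \Sigma, P', S)\).

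First I check that every body nonterminal \(B_k\) of a rule \(\pi \in P'\) with head \(A \in N'\) also lies in \(N'\). Take a context from \(A\) to \(S\). For the positions \(i \neq k\), choose derivations in \(G_1\), which exist by productivity. Then \(\pi(T_1, \dotsc, U_k, \dotsc, T_n)\) with \(U_k = \Box_{B_k}\) is a context from \(B_k\) to \(A\). Composing it with the context from \(A\) to \(S\) yields a context from \(B_k\) to \(S\), so \(B_k \in N'\). The composition step needs a routine substitution lemma for contexts, proved by induction like Lemma~\ref{lem:plugging-derivation-context}. Thus \(G'\) is a well-formed MCFG.

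\textbf{Main obstacle: showing \(G'\) is reduced.} The subtle point is that productivity and accessibility must hold in \(G'\) itself, not merely in \(G_1\).

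For accessibility, I argue by induction on contexts, from the root downward. Every node of a context from \(A \in N'\) to \(S\) in \(G_1\) is labeled by a nonterminal accessible from \(S\), because the part of the context above that node composes with the rest to give a context to \(S\). Hence the whole context uses only rules from \(P'\), and it remains a context in \(G'\).

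For productivity, every derivation in \(G_1\) of \(A \in N'\) has all its node labels in \(N'\), by the closure property just established. So that derivation is a derivation in \(G'\).

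\textbf{Conclusion.} Any derivation of \(S(w)\) in \(G_1\) uses only nonterminals in \(N'\), hence only rules in \(P'\). Therefore \(L(G') = L(G_1) = L(G)\). Finally, \(P' \subseteq P\), so if every rule of \(G\) is well-nested then so is every rule of \(G'\), and \(G'\) is well-nested.
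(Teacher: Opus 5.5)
Your proof is correct, but it is organized differently from the paper's. The paper works in one step. It takes \(N'\) to be the set of nonterminals that are productive and accessible in \(G\) itself, and lets \(P'\) consist of the rules whose head and body nonterminals all lie in \(N'\). Everything then rests on a single observation: \(A \in N'\) if and only if \(A\) labels a node of some derivation tree of \(S(w)\) with \(w \in L(G)\). For one direction, plug a derivation of \(A\) into a context from \(A\) to \(S\); for the other, cut out the subtree at that node. Consequently every node of every complete derivation carries a label in \(N'\), so all such derivations are already \(G'\)-derivations. This gives \(L(G') = L(G)\) at once. Productivity and accessibility in \(G'\) follow by taking the relevant subtree, and its complementary context, inside such a derivation.

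Your version follows the classical CFG order ``productive first, then reachable'' and ends up with exactly the same grammar \(G'\). It needs extra machinery, however: a composition lemma for contexts and the closure of \(N'\) under body nonterminals. The staging is not really needed here. In the paper's definition, a derivation tree context must have genuine derivation trees as its off-path premises. So accessibility already forces productivity of the side branches, and for productive \(A\), accessibility in \(G\) and in \(G_1\) coincide. What your route buys is explicitness: it spells out why \(G'\) is a well-formed MCFG and why reducedness holds inside \(G'\), which the paper's proof leaves to the reader.

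One minor point concerns the accessibility step. Your justification, ``the part of the context above that node composes with the rest to give a context to \(S\)'', literally covers only the nodes on the path to the hole; cutting at a node inside a side derivation leaves two holes. The clean argument there is the root-downward induction you announce, using your closure property and starting from \(S \in N'\). Alternatively, first fill the hole with a derivation of \(A\) and then cut.
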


\begin{proof}
  Let \(N'\) be the set of nonterminals of \(G\) that are both productive and accessible from the start symbol.
  Let \(P'\) be the set of rules in \(P\) whose heads and bodies contain only nonterminals in \(N'\).
  Since \(L(G) \neq \varnothing\), we have \(S \in N'\), so \(G' = (N', \Sigma, P', S)\) is an MCFG\@.
  A nonterminal \(A \in N^{(r)}\) belongs to \(N'\) precisely when, for some \(w \in L(G)\), a derivation tree for \(S(w)\) has a subtree that is a derivation tree for \(A(v_1, \dotsc, v_r)\) for some \(v_1, \dotsc, v_r \in \Sigma^*\).
  Thus \(G'\) is reduced and \(L(G') = L(G)\).
  Since \(P' \subseteq P\), \(G'\) is well-nested whenever \(G\) is well-nested.
\end{proof}

\subsubsection{Reduction to binary branching for well-nested MCFGs}\label{sec:binary-branching-reduction-prelim}

The proof of Theorem~\ref{thm:theorem-a} uses the following reduction of the branching factor to at most \(2\).

\begin{lemma}[Reduction to binary branching for well-nested MCFGs]\label{lem:binary-branching-reduction}
  Let \(G = (N, \Sigma, P, S)\) be a well-nested MCFG\@.
  Then there is an equivalent well-nested MCFG \(G'\) with branching factor at most \(2\).
  In particular, \(\MCFLwn = \MCFLwn(2)\).
  If \(L(G)\) is nonempty, \(G'\) can be chosen to be reduced.
\end{lemma}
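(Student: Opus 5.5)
The approach is to binarize each rule of branching \(n \geq 3\) one rule at a time. Well-nestedness guarantees that the body nonterminals can be split into two groups so that one group occupies a contiguous, gap-respecting region of the head. Before binarizing, I normalize \(G\) so that every rule is non-deleting and non-permuting. Well-nestedness as defined here already presupposes both properties, since a well-nested rule is by definition non-deleting and non-permuting. So this normalization is vacuous, and only branching needs to be reduced.

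Fix a rule
\[
  \pi\colon A(t_1, \dotsc, t_r) \gets B_1(\vv*{x}{1}), \dotsc, B_n(\vv*{x}{n})
\]
with \(n \geq 3\), and consider the word \(\concat(\vv{t})\). For each \(i\), let the \emph{span} of \(B_i\) be the interval of positions from the occurrence of \(x_{i,1}\) to that of \(x_{i,r_i}\). Well-nestedness implies that any two spans are either disjoint or nested, with one body atom lying entirely inside a single gap of the other. Hence the spans form a forest under inclusion. I choose an index \(i\) whose span is minimal, i.e. contains no other variables' spans.

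Next I extract a new nonterminal \(C\) for the pair consisting of \(B_i\) and one other body atom. The simplest choice is \(B_i\) itself together with the surrounding terminals: cover all variables of \(B_i\), cut \(\concat(\vv{t})\) into maximal blocks that contain variables of \(B_i\) and terminals only, and regard these blocks as the components of \(C\). Since \(C\) would then be unary, this does not by itself lower the branching. A better choice is to merge \(B_i\) with another atom \(B_k\), where \(k\) is chosen so that the combined pattern is still well-nested and the contracted outer rule is too. Either take \(B_k\) to be the innermost span containing \(B_i\), or, if \(B_i\) is a root of the forest, take \(B_k\) to be an adjacent root.

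The new rule
\[
  C(\vec s) \gets B_i(\vv*{x}{i}), B_k(\vv*{x}{k})
\]
has as components the maximal segments of the head built from variables of \(B_i\), variables of \(B_k\) and terminals. The outer rule replaces each such segment by a fresh variable of \(C\), which lowers its branching by one. Both rules are non-deleting and non-permuting, because the segment order follows the head order. The main obstacle is to verify well-nestedness of the two new rules. In the outer rule, an interleaving of \(C\) with some \(B_j\) would lift to an interleaving of \(B_j\) with \(B_i\) or with \(B_k\) in \(\pi\). That is impossible because \(B_j\) is either nested in a gap or disjoint from both, by the choice of \(k\) as the innermost container or an adjacent root. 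This needs a careful case analysis, in particular when \(B_j\) sits in a gap of \(B_k\) between two segments. In the inner rule, well-nestedness is inherited directly from \(\pi\).

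The arity of \(C\) is bounded by the number of segments, so it stays finite, and the new grammar has finitely many rules. I iterate on the total excess branching \(\sum_\pi \max(0, n_\pi - 2)\), which strictly decreases at each step. Equivalence follows because derivations of the new grammar correspond bijectively to derivations of the old one: each application of \(\pi\) is replaced by the inner rule followed by the outer rule, which yields the same head tuple by composing substitutions. The equality \(\MCFLwn = \MCFLwn(2)\) is then immediate. Finally, if \(L(G) \neq \varnothing\), applying Lemma~\ref{lem:reduced} to the binarized grammar yields a reduced well-nested grammar. Since that lemma only deletes rules, the branching factor stays at most \(2\).
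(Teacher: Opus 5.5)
Your route is correct, but it differs from the paper's. The paper never uses the laminar forest of spans. It reorders the body so that \(x_{1,1}\dotsm x_{n,1} \preccurlyeq \concat(\vv{t})\). It then builds a left-to-right chain of rules \(C_k^\pi(\vv*{z}{k}) \gets C_{k-1}^\pi(\vv*{z}{k-1}), B_k(x_{k,1},\dotsc,x_{k,r_k})\), whose components are single variables listed in head order, so \(C_k^\pi\) has arity \(r_1+\dotsb+r_k\). A final unary rule \(A(\vv{t}) \gets C_n^\pi(\vv*{z}{n})\) puts the terminals in place.

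The only combinatorial fact the paper needs is that the variables of \(B_k\) form one consecutive block inserted into a single gap of \(\vv*{z}{k-1}\). If some \(x_{i,\nu}\) with \(i<k\) split that block, then \(x_{i,1}\) would precede it and give a forbidden interleaving. So every new binary rule visibly has insertion shape, and the final rule is trivially well-nested. Your bottom-up contraction of a leaf with its parent or with an adjacent root gives smaller intermediate arities (at most \(r_i+r_k\)) and needs no extra unary rule. The price is exactly the case analysis you postponed.

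Two points need care when you write that analysis out.

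\emph{Components of \(C\).} They must be maximal factors of a single \(t_\ell\) that contain at least one variable of \(B_i\) or \(B_k\). A purely terminal factor must stay in the outer rule. For example, for \(\pi\colon A(x_1\tta x_2 y z)\gets B_j(x_1,x_2), B_i(y), B_k(z)\), taking \(\tta\) as a component of \(C\) gives the outer head \(A(x_1 c_1 x_2 c_2)\), which is not well-nested. Your literal wording (``maximal segments built from variables of \(B_i\), \(B_k\) and terminals'') admits such segments.

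\emph{The mixed case in the outer rule.} Suppose the two segments of \(C\) in a putative interleaving with \(B_j\) carry only \(B_i\)-variables and only \(B_k\)-variables, respectively. Then the interleaving does not lift to \(\pi\) via those variables, contrary to your one-line justification. Instead, use the variables \(x_{k,g}\) and \(x_{k,g+1}\) of \(B_k\) that bound the gap containing \(B_i\). Either one of them yields an interleaving of \(B_j\) with \(B_k\) in \(\pi\), or both variables of \(B_j\) lie in that gap around a variable of \(B_i\). In the latter case \(B_j\) is a container of \(B_i\) strictly inside \(B_k\), which contradicts the choice of \(B_k\) as innermost.

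In the adjacent-root case, the analogous argument uses two facts: the span of a minimal root contains only its own variables, and no variables lie strictly between adjacent root spans. With these details supplied, your proof goes through.
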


We give a proof sketch here.
The full proof appears in Appendix~\ref{app:binary-branching}.

\begin{proof}[Proof sketch]
  The following example illustrates how the construction transforms a rule of \(G\) into rules of \(G'\).
  Consider a rule of branching \(3\)
  \[
    \pi\colon
    A(x_1 \tta y_1, y_2 \tta x_2 z)
    \gets
    B_1(x_1, x_2),
    B_2(y_1, y_2),
    B_3(z)
  \]
  where \(\tta \in \Sigma\).
  Introduce fresh nonterminals \(C_2^\pi\) and \(C_3^\pi\), and replace \(\pi\) by the following rules:
  \begin{align*}
    C_2^\pi(x_1, y_1, y_2, x_2)
    &\gets
    B_1(x_1, x_2),
    B_2(y_1, y_2),\\
    C_3^\pi(x_1, y_1, y_2, x_2, z)
    &\gets
    C_2^\pi(x_1, y_1, y_2, x_2),
    B_3(z),\\
    A(x_1 \tta y_1, y_2 \tta x_2 z)
    &\gets
    C_3^\pi(x_1, y_1, y_2, x_2, z).
  \end{align*}
  The first binary rule inserts the components of \(B_2\) into the gap between \(x_1\) and \(x_2\).
  The second binary rule appends the component of \(B_3\) to the tuple produced by \(C_2^\pi\).
  The final unary rule places the variables in the original components and inserts the two occurrences of \(\tta\).
  To replace an arbitrary well-nested rule, add its body atoms one at a time in the order in which their first variables occur in the head terms.
\end{proof}

\begin{remark}
  For well-nested MCFGs, Kanazawa and Salvati show that the branching factor can be reduced to \(2\) without increasing the dimension
  \cite{kanazawa-salvati2010copying}*{Lemma~5}.
  In particular, \(\mMCFLwn{m} = \mMCFLwn{m}(2)\) for every \(m \geq 1\).
  For well-nested LCFRSs, G\'omez-Rodr\'iguez, Kuhlmann, and Satta give a binary normal form that preserves fan-out
  and restricts binary composition operations to concatenation and wrapping
  \cite{gomez-rodriguez2010efficient}*{Sections~3.3 and 4}.
  However, Lemma~\ref{lem:binary-branching-reduction} may increase nonterminal arities and hence the dimension.
\end{remark}

\subsection{Changes of generators, rational equivalence, and dimension bounds}
\label{sec:rational-equivalence}

Let \(\pi\colon \Sigma^* \to G\) and \(\rho\colon \Delta^* \to G\) be two choices of generators
for a finitely generated group \(G\).
For each \(a \in \Sigma\), choose \(u_a \in \Delta^*\) such that \(\rho(u_a) = \pi(a)\),
and for each \(b \in \Delta\), choose \(v_b \in \Sigma^*\) such that \(\pi(v_b) = \rho(b)\).
These choices induce monoid homomorphisms
\(h_1\colon \Sigma^* \to \Delta^*\) and \(h_2\colon \Delta^* \to \Sigma^*\),
defined by \(h_1(a) = u_a\) for \(a \in \Sigma\) and \(h_2(b) = v_b\) for \(b \in \Delta\).
They satisfy \(\rho \circ h_1 = \pi\) and \(\pi \circ h_2 = \rho\).
Therefore,
\[
  \pi^{-1}(1_G) = h_1^{-1}(\rho^{-1}(1_G)),
  \qquad
  \rho^{-1}(1_G) = h_2^{-1}(\pi^{-1}(1_G)).
\]
Although the word problem depends on the choice of generators,
whether it belongs to a class \(\mathcal{C}\) of languages is independent of that choice
whenever \(\mathcal{C}\) is closed under inverse homomorphisms
\cite{holt-rees-rover2017}*{3.4.3 Proposition and the subsequent paragraph}.

Taking \(\pi = \psin{n + 1}\) and \(\rho = \varphi_n\) in this construction yields monoid homomorphisms
\(h_1\colon \Sigman{n + 1}^* \to \Delta_n^*\) and
\(h_2\colon \Delta_n^* \to \Sigman{n + 1}^*\) such that
\[
  \MIX_{n + 1} = h_1^{-1}(O_n),
  \qquad
  O_n = h_2^{-1}(\MIX_{n + 1}).
\]
For \(n = 2\), one may take
\(h_1\colon \Sigman{3}^* \to \Delta_2^*\) to be the monoid homomorphism defined by
\(h_1(\tta) = a_1\), \(h_1(\ttb) = a_2\), and \(h_1(\ttc) = \bar{a}_1\bar{a}_2\).
Indeed, \(\varphi_2 \circ h_1 = \psin{3}\), and hence
\(\psin{3}^{-1}(\bm{0}) = h_1^{-1}(\varphi_2^{-1}(\bm{0}))\).
The homomorphism \(h_1\) maps the word in
Figure~\ref{fig:mix-o2-closed-walks}(\subref{fig:mix-closed-walk})
to the word in Figure~\ref{fig:mix-o2-closed-walks}(\subref{fig:o2-closed-walk}).

A rational transduction is a relation between words realized by a transducer
\cite{berstel1979transductions}*{Chapter~III}.
Two languages are \emph{rationally equivalent} if each is the image of the other under a rational transduction.
Since inverse homomorphisms are rational transductions,
the two equalities above show that the word problems of a finitely generated group
with respect to any two choices of generators are rationally equivalent
\cite{berstel1979transductions}*{Chapter~III, especially Example~5.3}.
In particular, \(\MIX_{n + 1}\) and \(O_n\) are rationally equivalent.
For every \(m \geq 1\), the class \(\mMCFL{m}\) is closed under rational transductions
\cite{seki1991}*{Theorem~3.9}.
For every \(m \geq 1\), the class \(\mMCFLwn{m}\) is also closed under rational transductions
\cite{seki-kato2008}*{Theorem~15}.
Thus, for all \(m, n \geq 1\),
\begin{align*}
  \MIX_{n + 1} \in \mMCFL{m}
  &\iff O_n \in \mMCFL{m},\\
  \MIX_{n + 1} \in \mMCFLwn{m}
  &\iff O_n \in \mMCFLwn{m}.
\end{align*}
Taking the union over \(m\) yields
\begin{equation}\label{eq:mix-origin-crossing-wn-equivalence}
  \MIX_{n + 1} \in \MCFLwn
  \iff
  O_n \in \MCFLwn
\end{equation}
for every \(n \geq 1\).
For \(n = 2\), Theorem~\ref{thm:kanazawa-salvati-dimension-two}
and the rational equivalence above imply \(O_2 \notin \TAL = \mMCFLwn{2}\).
The Kanazawa--Salvati conjecture is equivalent to \(O_2 \notin \MCFLwn\).
The class \(\IL\) of indexed languages is also closed under rational transductions
\cite{aho1968indexed}*{Lemma~3.2}.
Hence Marsh's conjecture is equivalent to \(O_2 \notin \IL\).

For every \(n \geq 2\), the following theorem determines the smallest dimension
of an MCFG generating \(O_n\).

\begin{theorem}\label{thm:origin-crossing-dimension}
  \leavevmode
  \begin{enumerate}[leftmargin=*]
    \item \(O_n \notin \mMCFL{(n - 1)}\) for every \(n \geq 2\)
      \cite{nederhof2017}*{Theorem~1}.
    \item \(O_n \in \mMCFL{n}\) for every \(n \geq 1\)
      \cite{gebhardt-meunier-salvati2022}. \qed
  \end{enumerate}
\end{theorem}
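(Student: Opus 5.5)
Both parts are taken from the literature: part~(i) from \cite{nederhof2017}*{Theorem~1} and part~(ii) from \cite{gebhardt-meunier-salvati2022}. If I had to reprove them, I would treat the upper and lower bounds separately, as follows.

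\textbf{Part~(ii): \(O_n \in \mMCFL{n}\).} The plan is to generalize Salvati's dimension-\(2\) grammar for \(\MIX\) (Theorem~\ref{thm:salvati-dimension-two}), in Nederhof's presentation via \(O_2\). I read a word \(w \in O_n\) as a closed walk in \(\Z^n\).

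1. Introduce a nonterminal of arity \(n\). Its intended meaning is an \(n\)-tuple \((w_1,\dotsc,w_n)\) whose concatenation is a closed walk in \(\Z^n\), with the cut points recorded by the tuple.
2. Give rules of three kinds: wrapping a pair \(a_i, \bar{a}_i\) around components; binary rules that interleave two such tuples; and a regrouping (merging) of consecutive components.
3. Prove completeness by a decomposition lemma: every closed walk splits, at points where it returns to a coordinate hyperplane, into at most \(n\) pieces that can be assembled by those rules.

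The geometric decomposition lemma in step~3 is where the real work of \cite{gebhardt-meunier-salvati2022} lies, and I would import it rather than rederive it. The binary rules needed are in general not well-nested, which is consistent with Theorem~\ref{thm:kanazawa-salvati-dimension-two}.

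\textbf{Part~(i): \(O_n \notin \mMCFL{(n-1)}\).} I would use the closure of \(\mMCFL{(n-1)}\) under rational transductions (\cite{seki1991}*{Theorem~3.9}), in particular under intersection with regular languages and homomorphisms.

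1. Intersect \(O_n\) with the regular language \(a_1^* \dotsm a_n^* \bar{a}_1^* \dotsm \bar{a}_n^*\). This yields the cross-serial language
\[
  \{a_1^{k_1}\dotsm a_n^{k_n}\bar{a}_1^{k_1}\dotsm\bar{a}_n^{k_n} \mid k_i \geq 0\}.
\]
2. Argue that this language needs dimension \(n\). Fix a grammar of dimension \(n-1\) and take a word with all \(k_i\) equal to some large \(N\).
3. Apply a pumping/Ogden-type analysis to its derivation tree. Each block pair \((a_i^{k_i}, \bar{a}_i^{k_i})\) must be synchronized by a single subderivation whose tuple contains pieces in both blocks.
4. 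Because the \(n\) dependencies pairwise cross, obtain a node whose tuple must keep at least \(n\) separated pieces alive simultaneously. This requires a nonterminal of arity at least \(n\), a contradiction.

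\textbf{Main obstacle.} Making step~4 of part~(i) rigorous is the hard part. The naive pumping bound does not immediately force \(n\) components, since a \(2(n-1)\)-piece pump can still respect block boundaries. My first attempt would be to choose the exponents \(k_i\) pairwise very different and argue by minimality of the derivation, following Nederhof's argument. The fallback, if that failed to close the gap, would be to cite \cite{nederhof2017} directly.
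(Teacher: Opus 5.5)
The paper gives no proof of this theorem: both parts are cited, which is what the \qed\ in the statement records. Your part~(ii) amounts to the same citation, since the decomposition lemma you would import is the entire content of \cite{gebhardt-meunier-salvati2022}. Part~(i), however, has a genuine gap, because step~1 discards exactly the information that forces high dimension. Put \(R = a_1^* \dotsm a_n^* \bar{a}_1^* \dotsm \bar{a}_n^*\). For every \(n\), the language \(O_n \cap R = \{a_1^{k_1}\dotsm a_n^{k_n}\bar{a}_1^{k_1}\dotsm\bar{a}_n^{k_n} \mid k_i \geq 0\}\) is generated by the non-branching well-nested MCFG with nonterminals \(A_0, \dotsc, A_n\) of arity \(2\) and rules
\[
  A_0(\varepsilon, \varepsilon) \gets,
  \qquad
  A_i(x, y) \gets A_{i - 1}(x, y),
  \qquad
  A_i(x a_i, y \bar{a}_i) \gets A_i(x, y)
  \quad (1 \leq i \leq n),
  \qquad
  S(x y) \gets A_n(x, y).
\]
By induction, \(A_i\) derives exactly the pairs \((a_1^{k_1} \dotsm a_i^{k_i}, \bar{a}_1^{k_1} \dotsm \bar{a}_i^{k_i})\). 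Hence \(O_n \cap R \in \mMCFLwn{2}(1) \subseteq \mMCFL{(n - 1)}\) for every \(n \geq 3\). The claim in your step~2 is therefore false, and the closure argument yields a contradiction only for \(n = 2\), where \(O_2 \cap R\) is Shieber's non-context-free language.

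This also shows why steps~3--4 cannot be repaired. A single \(2\)-tuple synchronizes all \(n\) pairwise crossing dependencies at once: each pair \(a_i, \bar{a}_i\) is appended at the right ends of the two components in lockstep. Pairwise crossing therefore does not force \(n\) simultaneously open pieces when the two halves are aligned copies of each other. The difficulty in your ``main obstacle'' paragraph is thus not technical, and no choice of the exponents \(k_i\) or minimality argument will help. More strongly, any argument that uses only \(O_n \cap R \subseteq L(G) \subseteq O_n\) is refuted by the grammar above. A correct proof must therefore exploit words of \(O_n\) outside \(R\), in which the coordinates are interleaved. Your fallback, citing \cite{nederhof2017}, is exactly what the paper does, and it is what you should do.
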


Together with the rational equivalence of \(\MIX_{n + 1}\) and \(O_n\),
Theorem~\ref{thm:origin-crossing-dimension} shows that
\(O_n, \MIX_{n + 1} \in \mMCFL{n} \setminus \mMCFL{(n - 1)}\) for every \(n \geq 2\).

\section{Proof of Theorem A}\label{sec:proof-theorem-a}

Throughout this section, fix a finite alphabet \(\Sigma\), an integer \(d \geq 1\),
and a surjective monoid homomorphism \(\psi\colon \Sigma^* \to \Z^d\).

When \(L(G) \neq \varnothing\), the proof of Theorem~\ref{thm:theorem-a} has two steps.
First, Lemma~\ref{lem:binary-branching-reduction} and the Neutralization Theorem~\ref{thm:neutralization} yield an equivalent reduced well-nested MCFG
with branching factor at most \(2\) whose nonterminals are neutral (Corollary~\ref{cor:neutral-binary-normal-form}).
Second, Theorem~\ref{thm:simulation} shows that every such grammar is simulated by \(\Gpsi{r}\) for some \(r \geq 1\).

\begin{definition}
  Let \(G\) be a well-nested MCFG over \(\Sigma\).
  A nonterminal \(A\) of \(G\) with arity \(r\) is \emph{neutral} if, for every
  \(\vv{u} \in (\Sigma^*)^r\),
  \[
    \vdash_G A(\vv{u})
    \implies
    \psi(\concat(\vv{u})) = \bm{0}.
  \]
\end{definition}

\subsection{Neutralization Theorem}\label{sec:nonterminal-values-neutralization}

The following lemma is a direct generalization of
\cite{kanazawa-salvati2012}*{Lemma~2}.

\begin{lemma}\label{lem:nonterminal-value}
  Let \(G = (N, \Sigma, P, S)\) be a reduced well-nested MCFG such that \(L(G) \subseteq \Wpsi\).
  Then for every nonterminal \(A \in N\) of arity \(r\) there exists a vector \(\bm{v}_A \in \Z^d\) such that, for every
  \(\vv{w} \in (\Sigma^*)^r\),
  \[
    \vdash_G A(\vv{w})
    \implies
    \psi(\concat(\vv{w})) = \bm{v}_A.
  \]
\end{lemma}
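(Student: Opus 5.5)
The idea is to put every derivation of \(A\) into one fixed context that leads to the start symbol, and then compare \(\psi\)-values. Since \(G\) is reduced, \(A\) is accessible from the start symbol. So there is a derivation tree context \(U\) with
\[
  A(y_1, \dotsc, y_r) \vdash_G U : S(\alpha)
\]
for some \(\alpha \in (\Sigma \cup \{y_1, \dotsc, y_r\})^*\). Well-nested rules are non-deleting by definition, so Lemma~\ref{lem:context-variables-once} applies: each \(y_j\) occurs exactly once in \(\alpha\). We may write
\[
  \alpha = \beta_0 y_{\tau(1)} \beta_1 \dotsm y_{\tau(r)} \beta_r
\]
with \(\beta_i \in \Sigma^*\) and \(\tau\) a permutation of \([r]\).

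Now let \(\vv{w} = (w_1, \dotsc, w_r)\) be any tuple with \(\vdash_G T : A(\vv{w})\). By Lemma~\ref{lem:plugging-derivation-context}, \(U[T]\) is a derivation tree for \(S(\sigma(\alpha))\), where \(\sigma(y_j) = w_j\). Hence \(\sigma(\alpha) \in L(G) \subseteq \Wpsi\), that is, \(\psi(\sigma(\alpha)) = \bm{0}\). Because \(\Z^d\) is commutative and each \(y_j\) occurs exactly once,
\[
  \psi(\sigma(\alpha)) = \psi(\beta_0 \beta_1 \dotsm \beta_r) + \sum_{j=1}^r \psi(w_j) = \psi(\beta_0 \dotsm \beta_r) + \psi(\concat(\vv{w})).
\]
Setting \(\bm{v}_A = -\psi(\beta_0 \dotsm \beta_r)\) therefore gives \(\psi(\concat(\vv{w})) = \bm{v}_A\). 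This vector depends only on the fixed context \(U\), not on \(\vv{w}\), which proves the lemma.

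The only delicate point is the exact-once occurrence of the variables. It is what makes the contributions of \(\vv{w}\) and of the context separate additively; with deleting or copying rules this would fail. Lemma~\ref{lem:context-variables-once} supplies exactly this property, and non-deletion is part of the paper's definition of well-nestedness. Permutation of the components is harmless because \(\Z^d\) is commutative.
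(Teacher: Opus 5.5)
Your proposal is correct and follows the paper's proof essentially step for step. Like the paper, you fix one context \(U\) from \(A\) to \(S\) using accessibility, invoke Lemma~\ref{lem:context-variables-once} (applicable because well-nested rules are non-deleting by definition), plug in an arbitrary derivation via Lemma~\ref{lem:plugging-derivation-context}, and set \(\bm{v}_A = -\psi(\beta_0 \dotsm \beta_r)\); this is exactly the paper's \(-\psi(\proj_\Sigma(t))\), and allowing the permutation \(\tau\) is harmless.
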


\begin{proof}
  Let \(A\) be a nonterminal of arity \(r\).
  Since \(G\) is reduced, \(A\) is accessible from the start symbol.
  Choose a derivation tree context \(U\) such that
  \(A(y_1, \dotsc, y_r) \vdash_G U : S(t)\) for some \(t \in (\Sigma \cup \{y_1, \dotsc, y_r\})^*\).
  By Lemma~\ref{lem:context-variables-once}, each of \(y_1, \dotsc, y_r\) occurs exactly once in \(t\).
  Put \(\bm{v}_A = -\psi(\proj_\Sigma(t))\).
  Suppose \(\vdash_G A(\vv{w})\), where \(\vv{w} = (w_1, \dotsc, w_r)\),
  and choose a derivation tree \(T\) such that \(\vdash_G T : A(\vv{w})\).
  Let \(\sigma\) be the monoid homomorphism determined by \(\sigma(a) = a\) for \(a \in \Sigma\) and
  \(\sigma(y_i) = w_i\) for \(1 \leq i \leq r\).
  By Lemma~\ref{lem:plugging-derivation-context}, \(\vdash_G U[T] : S(\sigma(t))\), and hence \(\sigma(t) \in L(G) \subseteq \Wpsi\).
  Since \(\Z^d\) is abelian and each variable \(y_i\) occurs exactly once in \(t\),
  we have \(\bm{0} = \psi(\sigma(t)) = \psi(\concat(\vv{w})) + \psi(\proj_\Sigma(t))\).
  Thus \(\psi(\concat(\vv{w})) = -\psi(\proj_\Sigma(t)) = \bm{v}_A\).
\end{proof}

\begin{remark}
  If \(G\) is a reduced well-nested MCFG such that \(L(G) \subseteq \Wpsi\), then a nonterminal \(A\) of \(G\) is neutral if and only if the vector \(\bm{v}_A\) in Lemma~\ref{lem:nonterminal-value} satisfies \(\bm{v}_A = \bm{0}\).
\end{remark}

\begin{theorem}[Neutralization Theorem]\label{thm:neutralization}
  Let \(G = (N, \Sigma, P, S)\) be a reduced well-nested MCFG such that \(L(G) \subseteq \Wpsi\).
  Then there exists a well-nested MCFG \(G' = (N', \Sigma, P', S')\) such that \(L(G') = L(G)\),
  and every nonterminal of \(G'\) is neutral.
  Moreover, the branching factor of \(G'\) is at most the branching factor of \(G\).
\end{theorem}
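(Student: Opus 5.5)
By Lemma~\ref{lem:nonterminal-value}, every nonterminal \(A\) of arity \(r\) has a fixed value \(\bm{v}_A\), and \(\bm{v}_S = \bm{0}\). The plan is to move the nonzero values out of the derived tuples and into the rules, so that each rule itself carries the terminals that balance \(\bm{v}_A\). Because \(\psi\) is surjective, for each \(A\) we can choose a word \(z_A \in \Sigma^*\) with \(\psi(z_A) = \bm{v}_A\) (and \(z_S = \varepsilon\)). The difficulty is that \(z_A\) is not part of the tuples derived from \(A\), so it cannot simply be subtracted from them.

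The construction goes the other way: terminal material is erased from the tuples and deferred. For each \(A \in N^{(r)}\), the new nonterminal \(A'\) derives tuples obtained from \(A\)-tuples by deleting some occurrences of letters so that what remains has \(\psi\)-value \(\bm{0}\), with the deleted multiset fixed per nonterminal. Concretely, I would fix, for each \(A\), a word \(e_A\) with \(\psi(e_A) = \bm{v}_A\). The new grammar guesses, inside each derivation, which terminal occurrences are ``deferred'' to the parent. Nonterminals are pairs \((A, D)\) with \(D\) a finite multiset of letters of bounded size. \((A, D)\) derives a tuple equal to an \(A\)-tuple with the letters of \(D\) removed, and its value is \(\bm{v}_A - \psi(D)\). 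We keep only the pairs with \(\psi(D) = \bm{v}_A\), so every new nonterminal is neutral.

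The rules come from the original rules \(\pi\colon A(\vv{t}) \gets B_1(\dots), \dots, B_n(\dots)\). A new rule \((A,D)(\vv{t}^-) \gets (B_1,D_1)(\dots), \dots, (B_n,D_n)(\dots)\) is obtained by deleting from \(\vv{t}\) a submultiset of the terminal occurrences in \(\tc(\pi)\). The multiset condition is \(D = D_1 + \dots + D_n + (\text{deleted letters of } \pi)\). Here \(\psi(D) = \bm{v}_A\) follows from \(\psi(D_i) = \bm{v}_{B_i}\) and Lemma~\ref{lem:terminal-contribution-formula}. Variable structure is untouched, so well-nestedness and the branching factor are preserved. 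The start symbol is \((S, \varnothing)\), so \(L(G') \subseteq L(G)\) follows immediately from the fact that deletion of \(\varnothing\) at the root means nothing was deleted overall.

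The main obstacle is the reverse inclusion together with finiteness, since deferred letters can only be deleted, never reinserted. This breaks the scheme as stated: letters deferred at \(B_i\) are genuinely missing from the final word. So the idea must instead be to delete a letter only if it is compensated by an occurrence that is \emph{not} deleted. The corrected plan is to move terminals upward rather than delete them. A letter produced in a subtree is instead produced at an ancestor rule, placed adjacent to the variable in which it originally appeared. That is not possible in general, because position inside a component is arbitrary.

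Given this, my first attempt would be a pumping-free normal form. I would bound \(\bm{v}_A\) and guess, for each nonterminal occurrence, an additional component carrying the imbalance. Concretely, replace \(A\) of arity \(r\) by \(A^\sharp\) of arity \(r+1\), whose extra last component holds a word \(z\) with \(\psi(z) = -\bm{v}_A\), generated in the same tree. This requires a regular choice of such words. In the parent rule, these extra components are concatenated into the parent's extra component, or, at the root, into the single word, since \(\bm{v}_S = \bm{0}\) makes it empty in value.

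I would then verify two things. First, the extra component can be placed last without violating well-nestedness, since appended final components nest trivially. Second, the extra component's content can be split off without changing the language; this is the point I expect to be hardest and would check most carefully.
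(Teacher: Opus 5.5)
Your proposal does not reach a working construction, and the gap is exactly where you flag it. The $(A,D)$ scheme is degenerate. Because $D = D_1 + \dots + D_n + (\text{deleted letters})$ and the start symbol is $(S,\varnothing)$, every multiset in a complete derivation is empty. So nothing is ever deferred, and only nonterminals with $\bm{v}_A = \bm{0}$ survive.

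The extra-component scheme fails for a structural reason. Neutrality concerns the concatenation of \emph{all} components. So the component $z$ with $\psi(z) = -\bm{v}_A$ must be new material that is absent from the original $A$-tuple; if it were letters moved out of the tuple, the total value would still be $\bm{v}_A$. Well-nested rules are non-deleting by definition, so this material must reach the output at the root. As soon as a derivation passes through some $A$ with $\bm{v}_A \neq \bm{0}$, the root word is the $G$-word enlarged by a nonempty $z$, and $L(G') \neq L(G)$. Dropping $z$ at the root would need a deleting start rule, and eliminating that deletion would just strip the $z$ components again. So ``splitting off the extra component without changing the language'' is not a detail to check; it is false.

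The missing idea is the one you dismissed with ``position inside a component is arbitrary'': make that position a component boundary, and keep the deferred letters in the \emph{name} of the nonterminal rather than in the tuple.

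\begin{enumerate}
\item In the paper, $A^u_\eta$ with $u \in \Reppsi(\bm{v}_A)$ has arity $r + \len{u}$. It derives an $A$-tuple with the occurrences of $u$ cut out, each cut becoming an extra comma. The layout $\eta \in \{\separator, \placeholder\}^{r+\len{u}-1}$ records which commas are original and which are placeholders.
\item The tuple therefore has value $\bm{v}_A - \psi(u) = \bm{0}$, yet no letter is lost. In a rule of $G'$, each body variable $x_{i,j}$ is replaced by a consecutive block of new variables, with the letters of $u_i$ reinserted as terminals at the placeholders (the map $\inst^{u_i}_{\eta_i}$).
\item The terminal letters of the resulting head have total value $\psi(\tc(\pi)) + \sum_i \bm{v}_{B_i} = \bm{v}_A$, by reducedness and Lemma~\ref{lem:nonterminal-value}. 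So a fresh $u \in \Reppsi(\bm{v}_A)$ can be chosen as a subsequence of them and cut out again.
\item At the root, $\Reppsi(\bm{0}) = \{\varepsilon\}$, so nothing stays deferred.
\end{enumerate}

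Finiteness of $N'$ and $P'$ comes from $\Reppsi(\bm{v}_A)$ being finite by Higman's lemma; your ``bounded size'' would need an argument of this kind. Well-nestedness and branching factor are preserved because each original variable becomes a consecutive block, so a crossing in a new rule would yield a crossing in $\pi$. The price is that arities grow, which the theorem allows.
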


We give a proof sketch here.
The full proof appears in Appendix~\ref{app:neutralization}.

\begin{proof}[Proof sketch]
  The following example illustrates how the construction transforms a derivation of \(G\) into a derivation of \(G'\).
  Let \(G = (N, \Sigman{3}, P, S)\), where \(N = \{A, B, S\}\) and \(P\) consists of the following rules:
  \begin{align*}
    \pi_1&\colon A(\tta, \ttb) \gets,\\
    \pi_2&\colon A(\ttb, \tta) \gets,\\
    \pi_3&\colon A(x_1 y_1, \ttc y_2 x_2)
    \gets
    A(x_1, x_2),
    A(y_1, y_2),\\
    \pi_4&\colon B(\tta\ttb\ttb\ttc) \gets,\\
    \pi_5&\colon A(y x_1, \tta x_2 \ttc)
    \gets
    B(y),
    A(x_1, x_2),\\
    \pi_6&\colon S(x_1 \ttc x_2) \gets A(x_1, x_2).
  \end{align*}
  Then \(G\) is a reduced well-nested MCFG\@.
  It has dimension \(2\) and branching factor \(2\), and satisfies \(L(G) \subseteq \MIX\).
  The fixed values from Lemma~\ref{lem:nonterminal-value} are
  \[
    \bm{v}_A = \psin{3}(\tta\ttb) = (1, 1), \qquad
    \bm{v}_B = \psin{3}(\ttb) = (0, 1), \qquad
    \bm{v}_S = \psin{3}(\varepsilon) = \bm{0}.
  \]

  For example, \(G\) has the following derivation:
  \[
    \begin{prooftree}
      \infer0[\(\pi_4\)]{B(\tta\ttb\ttb\ttc)}
      \infer0[\(\pi_1\)]{A(\tta, \ttb)}
      \infer0[\(\pi_2\)]{A(\ttb, \tta)}
      \infer2[\(\pi_3\)]{A(\tta\ttb, \ttc\tta\ttb)}
      \infer2[\(\pi_5\)]{A(\tta\ttb\ttb\ttc\tta\ttb, \tta\ttc\tta\ttb\ttc)}
      \infer1[\(\pi_6\)]{S(\tta\ttb\ttb\ttc\tta\ttb\ttc\tta\ttc\tta\ttb\ttc)\rlap{.}}
    \end{prooftree}
  \]
  For each tuple derived from a nonterminal \(C \in N\), the neutralization construction chooses occurrences forming a subsequence \(u\) such that
  \(\psin{3}(u) = \bm{v}_C\) and removes those occurrences.
  The word \(u\) is stored in the corresponding new nonterminal of \(G'\).
  The construction transforms the derivation above into the following informal derivation:
  \[
    \begin{prooftree}
      \infer0{B^{\ttb}(\tta \placeholder \ttb\ttc)}
      \infer0{A^{\tta\ttb}(\placeholder, \placeholder)}
      \infer0{A^{\ttb\tta}(\placeholder, \placeholder)}
      \infer2{A^{\tta\ttb}(\placeholder \placeholder, \ttc\tta\ttb)}
      \infer2{A^{\ttb\tta}(\tta \placeholder \ttb\ttc \placeholder \ttb, \tta\ttc\tta\ttb\ttc)}
      \infer1{S^\varepsilon(\tta\ttb\ttb\ttc\tta\ttb\ttc\tta\ttc\tta\ttb\ttc)\rlap{.}}
    \end{prooftree}
  \]

  Here the symbol \(\placeholder\) indicates a \emph{placeholder} where the letters of the removed representative,
  shown in the superscript, are to be reinserted.
  To turn this informal derivation into a valid derivation in a well-nested MCFG,
  we replace the placeholders by commas and record whether each resulting comma is inherited from the original tuple
  or comes from a placeholder.
  The resulting derivation is as follows:
  \[
    \begin{prooftree}
      \infer0{B_{\placeholder}^{\ttb}(\tta, \ttb\ttc)}
      \infer0{A_{\placeholder \separator \placeholder}^{\tta\ttb}(\varepsilon, \varepsilon, \varepsilon, \varepsilon)}
      \infer0{A_{\placeholder \separator \placeholder}^{\ttb\tta}(\varepsilon, \varepsilon, \varepsilon, \varepsilon)}
      \infer2{A_{\placeholder \placeholder \separator}^{\tta\ttb}(\varepsilon, \varepsilon, \varepsilon, \ttc\tta\ttb)}
      \infer2{A_{\placeholder \placeholder \separator}^{\ttb\tta}(\tta, \ttb\ttc, \ttb, \tta\ttc\tta\ttb\ttc)}
      \infer1{S_\varepsilon^\varepsilon(\tta\ttb\ttb\ttc\tta\ttb\ttc\tta\ttc\tta\ttb\ttc)\rlap{.}}
    \end{prooftree}
  \]
  Note that every tuple occurring in this derivation has \(\psin{3}\)-value \(\bm{0}\).
  This illustrates how the construction makes the new nonterminals in \(G'\) neutral.
  Finally, the finiteness of the nonterminal set of \(G'\) follows from Higman's lemma (Lemma~\ref{lem:higman}).
\end{proof}

\begin{corollary}[Neutral binary normal form]\label{cor:neutral-binary-normal-form}
  Let \(G\) be a well-nested MCFG over \(\Sigma\) such that \(L(G) \neq \varnothing\) and \(L(G) \subseteq \Wpsi\).
  Then there exists an equivalent reduced well-nested MCFG \(G'\) with branching factor at most \(2\) such that every nonterminal of \(G'\) is neutral.
  Moreover, if \(G\) is non-branching, then \(G'\) can be chosen to be non-branching.
\end{corollary}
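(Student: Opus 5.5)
The plan is to compose Lemma~\ref{lem:binary-branching-reduction}, Lemma~\ref{lem:reduced}, and Theorem~\ref{thm:neutralization}, checking that each step preserves the properties we need. Start from a well-nested MCFG \(G\) with \(\emptyset \neq L(G) \subseteq \Wpsi\).

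First, apply Lemma~\ref{lem:binary-branching-reduction}. This gives an equivalent reduced well-nested MCFG \(G_1\) with branching factor at most \(2\), using the clause for nonempty \(L(G)\). If \(G\) is non-branching, skip this step and apply Lemma~\ref{lem:reduced} directly. That gives a reduced well-nested \(G_1\) whose rules form a subset of the rules of \(G\), so \(G_1\) is still non-branching. In either case \(G_1\) is reduced and well-nested, \(L(G_1) = L(G) \subseteq \Wpsi\), and its branching factor is at most \(2\), or at most \(1\) in the non-branching case.

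Second, apply Theorem~\ref{thm:neutralization} to \(G_1\). This yields a well-nested MCFG \(G_2\) with \(L(G_2) = L(G_1)\), all nonterminals neutral, and branching factor at most that of \(G_1\). So \(G_2\) has branching factor at most \(2\), and at most \(1\) when \(G\) is non-branching.

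Third, make \(G_2\) reduced. Since \(L(G_2) = L(G) \neq \varnothing\), Lemma~\ref{lem:reduced} gives a reduced well-nested \(G' = (N', \Sigma, P', S')\) with \(N' \subseteq N_2\), \(P' \subseteq P_2\), and \(L(G') = L(G)\). Since \(P' \subseteq P_2\), the branching factor does not increase, so non-branching is preserved. Neutrality also passes to \(G'\), but this needs a short argument, because neutrality of \(A\) refers to derivability. Every rule of \(G'\) is a rule of \(G_2\), so every \(G'\)-derivation is a \(G_2\)-derivation. Hence \(\vdash_{G'} A(\vv{u})\) implies \(\vdash_{G_2} A(\vv{u})\), and therefore \(\psi(\concat(\vv{u})) = \bm{0}\).

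The only point needing care is this last step. Theorem~\ref{thm:neutralization} does not assert that its output is reduced, so we must re-reduce afterwards. We must also apply the theorem only after reducing, since it requires a reduced input. Both orderings are handled by the sequence above, and the monotonicity of derivability under passing to a subgrammar makes neutrality survive the final reduction. No substantive obstacle remains, since the heavy lifting is contained in the cited results.
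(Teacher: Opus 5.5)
Your proposal is correct and follows the paper's proof: binary branching reduction via Lemma~\ref{lem:binary-branching-reduction} (replaced by Lemma~\ref{lem:reduced} in the non-branching case), then Theorem~\ref{thm:neutralization}, then a final application of Lemma~\ref{lem:reduced}. Two points the paper leaves implicit are ones you check explicitly: the input to Theorem~\ref{thm:neutralization} is reduced because of the nonempty-language clause of Lemma~\ref{lem:binary-branching-reduction}, and neutrality survives the final reduction because the result is a subgrammar.
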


\begin{proof}
  Apply Lemma~\ref{lem:binary-branching-reduction} to \(G\), apply Theorem~\ref{thm:neutralization} to the resulting grammar,
  and finally apply Lemma~\ref{lem:reduced}.
  Denote the grammar obtained after these three steps by \(G'\).
  The construction in Theorem~\ref{thm:neutralization} and the subsequent deletion of nonterminals and rules do not increase the branching factor.

  If \(G\) is non-branching, replace the first step by an application of Lemma~\ref{lem:reduced} to \(G\).
  The remaining operations preserve the non-branching property.
\end{proof}

\begin{lemma}\label{lem:neutral-contribution}
  Let \(G\) be a reduced well-nested MCFG over \(\Sigma\) such that every nonterminal of \(G\) is neutral.
  Then for every rule \(\pi\) of \(G\), the terminal contribution \(\tc(\pi)\) belongs to \(\Wpsi\).
\end{lemma}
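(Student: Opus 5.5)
The plan is to exploit the fact that, in a reduced grammar, every nonterminal occurring in the body of \(\pi\) is productive. Then to apply Lemma~\ref{lem:terminal-contribution-formula} to one instance of the rule.

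Fix a rule
\[
  \pi\colon A(t_1, \dotsc, t_r) \gets B_1(x_{1, 1}, \dotsc, x_{1, r_1}), \dotsc, B_n(x_{n, 1}, \dotsc, x_{n, r_n})
\]
of \(G\). Since \(G\) is well-nested, \(\pi\) is in particular non-deleting, so Lemma~\ref{lem:terminal-contribution-formula} applies. Because \(G\) is reduced, each \(B_i\) is productive. Hence for each \(1 \leq i \leq n\) we may choose words \(w_{i, 1}, \dotsc, w_{i, r_i} \in \Sigma^*\) with \(\vdash_G B_i(w_{i, 1}, \dotsc, w_{i, r_i})\). Let \(\sigma\) be the monoid homomorphism fixing the letters of \(\Sigma\) and sending \(x_{i, j} \mapsto w_{i, j}\). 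By the inference schema defining derivability, \(\vdash_G A(\sigma(t_1), \dotsc, \sigma(t_r))\). When \(n = 0\) there is nothing to choose, and \(\vdash_G A(t_1, \dotsc, t_r)\) holds directly.

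Now use neutrality twice. Since \(A\) is neutral, \(\psi(\concat(\sigma(\vv{t}))) = \bm{0}\). Since each \(B_i\) is neutral, \(\psi(\concat(\sigma(\vv*{x}{i}))) = \psi(w_{i, 1} \dotsm w_{i, r_i}) = \bm{0}\). Substituting these into the identity of Lemma~\ref{lem:terminal-contribution-formula} gives \(\bm{0} = \psi(\tc(\pi)) + \sum_i \bm{0}\). Therefore \(\psi(\tc(\pi)) = \bm{0}\), that is, \(\tc(\pi) \in \Wpsi\).

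There is no real obstacle here. The only point needing care is that productivity of the body nonterminals, supplied by reducedness, is what lets us instantiate the rule at all. Accessibility is not needed. Also, non-deletion, which is part of the definition of well-nestedness, is exactly the hypothesis of Lemma~\ref{lem:terminal-contribution-formula}.
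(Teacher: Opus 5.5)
Your proposal is correct and follows the paper's proof: reducedness makes the body nonterminals productive, so the rule can be instantiated, and then neutrality of the head and body nonterminals together with Lemma~\ref{lem:terminal-contribution-formula} gives \(\psi(\tc(\pi)) = \bm{0}\). Your side remarks are also accurate: accessibility is not used, and non-deletion is built into the paper's definition of a well-nested rule.
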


\begin{proof}
  Let
  \[
    \pi\colon A(t_1, \dotsc, t_m) \gets B_1(\vv*{x}{1}), \dotsc, B_n(\vv*{x}{n})
  \]
  be a rule of \(G\), where \(n \geq 0\).
  Since \(G\) is reduced, applying \(\pi\) to one \(B_i\)-derivable tuple for each \(1 \leq i \leq n\) yields an \(A\)-derivable tuple.
  It follows from Lemma~\ref{lem:terminal-contribution-formula} and the neutrality of \(A\) and the \(B_i\) that \(\psi(\tc(\pi)) = \bm{0}\).
  Hence \(\tc(\pi) \in \Wpsi\).
\end{proof}

\subsection{Simulation Theorem}\label{sec:simulation}

We define the grammars \(\Gpsi{r}\) used in Theorem~\ref{thm:simulation}.
By Lemma~\ref{lem:neutral-contribution}, the terminal contribution of every rule of a reduced grammar whose nonterminals are neutral belongs to \(\Wpsi\).
We first partition the set of positions of each nonempty terminal contribution so that each part induces a subsequence in a fixed finite set \(M_\psi\),
and then use these subsequences as the terminal contributions of letter insertion rules in \(\Gpsi{r}\).

\begin{definition}
  Let
  \[
    M_\psi = \Minsubseq(\Wpsi \setminus \{\varepsilon\}).
  \]
  An element of \(M_\psi\) is called an \emph{\(M_\psi\)-word}.
  Equivalently, a word \(u \in \Sigma^*\) is an \(M_\psi\)-word if and only if
  \(u \neq \varepsilon\), \(\psi(u) = \bm{0}\), and there is no nonempty word
  \(v \prec u\) such that \(\psi(v) = \bm{0}\).
\end{definition}

By Corollary~\ref{cor:subsequence-minimal-finite}, \(M_\psi\) is finite.

\begin{definition}
  Let \(w \in \Sigma^*\), and put \(n = \len{w}\).
  An \emph{\(M_\psi\)-decomposition} of \(w\) is a partition \([n] = J_1 \sqcup \dotsb \sqcup J_k\)
  for some \(k \in \N\) such that \(w[J_i] \in M_\psi\) for \(1 \leq i \leq k\).
  Thus the empty partition is an \(M_\psi\)-decomposition of \(\varepsilon\).
\end{definition}

\begin{lemma}\label{lem:mpsi-decomposition}
  Every word in \(\Wpsi\) has an \(M_\psi\)-decomposition.
\end{lemma}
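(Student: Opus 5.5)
We argue by strong induction on the length of \(w \in \Wpsi\). If \(w = \varepsilon\), the empty partition is an \(M_\psi\)-decomposition by definition, so assume \(w \neq \varepsilon\) and put \(n = \len{w}\).

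Since \(w \in \Wpsi \setminus \{\varepsilon\}\), consider the set of nonempty subsequences of \(w\) lying in \(\Wpsi\). More concretely, consider the nonempty sets \(J \subseteq [n]\) with \(\psi(w[J]) = \bm{0}\). This family is nonempty, since it contains \([n]\), and it is finite. So we may choose such a \(J_1\) of minimal cardinality.

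We claim that \(w[J_1] \in M_\psi\). Indeed, \(w[J_1] \neq \varepsilon\) and \(\psi(w[J_1]) = \bm{0}\). Suppose some nonempty \(v \prec w[J_1]\) had \(\psi(v) = \bm{0}\). Then \(v = w[J_1][K]\) for some \(K\) with \(\len{K} < \len{J_1}\). Pulling \(K\) back along the increasing enumeration of \(J_1\) gives a nonempty \(J' \subsetneq J_1\) with \(w[J'] = v\) and \(\psi(w[J']) = \bm{0}\). This contradicts the minimality of \(\len{J_1}\). The only care needed here is the bookkeeping that subsequences of \(w[J_1]\) correspond to index subsets of \(J_1\), and this is immediate from the definition of \(w[J]\).

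Next let \(R = [n] \setminus J_1\) and \(w' = w[R]\). Because \(\Z^d\) is abelian, \(\psi(w) = \psi(w[J_1]) + \psi(w[R])\), so \(\psi(w') = \bm{0}\) and \(w' \in \Wpsi\). Moreover \(\len{w'} < n\), since \(J_1 \neq \varnothing\). By the induction hypothesis, \(w'\) has an \(M_\psi\)-decomposition \([\len{w'}] = K_2 \sqcup \dotsb \sqcup K_k\).

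Let \(\rho\colon [\len{w'}] \to R\) be the increasing bijection, and set \(J_i = \rho(K_i)\). Then \(w[J_i] = w'[K_i] \in M_\psi\), because \(\rho\) is order-preserving. Hence \([n] = J_1 \sqcup J_2 \sqcup \dotsb \sqcup J_k\) is an \(M_\psi\)-decomposition of \(w\).

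There is no real obstacle in this argument; it uses neither the finiteness of \(M_\psi\) nor Higman's lemma. The only points needing attention are the commutativity of \(\Z^d\) in the additivity step and the re-indexing of positions through order-preserving bijections.
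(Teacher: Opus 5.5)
Your proof is correct and follows essentially the same route as the paper. You pick a minimal nonempty set of positions whose subsequence has \(\psi\)-value \(\bm{0}\), observe that it gives an \(M_\psi\)-word, remove it using commutativity of \(\Z^d\), and recurse on the rest. The only difference is bookkeeping: the paper inducts on \(\card{J}\) over subsets \(J \subseteq [n]\) of the fixed word's positions, which avoids your re-indexing through the increasing bijection \(\rho\).
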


\begin{proof}
  Let \(w \in \Wpsi\), and put \(n = \len{w}\).
  We prove the following stronger statement by induction on \(\card{J}\):
  for every \(J \subseteq [n]\) such that \(w[J] \in \Wpsi\),
  there is a partition \(J = J_1 \sqcup \dotsb \sqcup J_k\)
  for some \(k \in \N\) such that \(w[J_i] \in M_\psi\) for \(1 \leq i \leq k\).

  If \(J = \varnothing\), take \(k = 0\).
  Suppose that \(J \neq \varnothing\).
  Since \(w[J] \in \Wpsi\), choose a minimal nonempty subset \(J_0 \subseteq J\)
  such that \(w[J_0] \in \Wpsi\).
  By minimality, \(w[J_0] \in M_\psi\).
  Indeed, if a nonempty word \(v \prec w[J_0]\) belonged to \(\Wpsi\), then
  \(v = w[J'']\) for some nonempty proper subset \(J'' \subsetneq J_0\), contradicting the choice of \(J_0\).

  Put \(J' = J \setminus J_0\).
  Since \(\Z^d\) is abelian and \(J = J_0 \sqcup J'\),
  we have \(\psi(w[J]) = \psi(w[J_0]) + \psi(w[J'])\),
  and hence \(w[J'] \in \Wpsi\).
  Since \(\card{J'} < \card{J}\), it follows from the induction hypothesis that there exists a partition
  \(J' = J'_1 \sqcup \dotsb \sqcup J'_k\) such that \(w[J'_i] \in M_\psi\) for \(1 \leq i \leq k\).
  Therefore \(J = J_0 \sqcup J'_1 \sqcup \dotsb \sqcup J'_k\) is the required partition.

  Taking \(J = [n]\) proves the lemma.
\end{proof}

\begin{lemma}\label{lem:psi-n-mpsi}
  For each \(n \geq 2\), \(M_{\psin{n}} = \Perm(\tta_1 \dotsm \tta_n)\).
\end{lemma}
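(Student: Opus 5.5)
The plan is to characterize \(\MIX_n = W_{\psin{n}}\) by letter counts: a word \(w \in \Sigman{n}^*\) satisfies \(\psin{n}(w) = \bm{0}\) if and only if \(\len{w}_{\tta_1} = \dotsb = \len{w}_{\tta_n}\). Indeed, for \(1 \leq i \leq n-1\) the \(i\)-th coordinate of \(\psin{n}(w)\) is \(\len{w}_{\tta_i} - \len{w}_{\tta_n}\). We record this first. The rest of the argument only uses the fact that subsequences can only decrease each letter count, and that \(\Perm(\tta_1 \dotsm \tta_n)\) consists exactly of the words in which every letter occurs exactly once.

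For the inclusion \(\Perm(\tta_1 \dotsm \tta_n) \subseteq M_{\psin{n}}\), let \(u\) be a permutation of \(\tta_1 \dotsm \tta_n\). Every letter occurs once in \(u\), so \(u \neq \varepsilon\) and \(\psin{n}(u) = \bm{0}\). Now let \(v \prec u\) be nonempty. Then \(v\) omits at least one letter, say \(\len{v}_{\tta_j} = 0\), and contains some letter, say \(\len{v}_{\tta_i} = 1\). Hence the counts of \(v\) are not all equal, so \(\psin{n}(v) \neq \bm{0}\). By the equivalent description of \(M_\psi\)-words, \(u \in M_{\psin{n}}\).

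For the reverse inclusion, let \(u \in M_{\psin{n}}\). Since \(u \neq \varepsilon\) and all letter counts are equal, there is a common value \(k = \len{u}_{\tta_i} \geq 1\). Choose one occurrence of each letter \(\tta_1, \dotsc, \tta_n\) in \(u\), and let \(v\) be the subsequence of \(u\) formed by these \(n\) positions. Then \(v\) is nonempty, \(\psin{n}(v) = \bm{0}\), and \(v \preccurlyeq u\). Minimality of \(u\) forces \(v = u\), so \(k = 1\) and \(u\) has length \(n\) with each letter exactly once. Thus \(u \in \Perm(\tta_1 \dotsm \tta_n)\).

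There is no genuine obstacle here. The only point requiring care is the opening count characterization of \(\psin{n}(w) = \bm{0}\), including the role of the \(\tta_n\) coordinate, which is a direct computation from the definition of \(\psin{n}\).
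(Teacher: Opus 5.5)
Your proposal is correct and follows essentially the same route as the paper: reduce $\psin{n}(w) = \bm{0}$ to equality of all letter counts, then identify both $M_{\psin{n}}$ and $\Perm(\tta_1 \dotsm \tta_n)$ with the words in which every letter occurs exactly once. The paper states that last identification in one line, and your two inclusions supply its details: omitting a letter breaks neutrality, and choosing one occurrence of each letter gives a neutral subsequence, which forces $k = 1$ by minimality.
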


\begin{proof}
  By the definition of \(\psin{n}\), for every \(w \in \Sigman{n}^*\),
  \[
    w \in \psin{n}^{-1}(\bm{0})
    \iff
    \len{w}_{\tta_1} = \dotsb = \len{w}_{\tta_n}.
  \]
  Hence both \(M_{\psin{n}}\) and \(\Perm(\tta_1 \dotsm \tta_n)\) consist precisely of the words
  \(w \in \Sigman{n}^*\) such that \(\len{w}_{\tta_1} = \dotsb = \len{w}_{\tta_n} = 1\).
\end{proof}

\begin{example}\label{ex:mpsi-words}
  The set \(M_{\psin{3}}\) consists of the six permutations of \(\tta\ttb\ttc\), namely,
  \[
    M_{\psin{3}}
    =
    \{\tta\ttb\ttc, \tta\ttc\ttb, \ttb\tta\ttc, \ttb\ttc\tta, \ttc\tta\ttb, \ttc\ttb\tta\}.
  \]
  For the word \(w_1 = \tta\ttb\ttb\ttc\ttc\tta \in \MIX = W_{\psin{3}}\),
  the partition \([6] = \{1, 2, 4\} \sqcup \{3, 5, 6\}\) is an \(M_{\psin{3}}\)-decomposition of \(w_1\).
  Indeed, \(w_1[\{1, 2, 4\}] = \tta\ttb\ttc \in M_{\psin{3}}\) and \(w_1[\{3, 5, 6\}] = \ttb\ttc\tta \in M_{\psin{3}}\).

  For \(\varphi_2\), we have
  \[
    M_{\varphi_2}
    =
    \{a_1 \bar{a}_1, \bar{a}_1 a_1, a_2 \bar{a}_2, \bar{a}_2 a_2\}.
  \]
  For the word \(w_2 = a_1 a_2 \bar{a}_1 \bar{a}_2 \in W_{\varphi_2}\),
  the partition \([4] = \{1, 3\} \sqcup \{2, 4\}\) is an \(M_{\varphi_2}\)-decomposition of \(w_2\).
  Indeed, \(w_2[\{1, 3\}] = a_1 \bar{a}_1 \in M_{\varphi_2}\) and \(w_2[\{2, 4\}] = a_2 \bar{a}_2 \in M_{\varphi_2}\).
\end{example}

\subsubsection{The grammars
  \texorpdfstring{\(\Gnbpsi{r}\) and \(\Gpsi{r}\)}{Gpsi-nb[r] and Gpsi[r]}}

\begin{definition}
  Fix \(r \geq 1\).
  Let \(I\) and \(S\) be distinct nonterminals of arities \(r\) and \(1\), respectively.
  We use the following rule schemata.
  \begin{description}[style=unboxed,leftmargin=0pt]
    \item[Nullary rule] \(I(\varepsilon, \dotsc, \varepsilon) \gets\).
    \item[Letter insertion rules] for \(u_1, v_1, \dotsc, u_r, v_r \in \Sigma \cup \{\varepsilon\}\) satisfying
      \(u_1 v_1 \dotsm u_r v_r \in M_\psi\), include the rule
      \[
        I(u_1 x_1 v_1, \dotsc, u_r x_r v_r) \gets I(x_1, \dotsc, x_r).
      \]
      We call a rule of this form a \emph{letter insertion rule}.
    \item[Unary regrouping rules] for every \(r\)-tuple \(\vv{t} = (t_1, \dotsc, t_r)\) satisfying
      \[
        \vv{t} \sqsubseteq (x_1, \dotsc, x_r),
      \]
      include the rule
      \[
        I(\vv{t}) \gets I(x_1, \dotsc, x_r).
      \]
      We call a rule of this form a \emph{unary regrouping rule}.
      Since \(\sqsubseteq\) is reflexive, this schema includes the identity rule, whose head coincides with its sole body atom.
    \item[Binary rules] for every \(0 \leq i \leq r\) and every \(r\)-tuple \(\vv{t} = (t_1, \dotsc, t_r)\) satisfying
      \[
        \vv{t}
        \sqsubseteq
        (x_1, \dotsc, x_i, y_1, \dotsc, y_r, x_{i + 1}, \dotsc, x_r),
      \]
      include the rule
      \[
        I(\vv{t}) \gets I(x_1, \dotsc, x_r),\ I(y_1, \dotsc, y_r).
      \]
    \item[Start rule] \(S(x_1 \dotsm x_r) \gets I(x_1, \dotsc, x_r)\).
  \end{description}

  Let \(\Gnbpsi{r}\) be the non-branching MCFG over \(\Sigma\) with nonterminal set \(\{I, S\}\), start symbol \(S\),
  and rule set consisting of the nullary rule, the letter insertion rules,
  the unary regrouping rules, and the start rule.
  Let \(\Gpsi{r}\) be the grammar obtained from \(\Gnbpsi{r}\) by adding the binary rules.
  Thus \(\Gnbpsi{r}\) is the non-branching fragment of \(\Gpsi{r}\).

  For \(n \geq 2\), put \(\Gnbn{n}{r} = G_{\psin{n}}^{\mathrm{nb}}[r]\) and \(\Gn{n}{r} = G_{\psin{n}}[r]\).
\end{definition}

A letter insertion rule inserts the letters of an \(M_\psi\)-word immediately before or after the components.
A unary regrouping rule preserves the concatenation of the components and may change their boundaries.
A binary rule inserts the components of one premise tuple into a gap of another and then regroups the resulting sequence.
In \(\Gpsi{r}\), every application of a unary regrouping rule can be simulated using a binary rule and the nullary rule.
The unary regrouping rules allow the same regrouping in \(\Gnbpsi{r}\), which has no binary rules.

By Lemma~\ref{lem:psi-n-mpsi}, for \(n \geq 2\), the terminal contribution of each letter insertion rule in
\(\Gnbn{n}{r}\) and \(\Gn{n}{r}\) is a permutation of \(\tta_1 \dotsm \tta_n\).
Thus each such rule inserts every letter of \(\Sigman{n}\) exactly once,
and at most one letter is inserted on either side of each component.

\subsubsection{Basic properties and simulation}

\begin{lemma}\label{lem:gpsi-properties}
  For every \(r \geq 1\), the grammars \(\Gnbpsi{r}\) and \(\Gpsi{r}\) are well-nested MCFGs.
  Moreover, the nonterminals \(I\) and \(S\) are neutral in both grammars.
  Thus, \(L(\Gnbpsi{r}) \subseteq L(\Gpsi{r}) \subseteq \Wpsi\).
  For \(1 \leq r \leq s\), we also have \(L(\Gnbpsi{r}) \subseteq L(\Gnbpsi{s})\) and \(L(\Gpsi{r}) \subseteq L(\Gpsi{s})\).
\end{lemma}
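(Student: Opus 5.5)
The argument has three parts: well-nestedness of every rule schema, neutrality of \(I\) and \(S\) by induction on derivations, and the inclusions.

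\textbf{Well-nestedness.} First, each schema yields finitely many rules. \(M_\psi\) is finite by Corollary~\ref{cor:subsequence-minimal-finite}, and each \(\Regroup_r\) of a finite tuple of variables is finite. Every rule is an MCFG rule whose variables occur at most once. I then check each schema in turn.
\begin{itemize}
  \item The nullary rule and the start rule are trivially well-nested.
  \item A letter insertion rule keeps each \(x_j\) in component \(j\) in order, so \(x_1 \dotsm x_r \preccurlyeq\) the concatenated head. It has a single body atom, so it is non-deleting and non-permuting, and well-nestedness is vacuous.
  \item A unary regrouping rule has \(\concat(\vv{t}) = x_1 \dotsm x_r\), because \(\vv{t} \in \Regroup_s(x_1,\dotsc,x_r)\) with \(s=r\) (arities must match). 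Hence it is non-deleting and non-permuting.
  \item A binary rule has \(\concat(\vv{t}) = x_1 \dotsm x_i y_1 \dotsm y_r x_{i+1} \dotsm x_r\). This is non-deleting and non-permuting for both atoms. Since all \(y\)'s form a contiguous block, no pattern \(x_j y_{j'} x_k y_{k'}\) or \(y_{j'} x_j y_{k'} x_k\) can be a subsequence. Indeed, any \(x\) occurring after some \(y\) lies after the whole \(y\)-block, and any \(x\) before some \(y\) lies before it.
\end{itemize}

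\textbf{Neutrality.} I argue by induction on derivation trees, using Lemma~\ref{lem:terminal-contribution-formula}. For the nullary rule the claim is immediate, since \(\psi(\varepsilon)=\bm 0\). For a letter insertion rule, the terminal contribution \(u_1v_1\dotsm u_rv_r\) is an \(M_\psi\)-word, so its \(\psi\)-value is \(\bm 0\). The body is neutral by induction, so the head is neutral. Regrouping and binary rules have empty terminal contribution and non-deleting heads, so the \(\psi\)-value of the head is the sum of the values of the body tuples, which is \(\bm 0\). The start rule concatenates the components, so \(S\) is neutral. Hence \(L(\Gpsi{r}) \subseteq \Wpsi\). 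The inclusion \(L(\Gnbpsi{r}) \subseteq L(\Gpsi{r})\) holds because the rule set of \(\Gnbpsi{r}\) is contained in that of \(\Gpsi{r}\).

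\textbf{Monotonicity in \(r\).} This is the only step requiring a construction. Let \(\iota(\vv{w}) = (w_1,\dotsc,w_r,\varepsilon,\dotsc,\varepsilon)\) pad an \(r\)-tuple with \(s-r\) empty components. I show by induction on derivations in \(\Gpsi{r}\) that \(\vdash I(\vv{w})\) in the \(r\)-grammar implies \(\vdash I(\iota(\vv{w}))\) in the \(s\)-grammar, handling each rule as follows.
\begin{itemize}
  \item Nullary rule: maps to the nullary rule.
  \item Letter insertion rule: use the \(s\)-ary insertion rule with \(u_j=v_j=\varepsilon\) for \(j>r\).
  \item Unary regrouping with \(\vv{t}\sqsubseteq\vv{x}\): the padded tuple \(\iota(\vv{t})\) satisfies \(\iota(\vv t) \sqsubseteq (x_1,\dotsc,x_s)\). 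To see this, put \(x_{r+1},\dotsc,x_s\) into the last \(s-r\) components, which are empty for the padded input. Equivalently, regroup by the cut points \(j(0),\dotsc,j(r)\) followed by \(j(r+1)=\dotsb=j(s)=s\); the remaining variables are then absorbed appropriately.
  \item Binary rule with gap \(i\): the padded \(y\)'s are inserted at gap \(i\), and the padded \(x\)'s trailing empties sit at the end. The regrouping can again assign every empty padded component to adjacent parts.
\end{itemize}
Variables substituted by \(\varepsilon\) may be placed anywhere in the regrouping without changing the resulting tuple, and the regrouping is chosen so that the result is exactly \(\iota(\vv{t})\). The start rule yields the same concatenation, which gives both inclusions. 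The main care point is this padding bookkeeping for binary rules: one must choose the \(s\)-ary regrouping so that the tuple produced for the padded input equals \(\iota\) of the \(r\)-ary output.
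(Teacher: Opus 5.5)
Your proposal is correct and follows the paper's proof essentially step by step. Both check well-nestedness schema by schema and prove neutrality by induction, using that only letter insertion rules have nonempty terminal contribution (an \(M_\psi\)-word). Both get monotonicity by padding \(I\)-tuples with \(s-r\) empty components and lifting each rule. Your unary lift is the paper's head \((t_1,\dotsc,t_r,x_{r+1},\dotsc,x_s)\), although writing \(\iota(\vv{t}) \sqsubseteq (x_1,\dotsc,x_s)\) is literally false for the formal variable tuples, since their concatenations differ.

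The only real difference is in the binary case. The paper replaces \(y_r\) and \(x_r\) by \(y_r y_{r+1}\dotsm y_s\) and \(x_r x_{r+1}\dotsm x_s\), and must therefore switch the insertion position to \(s\) when \(i = r\). Your uniform choice of position \(i\) with extended cut points also works, because the padded variables are instantiated by \(\varepsilon\).
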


\begin{proof}
  For fixed \(r\), there are finitely many ways to regroup \(r\) or \(2r\) variables into \(r\) components, and a binary rule has one of \(r + 1\) insertion positions.
  Thus the unary regrouping and binary rule schemata yield finitely many rules.
  There are also finitely many letter insertion rules because each \(u_i\) and \(v_i\) belongs to the finite set \(\Sigma \cup \{\varepsilon\}\).
  Since there is only one nullary rule and one start rule, both grammars have finite rule sets.

  The nullary rule, the letter insertion rules, the unary regrouping rules, and the start rule are non-deleting and non-permuting and have at most one body atom.
  Every binary rule is also non-deleting and non-permuting, and the variables of the second body atom occur in a single gap among those of the first.
  Hence both grammars are well-nested.

  Every letter insertion rule has terminal contribution in \(M_\psi \subseteq \Wpsi\), and every other rule has empty terminal contribution.
  Induction on derivations therefore shows that \(I\) is neutral.
  The start rule then shows that \(S\) is neutral.
  Since \(\Gnbpsi{r}\) is a subgrammar of \(\Gpsi{r}\), it follows that \(L(\Gnbpsi{r}) \subseteq L(\Gpsi{r}) \subseteq \Wpsi\).

  Let \(1 \leq r \leq s\).
  Pad every \(I\)-tuple in a derivation with \(s - r\) empty components.
  The nullary rule, each letter insertion rule, and the start rule lift under this padding by assigning \(\varepsilon\) to every added position.
  A unary regrouping rule \(I(t_1, \dotsc, t_r) \gets I(x_1, \dotsc, x_r)\) lifts to
  \[
    I(t_1, \dotsc, t_r, x_{r + 1}, \dotsc, x_s)
    \gets
    I(x_1, \dotsc, x_s),
  \]
  which is a unary regrouping rule since \((t_1, \dotsc, t_r, x_{r + 1}, \dotsc, x_s) \sqsubseteq (x_1, \dotsc, x_s)\).
  Hence \(L(\Gnbpsi{r}) \subseteq L(\Gnbpsi{s})\).

  The same padding construction applies to \(\Gpsi{r}\).
  In the head tuple of each binary rule, replace \(y_r\) and \(x_r\) by \(y_r y_{r + 1} \dotsm y_s\) and \(x_r x_{r + 1} \dotsm x_s\), respectively, and append \(s - r\) empty components.
  The resulting rule is a binary rule of \(\Gpsi{s}\), using the same insertion position \(i\) if \(i < r\) and insertion position \(s\) if \(i = r\).
  Hence \(L(\Gpsi{r}) \subseteq L(\Gpsi{s})\).
\end{proof}

\begin{theorem}[Simulation Theorem]\label{thm:simulation}
  Let \(G = (N, \Sigma, P, S)\) be a reduced well-nested MCFG with branching factor at most \(2\) such that every nonterminal of \(G\) is neutral.
  Let \(m\) be the dimension of \(G\), and put \(c = \max_{\pi \in P}\len{\tc(\pi)}\) and \(r = m + c\).
  Then \(L(G) \subseteq L(\Gpsi{r}) \subseteq \Wpsi\).
  Moreover, if \(G\) is non-branching, then \(L(G) \subseteq L(\Gnbpsi{r}) \subseteq \Wpsi\).
\end{theorem}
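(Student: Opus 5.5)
The proof goes by induction on derivation trees of \(G\). I would prove the following invariant: whenever \(\vdash_G A(u_1, \dotsc, u_k)\) with \(A\) of arity \(k \leq m\), the padded tuple \((u_1, \dotsc, u_k, \varepsilon, \dotsc, \varepsilon)\) of length \(r\) is \(I\)-derivable in \(\Gpsi{r}\) (in \(\Gnbpsi{r}\) if \(G\) is non-branching). By the closure under regrouping, which in \(\Gpsi{r}\) comes from the unary regrouping rules (or a binary rule with the nullary rule), it suffices to derive any \(r\)-tuple \(\vv{w}\) with \(\vv{u} \sqsubseteq \vv{w}\). Applying the invariant to \(S\) and the start rule \(S(x_1 \dotsm x_r) \gets I(x_1, \dotsc, x_r)\) gives \(L(G) \subseteq L(\Gpsi{r})\). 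The inclusion \(L(\Gpsi{r}) \subseteq \Wpsi\) is Lemma~\ref{lem:gpsi-properties}.

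For the inductive step, fix a rule \(\pi\colon A(t_1, \dotsc, t_k) \gets B_1(\vv*{x}{1}), B_2(\vv*{x}{2})\) (or fewer body atoms). Substitute the derived tuples into the body. Since \(\pi\) is non-deleting, non-permuting and well-nested, the variables of the second atom occur in \(\concat(\vv{t})\) entirely inside one gap of the first atom's variables, say gap \(i\), and in order. So the variable order in \(\concat(\vv{t})\) is exactly \(x_{1,1}, \dotsc, x_{1,i}, \vv*{x}{2}, x_{1,i+1}, \dotsc\). The plan has three stages.

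\textbf{Stage 1 (placing the variables).} Using the induction hypothesis for both premises, apply one binary rule of \(\Gpsi{r}\) with insertion position \(i\). It regroups the combined sequence of variable values into an \(r\)-tuple whose components are the maximal variable runs of the head. These runs are what remains of \(t_1, \dotsc, t_k\) after cutting each \(t_j\) at its terminal letters.

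\textbf{Stage 2 (slots for the terminals).} I refine this cutting so that every component has at most one terminal letter on each side. Let a maximal terminal string \(\ell_1 \dotsm \ell_q\) in some \(t_j\) sit between two runs, or at an end of \(t_j\). Attach \(\ell_1\) as a suffix slot of the preceding run and \(\ell_q\) as a prefix slot of the following run, when those runs exist. Give each middle letter its own empty component. A count of the pieces (at most \(k\) original components plus at most one new cut per terminal letter) should give at most \(m + c = r\) components; the padding with empty components accommodates the rest.

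\textbf{Stage 3 (inserting the terminals).} Lemma~\ref{lem:neutral-contribution} gives \(\tc(\pi) \in \Wpsi\). Take an \(M_\psi\)-decomposition of \(\tc(\pi)\) by Lemma~\ref{lem:mpsi-decomposition}. Insert each \(M_\psi\)-word with one letter insertion rule, putting each of its letters into its assigned slot. Because every slot holds a single letter, the insertions of different \(M_\psi\)-words touch disjoint slots. Hence they commute, and the order of application does not matter. The resulting tuple is a refinement of \(\sigma(\vv{t})\), padded with empties, so a final unary regrouping yields the padded head tuple. Nullary rules of \(G\) are the case with no body atoms, starting from \(I(\varepsilon, \dotsc, \varepsilon)\). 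In the non-branching case Stage~1 only needs a unary regrouping rule, so everything stays in \(\Gnbpsi{r}\).

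The main obstacle is the bookkeeping in Stages~2 and~3. Letter insertion rules place letters only at the outer ends of components and insert a whole \(M_\psi\)-word at once. A naive assignment, with several letters of one gap glued to one run, forces an insertion order that may conflict between interleaved \(M_\psi\)-words. The one-letter-per-slot refinement is designed to avoid this. The delicate point is checking that the number of slots really stays within \(r = m + c\), and adjusting the cutting, for example by attaching the end letters of a gap to neighbouring runs, if the naive count overshoots.
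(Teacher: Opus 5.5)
Your proposal is correct and follows essentially the same route as the paper: induction on padded \(r\)-tuples, one nullary, unary, or binary rule of \(\Gpsi{r}\) to lay out the variable values as a refinement of the head (swapping the body atoms if necessary so that the second lies in one gap of the first), one letter insertion rule for each part of an \(M_\psi\)-decomposition of \(\tc(\pi)\) with at most one letter per slot, and a final unary regrouping. The only difference is in the slot assignment: the paper cuts \(\concat(\vv{t})\) at every terminal letter and every component boundary, which gives exactly \(\len{\tc(\pi)} + s \leq c + m\) variable-only pieces, and it appends each terminal as a suffix to the piece just before it, so prefix slots are never used and the count is immediate; your more economical cutting also stays within \(k + \len{\tc(\pi)}\) components, so your worry about overshooting \(r\) does not arise.
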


We give a proof sketch here.
The full proof appears in Appendix~\ref{app:simulation}.

\begin{proof}[Proof sketch]
  The following example illustrates the simulation of \(G\)-derivations by \(\Gn{2}{r}\) when \(\psi = \psin{2}\).
  Let \(G = (N, \Sigma_2, P, S)\), where \(N = \{A, S\}\) and \(P\) consists of the following rules:
  \begin{align*}
    \pi_1&\colon A(\tta\tta, \ttb\ttb) \gets,\\
    \pi_2&\colon A(x_1 y_1 \ttb \tta y_2, \tta x_2 \ttb) \gets A(x_1, x_2), A(y_1, y_2),\\
    \pi_3&\colon S(\ttb x_1 \tta x_2) \gets A(x_1, x_2).
  \end{align*}
  Then \(G\) is a reduced well-nested MCFG\@.
  It has dimension \(2\) and branching factor \(2\), and its terminal contributions are
  \(\tc(\pi_1) = \tta\tta\ttb\ttb\), \(\tc(\pi_2) = \ttb\tta\tta\ttb\), and \(\tc(\pi_3) = \ttb\tta\).
  Since these words belong to \(\MIX_2\), induction on derivations shows that both nonterminals are neutral, and thus \(L(G) \subseteq \MIX_2\).
  Here \(c = 4\), so the value of \(r = m + c\) in Theorem~\ref{thm:simulation} is \(6\).

  For example, \(G\) has the following derivation:
  \begin{equation}\label{eq:simulation-example-derivation}
    \begin{prooftree}
      \infer0[\(\pi_1\)]{A(\tta\tta, \ttb\ttb)}
      \infer0[\(\pi_1\)]{A(\tta\tta, \ttb\ttb)}
      \infer2[\(\pi_2\)]{A(\tta\tta\tta\tta\ttb\tta\ttb\ttb, \tta\ttb\ttb\ttb)}
      \infer1[\(\pi_3\)]{S(\ttb\tta\tta\tta\tta\ttb\tta\ttb\ttb\tta\tta\ttb\ttb\ttb)\rlap{.}}
    \end{prooftree}
  \end{equation}
  This derivation is simulated by \(\Gn{2}{6}\) as shown in Figure~\ref{fig:simulation-example}.
  General \(G\)-derivations are simulated in the same way.
  When \(G\) is non-branching, the simulation uses only rules of \(\Gnbpsi{r}\).
  \begin{figure}[htbp]
    \centering
    \[
      \begin{prooftree}
        \infer0[nullary rule]{I(\varepsilon, \varepsilon, \varepsilon, \varepsilon, \varepsilon, \varepsilon)}
        \infer1[letter insertion rule]{I(\underline{\tta}, \varepsilon, \varepsilon, \underline{\ttb}, \varepsilon, \varepsilon)}
        \infer1[letter insertion rule]{I(\tta, \underline{\tta}, \varepsilon, \ttb, \underline{\ttb}, \varepsilon)}
        \infer1[unary regrouping rule]{I(\tta\tta, \ttb\ttb, \varepsilon, \varepsilon, \varepsilon, \varepsilon)}
        \infer0[nullary rule]{I(\varepsilon, \varepsilon, \varepsilon, \varepsilon, \varepsilon, \varepsilon)}
        \infer1[letter insertion rule]{I(\underline{\tta}, \varepsilon, \varepsilon, \underline{\ttb}, \varepsilon, \varepsilon)}
        \infer1[letter insertion rule]{I(\tta, \underline{\tta}, \varepsilon, \ttb, \underline{\ttb}, \varepsilon)}
        \infer1[unary regrouping rule]{I(\tta\tta, \ttb\ttb, \varepsilon, \varepsilon, \varepsilon, \varepsilon)}
        \infer2[binary rule]{I(\tta\tta\tta\tta, \varepsilon, \ttb\ttb, \varepsilon, \ttb\ttb, \varepsilon)}
        \infer1[letter insertion rule]{I(\tta\tta\tta\tta\underline{\ttb}, \underline{\tta}, \ttb\ttb, \varepsilon, \ttb\ttb, \varepsilon)}
        \infer1[letter insertion rule]{I(\tta\tta\tta\tta\ttb, \tta, \ttb\ttb, \underline{\tta}, \ttb\ttb\underline{\ttb}, \varepsilon)}
        \infer1[unary regrouping rule]{I(\tta\tta\tta\tta\ttb\tta\ttb\ttb, \tta\ttb\ttb\ttb, \varepsilon, \varepsilon, \varepsilon, \varepsilon)}
        \infer1[unary regrouping rule]{I(\varepsilon, \tta\tta\tta\tta\ttb\tta\ttb\ttb, \tta\ttb\ttb\ttb, \varepsilon, \varepsilon, \varepsilon)}
        \infer1[letter insertion rule]{I(\underline{\ttb}, \tta\tta\tta\tta\ttb\tta\ttb\ttb\underline{\tta}, \tta\ttb\ttb\ttb, \varepsilon, \varepsilon, \varepsilon)}
        \infer1[unary regrouping rule]{I(\ttb\tta\tta\tta\tta\ttb\tta\ttb\ttb\tta\tta\ttb\ttb\ttb, \varepsilon, \varepsilon, \varepsilon, \varepsilon, \varepsilon)}
        \infer1[start rule]{S(\ttb\tta\tta\tta\tta\ttb\tta\ttb\ttb\tta\tta\ttb\ttb\ttb)\rlap{.}}
      \end{prooftree}
    \]
    \caption{A derivation in \(\Gn{2}{6}\) simulating the \(G\)-derivation in \eqref{eq:simulation-example-derivation}. Newly inserted letters are underlined.}
    \label{fig:simulation-example}
  \end{figure}
\end{proof}

\subsubsection{Theorem A and the combinatorial reformulation}

\begin{theoremA}
  For every well-nested multiple context-free grammar \(G\) over \(\Sigma\) with \(L(G) \subseteq \Wpsi\),
  there exists \(r \geq 1\) such that \(L(G) \subseteq L(\Gpsi{r}) \subseteq \Wpsi\).
\end{theoremA}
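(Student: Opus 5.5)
The plan is to assemble Theorem~\ref{thm:theorem-a} from the two results already established, treating the empty language separately.

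\emph{Empty case.} If \(L(G) = \varnothing\), take \(r = 1\). The inclusion \(L(G) \subseteq L(\Gpsi{1})\) is trivial, and \(L(\Gpsi{1}) \subseteq \Wpsi\) holds by Lemma~\ref{lem:gpsi-properties}.

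\emph{Nonempty case.} Assume \(L(G) \neq \varnothing\); we also have \(L(G) \subseteq \Wpsi\) by hypothesis. We proceed in three steps.

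\begin{enumerate}
  \item Apply Corollary~\ref{cor:neutral-binary-normal-form}. This gives an equivalent reduced well-nested MCFG \(G'\) with branching factor at most \(2\) in which every nonterminal is neutral. That corollary combines Lemma~\ref{lem:binary-branching-reduction}, the Neutralization Theorem~\ref{thm:neutralization}, and Lemma~\ref{lem:reduced}.
  \item Let \(m\) be the dimension of \(G'\), let \(c = \max_{\pi} \len{\tc(\pi)}\) over the rules of \(G'\), and set \(r = m + c \geq 1\).
  \item The Simulation Theorem~\ref{thm:simulation} applies directly to \(G'\). It yields \(L(G) = L(G') \subseteq L(\Gpsi{r}) \subseteq \Wpsi\).
\end{enumerate}

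Two small points need checking. First, \(G'\) has a nonempty rule set, so the maximum defining \(c\) is taken over a nonempty set; this holds because \(L(G') \neq \varnothing\). Second, \(r \geq 1\) holds because \(m \geq 1\). Both are immediate.

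The substantive difficulty sits entirely inside the cited results. For the neutralization, the key issue is that the set of stored representative subsequences stays finite, which is handled by Higman's lemma. For the simulation, one must show that \(r = m + c\) components suffice to carry out the letter insertions of each \(M_\psi\)-decomposition of \(\tc(\pi)\) next to component boundaries. Since those results may be assumed, no real obstacle remains at this level, and the proof is a short composition of them.
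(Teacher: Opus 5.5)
Your proposal is correct and follows the paper's proof exactly: \(r = 1\) in the empty case via Lemma~\ref{lem:gpsi-properties}, and in the nonempty case Corollary~\ref{cor:neutral-binary-normal-form} followed by the Simulation Theorem~\ref{thm:simulation} with \(r = m + c\). Your remarks on why \(c\) is well defined and why \(r \geq 1\) are correct and slightly more explicit than the paper.
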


\begin{proof}
  If \(L(G) = \varnothing\), then \(r = 1\) satisfies the required inclusions by Lemma~\ref{lem:gpsi-properties}.
  Otherwise let \(G'\) be the equivalent grammar supplied by Corollary~\ref{cor:neutral-binary-normal-form}.
  By Theorem~\ref{thm:simulation}, there exists an integer \(r \geq 1\) such that
  \(L(G) = L(G') \subseteq L(\Gpsi{r}) \subseteq \Wpsi\).
\end{proof}

\begin{corollary}\label{cor:nonbranching-refinement}
  Let \(G\) be a non-branching well-nested MCFG over \(\Sigma\) such that \(L(G) \subseteq \Wpsi\).
  Then there exists \(r \geq 1\) such that \(L(G) \subseteq L(\Gnbpsi{r}) \subseteq \Wpsi\).
\end{corollary}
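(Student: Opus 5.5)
The plan is to mirror the proof of Theorem~\ref{thm:theorem-a} while tracking the non-branching property at every step. First, if \(L(G) = \varnothing\), take \(r = 1\): we have \(\varnothing \subseteq L(\Gnbpsi{1}) \subseteq \Wpsi\) by Lemma~\ref{lem:gpsi-properties}.

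Otherwise, \(G\) is non-branching, well-nested, \(L(G) \neq \varnothing\), and \(L(G) \subseteq \Wpsi\). We invoke the \enquote{moreover} clause of Corollary~\ref{cor:neutral-binary-normal-form}. This yields an equivalent reduced well-nested MCFG \(G'\) which is non-branching and all of whose nonterminals are neutral. Its branching factor is at most \(1 \leq 2\), so Theorem~\ref{thm:simulation} applies with \(r = m + c\), where \(m\) is the dimension of \(G'\) and \(c = \max_{\pi}\len{\tc(\pi)}\). Since \(G'\) is non-branching, the \enquote{moreover} clause of that theorem gives
\[
  L(G) = L(G') \subseteq L(\Gnbpsi{r}) \subseteq \Wpsi.
\]

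The only point requiring care is that the normal-form step preserves non-branching. Corollary~\ref{cor:neutral-binary-normal-form} handles this by skipping the binarization and applying only Lemma~\ref{lem:reduced}, Theorem~\ref{thm:neutralization}, and Lemma~\ref{lem:reduced} again. Neutralization does not increase the branching factor, and reduction only deletes rules. No step is a genuine obstacle, since the non-branching simulation by \(\Gnbpsi{r}\) is already part of Theorem~\ref{thm:simulation}.

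One minor check remains: \(r \geq 1\). This holds because \(m \geq 1\), as the start symbol has arity \(1\).
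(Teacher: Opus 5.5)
Your proposal is correct and follows the paper's proof essentially verbatim. The empty case takes \(r = 1\) via Lemma~\ref{lem:gpsi-properties}, and otherwise the non-branching clauses of Corollary~\ref{cor:neutral-binary-normal-form} and Theorem~\ref{thm:simulation} are chained exactly as you describe.
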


\begin{proof}
  If \(L(G) = \varnothing\), then \(r = 1\) satisfies the required inclusions by Lemma~\ref{lem:gpsi-properties}.
  Otherwise, the non-branching clause of Corollary~\ref{cor:neutral-binary-normal-form} yields an equivalent reduced non-branching well-nested MCFG whose nonterminals are neutral.
  The non-branching clause of Theorem~\ref{thm:simulation} yields the required value of \(r\).
\end{proof}

\begin{corollary}\label{cor:n-specialization}
  Let \(n \geq 2\) and let \(G\) be a well-nested MCFG over \(\Sigman{n}\).
  If \(L(G) \subseteq \MIX_n\), then there exists \(r \geq 1\) such that \(L(G) \subseteq L(\Gn{n}{r}) \subseteq \MIX_n\).
\end{corollary}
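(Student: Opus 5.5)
This is the special case \(\psi = \psin{n}\) of Theorem~\ref{thm:theorem-a}, so the plan is to check that \(\psin{n}\) meets the standing hypotheses of Section~\ref{sec:proof-theorem-a} and then unwind the notation.

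First, \(\Sigman{n}\) is a finite alphabet and \(d = n - 1 \geq 1\) because \(n \geq 2\). The map \(\psin{n}\colon \Sigman{n}^* \to \Z^{n-1}\) is surjective. Indeed, \(\psin{n}(\tta_i) = \bm{e}_i\) for \(1 \leq i \leq n-1\), and \(\psin{n}(\tta_n) = -(\bm{e}_1 + \dotsb + \bm{e}_{n-1})\). Hence \(-\bm{e}_i = \psin{n}(\tta_n \prod_{j \neq i, j \leq n-1} \tta_j)\), where the product is a concatenation of words. The image is therefore a submonoid of \(\Z^{n-1}\) that contains \(\pm \bm{e}_i\) for every \(i\), so it is the whole group.

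By definition, \(W_{\psin{n}} = \MIX_n\) and \(G_{\psin{n}}[r] = \Gn{n}{r}\). Applying Theorem~\ref{thm:theorem-a} to \(G\) with \(\psi = \psin{n}\) gives some \(r \geq 1\) with \(L(G) \subseteq L(\Gn{n}{r}) \subseteq \MIX_n\). This is exactly the claim.

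The proof is essentially immediate. The only point that needs care is surjectivity, which is settled by the negation computation above.
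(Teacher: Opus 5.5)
Your proposal is correct and follows the paper's proof, which simply applies Theorem~\ref{thm:theorem-a} with \(\psi = \psin{n}\). Your explicit check that \(\psin{n}\) is surjective (via \(-\bm{e}_i = \psin{n}\) of \(\tta_n\) followed by the \(\tta_j\) for \(j \neq i\)) is a detail the paper leaves implicit, and it is verified correctly.
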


\begin{proof}
  Apply Theorem~\ref{thm:theorem-a} with \(\psi = \psin{n}\).
\end{proof}

\begin{proposition}\label{prop:g3-tuples}
  For every \(r \geq 1\), the set of tuples derivable from \(I\) in \(\Gn{3}{r}\) is \(\mathcal{A}_r\).
  Thus,
  \[
    L(\Gn{3}{r}) = \concat(\mathcal{A}_r).
  \]
\end{proposition}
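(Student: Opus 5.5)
We prove the two inclusions between the set \(D_r\) of \(I\)-derivable tuples in \(\Gn{3}{r}\) and \(\mathcal{A}_r\) separately. In each direction we match the rule schemata of \(\Gn{3}{r}\) with the closure conditions defining \(\mathcal{A}_r\). The equality \(L(\Gn{3}{r}) = \concat(\mathcal{A}_r)\) then follows at once. The only rule with head \(S\) is the start rule \(S(x_1 \dotsm x_r) \gets I(x_1, \dotsc, x_r)\), so \(L(\Gn{3}{r}) = \{\concat(\vv{w}) \mid \vv{w} \in D_r\}\).

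For \(D_r \subseteq \mathcal{A}_r\), we argue by induction on derivation trees, checking that \(\mathcal{A}_r\) is closed under every \(I\)-rule.

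The nullary rule yields \((\varepsilon, \dotsc, \varepsilon)\), which lies in \(\mathcal{A}_r\) by condition~\ref{item:combinatorial-empty-tuple}.

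For letter insertion rules, Lemma~\ref{lem:psi-n-mpsi} gives \(M_{\psin{3}} = \Perm(\tta\ttb\ttc)\). Hence the rule \(I(u_1 x_1 v_1, \dotsc, u_r x_r v_r) \gets I(x_1, \dotsc, x_r)\) with \(u_1 v_1 \dotsm u_r v_r \in \Perm(\tta\ttb\ttc)\) is exactly condition~\ref{item:combinatorial-letter-insertion} with \(x_j = u_j\) and \(y_j = v_j\).

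For binary rules, the head is \(\vv{t} \sqsubseteq (x_1, \dotsc, x_i, y_1, \dotsc, y_r, x_{i+1}, \dotsc, x_r)\) and has \(r\) components. Substituting derivable tuples gives an element of \(\Regroup_r\) applied to the interleaved tuple, which lies in \(\mathcal{A}_r\) by condition~\ref{item:combinatorial-binary-combination}. One small point needs care: \(\sqsubseteq\) was defined via \(\Regroup_{r'}\) for some \(r'\), but since the head has exactly \(r\) components, \(r' = r\). Substitution commutes with regrouping because a regrouping only concatenates consecutive entries.

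For unary regrouping rules, the head tuple lies in \(\Regroup_r(x_1, \dotsc, x_r)\), so after substitution we are done by the closure property \eqref{eq:a-r-regrouping-closure}.

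For \(\mathcal{A}_r \subseteq D_r\), it suffices by minimality of \(\mathcal{A}_r\) to show that \(D_r\) satisfies conditions~\ref{item:combinatorial-empty-tuple}--\ref{item:combinatorial-binary-combination}. Condition~\ref{item:combinatorial-empty-tuple} is the nullary rule. Condition~\ref{item:combinatorial-letter-insertion} is a letter insertion rule, again by Lemma~\ref{lem:psi-n-mpsi}. For condition~\ref{item:combinatorial-binary-combination}, fix \(i\) and an arbitrary regrouping \(0 = j(0) \leq \dotsb \leq j(r) = 2r\) of the \(2r\) words. Form the \(r\)-tuple of variable words \(\vv{t}\) obtained by applying the same cut points to the variable sequence \((x_1, \dotsc, x_i, y_1, \dotsc, y_r, x_{i+1}, \dotsc, x_r)\). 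Then \(\vv{t}\) satisfies the side condition, so \(I(\vv{t}) \gets I(\vv{x}), I(\vv{y})\) is a binary rule, and applying it to the two derivable tuples produces the desired element.

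The only step needing genuine care is the bookkeeping identification of the syntactic side condition \(\vv{t} \sqsubseteq (\dotsc)\) on rule heads with the semantic operation \(\Regroup_r\) on word tuples. Concretely, we must verify that \(\sigma(\vv{t}) \in \Regroup_r(\sigma(\text{variable tuple}))\) whenever \(\vv{t} \in \Regroup_r(\text{variable tuple})\), and conversely that every element of \(\Regroup_r\) of the word tuple arises from some cut of the variable tuple. Both follow directly from Definition~\ref{def:regroup} using the same indices \(j(0), \dotsc, j(r)\). With that in hand, the rest is routine induction.
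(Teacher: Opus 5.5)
Your proposal is correct and follows the paper's proof: in both directions you match the nullary, letter insertion, and binary rules with the three defining conditions of \(\mathcal{A}_r\), you handle unary regrouping through the closure property \eqref{eq:a-r-regrouping-closure}, and you finish by noting that the start rule is the only rule with head \(S\). Your extra bookkeeping, namely that \(\sqsubseteq\) on an \(r\)-component head forces \(\Regroup_r\) and that substitution commutes with regrouping, only makes explicit what the paper leaves implicit.
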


\begin{proof}
  By Example~\ref{ex:mpsi-words}, \(M_{\psin{3}} = \Perm(\tta\ttb\ttc)\).
  Hence the nullary, letter insertion, and binary rules of \(\Gn{3}{r}\) realize,
  respectively, the first, second, and third conditions defining \(\mathcal{A}_r\).
  Hence induction on the definition of \(\mathcal{A}_r\) shows that every tuple in \(\mathcal{A}_r\) is derivable from \(I\).

  Conversely, induction on \(I\)-derivations shows that every tuple derivable from \(I\) belongs to \(\mathcal{A}_r\).
  The nullary, letter insertion, and binary cases follow directly from the three defining conditions.
  The unary regrouping case follows from the regrouping closure property in \eqref{eq:a-r-regrouping-closure}.

  Thus the tuples derivable from \(I\) are exactly the tuples in \(\mathcal{A}_r\).
  Since the start rule is the only rule with head \(S\),
  \(L(\Gn{3}{r}) = \concat(\mathcal{A}_r)\).
\end{proof}

\begin{corollary}[Combinatorial reformulation]\label{cor:combinatorial-reformulation}
  The following are equivalent:
  \begin{enumerate}[leftmargin=*]
    \item\label{item:combinatorial-reformulation-conjecture}
      \(\MIX\) is not a well-nested multiple context-free language;
    \item\label{item:combinatorial-reformulation-grammar}
      \(L(\Gn{3}{r}) \subsetneq \MIX\) for every \(r \geq 1\);
    \item\label{item:combinatorial-reformulation-tuples}
      \(\concat(\mathcal{A}_r) \subsetneq \MIX\) for every \(r \geq 1\).
  \end{enumerate}
\end{corollary}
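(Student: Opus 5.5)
The plan is to prove the cycle of implications (i) \(\Rightarrow\) (ii) \(\Rightarrow\) (i) and then (ii) \(\Leftrightarrow\) (iii). The equivalence (ii) \(\Leftrightarrow\) (iii) is immediate from Proposition~\ref{prop:g3-tuples}, which gives \(L(\Gn{3}{r}) = \concat(\mathcal{A}_r)\) for every \(r \geq 1\). So the substance lies entirely in (i) \(\Leftrightarrow\) (ii), and both directions will be argued by contraposition.

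For (i) \(\Rightarrow\) (ii), suppose (ii) fails, so that \(L(\Gn{3}{r}) = \MIX\) for some \(r\). Here I only need to observe that \(\Gn{3}{r}\) is a well-nested MCFG, which is the first assertion of Lemma~\ref{lem:gpsi-properties} applied with \(\psi = \psin{3}\). Then \(\MIX = L(\Gn{3}{r}) \in \MCFLwn\), so (i) fails. This direction uses nothing beyond the construction itself.

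For (ii) \(\Rightarrow\) (i), suppose \(\MIX \in \MCFLwn\), and let \(G\) be a well-nested MCFG over \(\Sigman{3}\) with \(L(G) = \MIX\). Since \(\MIX = W_{\psin{3}}\), we have \(L(G) \subseteq \MIX\). Corollary~\ref{cor:n-specialization} with \(n = 3\), which is Theorem~\ref{thm:theorem-a} applied to \(\psin{3}\), then yields \(r \geq 1\) with
\[
  \MIX = L(G) \subseteq L(\Gn{3}{r}) \subseteq \MIX .
\]
Hence \(L(\Gn{3}{r}) = \MIX\), so (ii) fails.

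I do not expect any genuine obstacle at this level. All the difficulty has already been absorbed into Theorem~\ref{thm:theorem-a} (neutralization together with simulation) and into Proposition~\ref{prop:g3-tuples}. The only point to check is that \(\psin{3}\) satisfies the standing hypotheses: it must be a surjective homomorphism onto \(\Z^2\). This holds because the images of \(\tta\) and \(\ttb\) generate \(\Z^2\) as a monoid together with \(\psin{3}(\ttc) = (-1,-1)\).
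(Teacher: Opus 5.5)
Your proposal is correct and matches the paper's proof essentially step for step: (ii)\(\Leftrightarrow\)(iii) comes from Proposition~\ref{prop:g3-tuples}, and (i)\(\Leftrightarrow\)(ii) uses the well-nestedness of \(\Gn{3}{r}\) in one direction and Corollary~\ref{cor:n-specialization} in the other. One step should be made explicit: going from ``(ii) fails'' to \(L(\Gn{3}{r}) = \MIX\) uses the inclusion \(L(\Gn{3}{r}) \subseteq \MIX\), which is also part of Lemma~\ref{lem:gpsi-properties}.
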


\begin{proof}
  We first show the equivalence of \ref{item:combinatorial-reformulation-conjecture}
  and \ref{item:combinatorial-reformulation-grammar}.
  By Lemma~\ref{lem:gpsi-properties}, \(L(\Gn{3}{r}) \subseteq \MIX\) for every \(r \geq 1\).
  If \(L(\Gn{3}{r}) = \MIX\) for some \(r\), then \(\MIX\) is a well-nested MCFL\@.
  Conversely, if \(\MIX\) is a well-nested MCFL, Corollary~\ref{cor:n-specialization} with \(n = 3\) implies
  \(\MIX \subseteq L(\Gn{3}{r}) \subseteq \MIX\) for some \(r\), hence \(L(\Gn{3}{r}) = \MIX\).
  The equivalence of \ref{item:combinatorial-reformulation-grammar}
  and \ref{item:combinatorial-reformulation-tuples} is Proposition~\ref{prop:g3-tuples}.
\end{proof}

\begin{remark}\label{rem:first-two-combinatorial-cases}
  For each \(w \in \Wpsi\), the singleton language \(\{w\}\) is a well-nested multiple context-free language.
  Applying Theorem~\ref{thm:theorem-a} to a grammar generating \(\{w\} \subseteq \Wpsi\) and using Lemma~\ref{lem:gpsi-properties}, we obtain
  \[
    \Wpsi = \bigcup_{r \geq 1} L(\Gpsi{r}).
  \]
  In particular, by Proposition~\ref{prop:g3-tuples},
  \[
    \MIX = \bigcup_{r \geq 1} L(\Gn{3}{r})
    = \bigcup_{r \geq 1} \concat(\mathcal{A}_r).
  \]
  Thus Corollary~\ref{cor:combinatorial-reformulation} reformulates
  the Kanazawa--Salvati conjecture as the assertion that the ascending chain
  \[
    \concat(\mathcal{A}_1) \subseteq \concat(\mathcal{A}_2) \subseteq \dotsb
  \]
  does not stabilize.
  By Lemma~\ref{lem:gpsi-properties},
  \(L(\Gn{3}{r}) \subseteq \MIX\) for every \(r \geq 1\).
  Since \(\Gn{3}{r}\) is a well-nested MCFG of dimension \(r\),
  Theorem~\ref{thm:kanazawa-salvati-dimension-two} shows that this inclusion is proper for \(r = 1, 2\).
  Thus \(\concat(\mathcal{A}_r) = L(\Gn{3}{r}) \subsetneq \MIX\) for \(r = 1, 2\).
  More explicitly, \(\mathcal{A}_1 = \{(\varepsilon)\}\) by definition, and hence
  \(\concat(\mathcal{A}_1) = \{\varepsilon\} \subsetneq \MIX\).
  For \(r = 2\), an exhaustive computer search shows that the word in Example~\ref{ex:combinatorial-witness}
  is lexicographically first among the shortest words
  in \(\MIX \setminus \concat(\mathcal{A}_2)\).
\end{remark}

\section{Proof of Theorem B}\label{sec:four-letter-consequence}

The proof of Theorem~\ref{thm:theorem-b} proceeds as follows.
By the result of Bishop--Elder--Evetts--Gallot--Levine, for every \(r \geq 1\),
there exists a word \(\mathcal{W}_r \in \MIX_2 \setminus L(\Gnbn{2}{r})\).
Put \(k = \len{\mathcal{W}_r}_{\tta} = \len{\mathcal{W}_r}_{\ttb}\).
Lemma~\ref{lem:binary-rule-elimination} shows that a \(\Gn{4}{r}\)-derivation of
\(\mathcal{W}_r\ttc^k\ttd^k\) can be replaced by a derivation without binary rules.
Projecting the latter derivation onto \(\Sigman{2}\) yields a
\(\Gnbn{2}{r}\)-derivation of \(\mathcal{W}_r\),
contrary to \(\mathcal{W}_r \notin L(\Gnbn{2}{r})\).

We begin with the result of Bishop--Elder--Evetts--Gallot--Levine.
They call a non-permuting, non-erasing, and non-branching MCFG an R-MCFG\@.
Their non-erasing condition is our non-deleting condition, and R-MCFGs are automatically well-nested;
hence the class of languages generated by R-MCFGs is \(\MCFLwn(1)\).
They prove that the language denoted by \(L\) in their paper, which is \(\MIX_2\) in our notation,
is not generated by any R-MCFG\@.
Thus their result can be stated in our terminology as follows.

\begin{theorem}[Bishop--Elder--Evetts--Gallot--Levine {\cite{bishop2026}}]\label{thm:beegl-nonbranching}
  \(\MIX_2 \notin \MCFLwn(1)\). \qed
\end{theorem}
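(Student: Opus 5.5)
Since this result is cited from Bishop--Elder--Evetts--Gallot--Levine, the following is my own plan for reproving it with the tools of this paper, not a reconstruction of their argument. The plan is to use Corollary~\ref{cor:nonbranching-refinement} with \(\psi = \psin{2}\). Suppose \(\MIX_2 \in \MCFLwn(1)\), generated by a non-branching well-nested MCFG \(G\). Then \(\MIX_2 \subseteq L(\Gnbn{2}{r})\) for some \(r \geq 1\). So it suffices to exhibit, for every \(r\), a word \(\mathcal{W}_r \in \MIX_2 \setminus L(\Gnbn{2}{r})\). By Lemma~\ref{lem:psi-n-mpsi}, \(M_{\psin{2}} = \{\tta\ttb, \ttb\tta\}\). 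Hence the only rules of \(\Gnbn{2}{r}\) are:
\begin{itemize}
\item the nullary rule;
\item letter insertions, which place one \(\tta\) and one \(\ttb\) at ends of components, with the \(\tta\) either before or after the \(\ttb\) in the concatenation;
\item regroupings, which merge consecutive components or add empty ones;
\item the start rule.
\end{itemize}
Since \(\Gnbn{2}{r}\) is non-branching, a derivation is a linear sequence of such steps. This gives a purely combinatorial description of \(L(\Gnbn{2}{r})\).

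Next I would read derivations backwards as an erasure game on the target word \(w\). At each moment \(w\)'s surviving letters are cut into at most \(r\) consecutive segments. A backward regrouping step may only refine the current segmentation, subject to the bound of \(r\) segments. A backward letter insertion deletes one \(\tta\) and one \(\ttb\), each sitting at an end of some segment. A cut therefore persists until the segment on one side of it becomes empty, and only then can it be dropped. Thus \(w \in L(\Gnbn{2}{r})\) if and only if \(w\) can be erased completely in this game while never exceeding \(r\) segments. I would state and check this equivalence carefully as a lemma, proved by induction on the length of derivations.

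The witness would be built recursively, with the aim that erasing it forces \(r+1\) simultaneously active segments. Cut points only move inward by erasure and cannot be withdrawn, and every deletion must remove an \(\tta\) and a \(\ttb\) together. So a word whose prefix-count walk (\(\len{\cdot}_\tta - \len{\cdot}_\ttb\)) has long, high excursions nested inside one another should force many fronts. Inside such an excursion the letters near any single front are all \(\tta\) (or all \(\ttb\)), so they must be paired with letters at a different front. Concretely I would try \(\mathcal{W}_1 = \tta^{n}\ttb^{n}\)-type blocks and \(\mathcal{W}_{k+1} = \tta^{N}\,\mathcal{W}_k\,\ttb^{N}\,\ttb^{N}\,\mathcal{W}_k\,\tta^{N}\) with \(N\) huge relative to \(\len{\mathcal{W}_k}\), or a similar alternating nesting. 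The inductive invariant would be: at the first moment when some copy of \(\mathcal{W}_k\) is touched, its erasure already requires \(k\) segments inside it. In addition, the surrounding long blocks \(\tta^N\), \(\ttb^N\) must still be present and must occupy at least one further segment that cannot be merged away.

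The main obstacle is exactly this last invariant. The player may interleave work on the two copies of \(\mathcal{W}_k\) and on the padding blocks in arbitrary order, so I need a potential argument that is robust under all schedules. This means choosing \(N\) so large that the padding cannot be exhausted before an inner copy is attacked, and showing that no single segment can supply both letters needed inside a copy. If the simple nesting fails, I would instead look for an Ogden/pumping-type property of non-branching well-nested grammars (each derivation step touches at most \(2r\) positions), applied to a word with \(r+1\) widely separated unbalanced regions.
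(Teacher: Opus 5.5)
Your reduction is sound: Corollary~\ref{cor:nonbranching-refinement} is proved independently of this statement, so it suffices to find, for each \(r\), a word in \(\MIX_2 \setminus L(\Gnbn{2}{r})\). Reading a \(\Gnbn{2}{r}\)-derivation backwards as a splitting-and-peeling process is also what Appendix~\ref{app:beegl-witnesses} does. The paper does not reprove the statement; it imports it from Bishop--Elder--Evetts--Gallot--Levine, and the part you leave open is essentially all of their argument. The genuine gap is that your inductive invariant (a touched copy of \(\mathcal{W}_k\) needs \(k\) segments of its own) is false for the family you propose. Write \(A = \mathcal{W}_2 = \tta^{N+n}\ttb^{2N+n}\tta^n\ttb^n\tta^N\) and \(\mathcal{W}_3 = \tta^M A \ttb^{2M} A \tta^M\) with \(M \geq N+n\), which covers your ``\(M\) huge'' regime. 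Split in the middle of \(\ttb^{2M}\) into \(X = \tta^M A \ttb^M\) and \(Y = \ttb^M A \tta^M\). First erase \(\tta^{M+N+n}\) from the left end of \(X\) and \(\tta^{M-N-n}\) from the right end of \(Y\), pairing them with \(\ttb^M\) from the right end of \(X\) and \(\ttb^M\) from the left end of \(Y\). This leaves \(X = \ttb^{2N+n}\tta^n\ttb^n\tta^N\) and \(Y = \tta^{N+n}\ttb^{2N+n}\tta^n\ttb^n\tta^{2N+n}\). Next erase the leading \(\ttb^{2N+n}\) of \(X\), pairing it with the trailing \(\tta^N\) of \(X\) and the leading \(\tta^{N+n}\) of \(Y\). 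This leaves \(X = \tta^n\ttb^n\) and \(Y = \ttb^{2N+n}\tta^n\ttb^n\tta^{2N+n}\), and each of these vanishes by one-segment peeling. Hence \(\mathcal{W}_3 \in L(\Gnbn{2}{2})\), although your scheme needs it to require three segments. The first copy of \(\mathcal{W}_2\) is consumed inside the single segment \(X\), because its inner \(\ttb\)-block is paired with letters of the sibling copy in \(Y\).

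This cross-feeding between sibling copies, and between copies and padding, is exactly what a per-copy segment count cannot control, and enlarging \(N\) or \(M\) does not help. The witnesses that do work, recorded in Appendix~\ref{app:beegl-witnesses}, are built differently. They use \(\mathcal{T}_n = \tta^{m^n}\mathcal{T}_{n-1}^{2m}\tta^{m^n}\), whose inner blocks are unbalanced (\(\psin{2}(\mathcal{T}_{n-1}) = -2m^{n-1}\)), with \(2m\) copies per level, \(m = 24k+7\), and nesting depth \(2k\) rather than \(k\). The lower bound there is a global invariant preserved by every backward step. Every tuple in the backward sequence admits a decomposition into heavy components (Bishop--Elder--Evetts--Gallot--Levine, Definitions~7.13, 7.20 and Proposition~7.31), whereas \((\varepsilon, \dotsc, \varepsilon)\) admits none. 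Your fallback of an Ogden-type property is not specified enough to replace this. As it stands, the proposal establishes only the reduction, and the theorem still rests on the missing lower-bound argument.
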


\begin{proposition}\label{prop:two-letter-witnesses}
  For every \(r \geq 1\), there exists a word \(\mathcal{W}_r \in \MIX_2 \setminus L(\Gnbn{2}{r})\).
\end{proposition}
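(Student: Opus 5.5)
We argue by contradiction, combining Theorem~\ref{thm:beegl-nonbranching} with the basic properties of the non-branching grammars recorded in Lemma~\ref{lem:gpsi-properties}. Fix \(r \geq 1\) and suppose that no such word exists, that is, \(\MIX_2 \subseteq L(\Gnbn{2}{r})\).

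By Lemma~\ref{lem:gpsi-properties} applied to \(\psi = \psin{2}\), which is surjective onto \(\Z^1\), we have the reverse inclusion \(L(\Gnbn{2}{r}) \subseteq W_{\psin{2}} = \MIX_2\). Hence \(L(\Gnbn{2}{r}) = \MIX_2\).

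The same lemma states that \(\Gnbn{2}{r}\) is a well-nested MCFG. By construction it is non-branching, since its rules are the nullary rule, the letter insertion rules, the unary regrouping rules and the start rule, and each of these has at most one body atom. Its rule set is finite by the counting argument in the proof of Lemma~\ref{lem:gpsi-properties}. Therefore \(\MIX_2 = L(\Gnbn{2}{r}) \in \MCFLwn(1)\), which contradicts Theorem~\ref{thm:beegl-nonbranching}. Consequently \(\MIX_2 \setminus L(\Gnbn{2}{r}) \neq \varnothing\), and we may take \(\mathcal{W}_r\) to be any element of this set.

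There is no real obstacle. The only point to check is that \(\Gnbn{2}{r}\) genuinely belongs to the class \(\MCFLwn(1)\) in the sense used by Theorem~\ref{thm:beegl-nonbranching}. It is finite, well-nested and of branching factor at most \(1\). It is also non-deleting and non-permuting, so it is an R-MCFG in the terminology of Bishop--Elder--Evetts--Gallot--Levine, which is the restriction their result requires.
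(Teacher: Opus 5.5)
Your proof is correct and matches the paper's argument exactly. You use Lemma~\ref{lem:gpsi-properties} to see that \(\Gnbn{2}{r}\) is a non-branching well-nested MCFG with \(L(\Gnbn{2}{r}) \subseteq \MIX_2\), and then Theorem~\ref{thm:beegl-nonbranching} forces this inclusion to be proper; your extra R-MCFG check is harmless but not needed, since the theorem is already stated for \(\MCFLwn(1)\).
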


\begin{proof}
  By Lemma~\ref{lem:gpsi-properties}, \(\Gnbn{2}{r}\) is a non-branching well-nested MCFG such that \(L(\Gnbn{2}{r}) \subseteq \MIX_2\).
  Theorem~\ref{thm:beegl-nonbranching} implies that this inclusion is proper.
\end{proof}

See Appendix~\ref{app:beegl-witnesses} for an explicit choice of \(\mathcal{W}_r\) from the construction of
Bishop--Elder--Evetts--Gallot--Levine.

The key observation behind the next lemma is that a word \(w\ttc^k\ttd^k\) satisfying its hypotheses has no nonempty proper factor in \(\MIX_4\).
Consequently, in a derivation of such a word, if a binary rule is used as close to the root as possible, one of its two premises must derive a tuple consisting entirely of empty words.

\begin{lemma}[Binary rule elimination]\label{lem:binary-rule-elimination}
  Let \(r \geq 1\), \(w \in \Sigman{2}^*\), and suppose that
  \(\len{w}_{\tta} = \len{w}_{\ttb} = k\).
  Then
  \[
    w \ttc^k \ttd^k \in L(\Gn{4}{r})
    \implies
    w \ttc^k \ttd^k \in L(\Gnbn{4}{r}).
  \]
\end{lemma}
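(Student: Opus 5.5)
Write \(W = w\ttc^k\ttd^k\). The first step is a combinatorial observation: \(W\) has no nonempty proper factor belonging to \(\MIX_4\). Indeed, a factor \(u\) of \(W\) with \(\len{u}_{\tta} = \len{u}_{\ttb} = \len{u}_{\ttc} = \len{u}_{\ttd} = j \geq 1\) must contain some \(\ttc\) and some \(\ttd\). Since all \(\ttc\)'s precede all \(\ttd\)'s in \(W\), and \(u\) is a factor, \(u\) contains the whole block between its first \(\ttc\) and its last \(\ttd\). It therefore ends inside \(\ttd^k\) and starts no later than the \(\ttc\)-block. To contain a \(\ttd\) while having exactly \(j\) copies of \(\ttc\) and \(j\) of \(\ttd\), \(u\) must contain the entire \(\ttc^k\) block (it must pass through to the \(\ttd\)'s), so \(j = k\). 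Then \(u\) contains \(k\) \(\tta\)'s, hence all of \(w\), and \(k\) \(\ttd\)'s, hence all of \(\ttd^k\). Thus \(u = W\).

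The second step locates a binary rule application in a \(\Gn{4}{r}\)-derivation \(T\) of \(S(W)\) and eliminates it. Among the binary-rule nodes, pick one closest to the root, say with conclusion \(I(\vv{t})\) and premises \(I(\vv{u})\), \(I(\vv{v})\). Above this node, the path to the root uses only letter insertion rules, unary regrouping rules and the start rule. These never split or reorder existing material: each component of \(\vv{t}\) occurs as a contiguous factor of the final word, and consecutive components stay adjacent only if no letters are inserted between them. The key claim is that \(\concat(\vv{v})\), namely the material of the inserted premise, together with the part of \(\vv{u}\) it sits between, appears inside \(W\) as a factor. Concretely, the binary rule places \(v_1,\dotsc,v_r\) consecutively in a gap of \(\vv{u}\) and then regroups without cutting. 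I need the factor \(v_1\dotsm v_r\) to remain contiguous in \(W\), and this is the delicate point. A regrouping may leave \(v_1,\dotsc,v_r\) in separate components, after which letter insertion can put letters between them.

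I therefore expect the main obstacle to be handling the non-contiguity. I would try to argue using the tracked positions: the \(v_j\)'s are factors of \(W\) in left-to-right order, and all material strictly between \(v_1\) and \(v_r\) in \(W\) was inserted later by letter insertion rules along the path. If that does not yield a contiguous neutral factor directly, I would take an innermost rather than outermost binary node, or strengthen the invariant to "the union of the positions of \(\vv{v}\) plus later insertions in its span is a neutral factor". A fallback would be a neutrality count: the span between \(v_1\) and \(v_r\) decomposes into \(\vv{v}\) (neutral) plus whole \(M_\psi\)-words from insertions that lie entirely within the span. Insertions straddling the boundary would require care, so I would choose the span to be the smallest interval closed under the letter-insertion steps touching it. Once a nonempty proper neutral factor is obtained, the first step gives a contradiction unless \(\concat(\vv{v}) = \varepsilon\) or \(\concat(\vv{u}) = \varepsilon\) (the latter forcing \(\vv{v}\) to carry everything).

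Finally, in either case one premise, say \(\vv{v}\), is all-empty, so \(\vv{t} \sqsubseteq \vv{u}\) up to empty components. I would then replace the subtree by \(\vv{u}\)'s subtree followed by a unary regrouping rule, symmetrically using \(\vv{v}\) if \(\vv{u}\) is empty. This yields a derivation of \(S(W)\) with strictly fewer binary nodes, using the context-plugging Lemma~\ref{lem:plugging-derivation-context}. Induction on the number of binary nodes concludes \(W \in L(\Gnbn{4}{r})\).
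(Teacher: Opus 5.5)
Your overall plan matches the paper: take a binary node closest to the root, show that one premise is all-empty, replace that node by a unary regrouping rule, and decrease the number of binary nodes. Your first and last steps are essentially the paper's. The middle step, however, has a genuine gap.

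Your ``key claim'' is that the material of the inserted premise stays a contiguous factor of \(W\). This is false in general for \(r \geq 2\). Once the regrouping leaves two inserted components in different components of the conclusion, a later letter insertion rule can place letters between them. For example, with \(r = 2\):
\begin{itemize}
\item the binary rule can insert \((\tta\ttb, \ttc\ttd)\) into the empty pair and keep the conclusion \((\tta\ttb, \ttc\ttd)\);
\item the letter insertion rule \(I(\tta x_1 \ttb, \ttc x_2 \ttd) \gets I(x_1, x_2)\) and the start rule then give \(\tta\tta\ttb\ttb\ttc\ttc\ttd\ttd\).
\end{itemize}
In that word the inserted material \(\tta\ttb\ttc\ttd\) is only a subsequence, split by the later \(\ttb\ttc\). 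Your fallbacks do not close the gap. Moving to an innermost binary node does not address the issue. The smallest interval closed under the insertion steps touching the span can swallow material of the other premise, which breaks the neutrality count, or grow to all of \(W\), which gives no contradiction.

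The missing idea is to run your first-step argument on the concatenation of the binary node's conclusion instead of on \(W\).
\begin{itemize}
\item Between that node and the root there are only letter insertion rules (each adding one copy of each of \(\tta, \ttb, \ttc, \ttd\)), unary regrouping rules, and the start rule. None of these deletes or reorders letters, so this concatenation is a subsequence of \(W\).
\item By Lemma~\ref{lem:gpsi-properties} it lies in \(\MIX_4\). Hence it has the form \(w_0 \ttc^m \ttd^m\) with \(w_0 \in \Sigman{2}^*\) and \(\len{w_0}_{\tta} = \len{w_0}_{\ttb} = m\).
\item A binary rule inserts one premise's components consecutively into a gap of the other and then regroups without cutting. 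So this concatenation equals \(s' t s''\), where \(t\) is the concatenation of the inserted premise and \(s' s''\) is that of the other premise.
\end{itemize}
Thus \(t\) is a genuine factor of a word of exactly the shape your first step handles. That step then forces \(t = \varepsilon\) or \(s' s'' = \varepsilon\), with no tracking of positions in \(W\) needed.

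A small correction to the first step: the factor contains the whole \(\ttc\)-block because it contains an \(\tta\), which lies in \(w\), as well as a \(\ttd\). Containing a \(\ttc\) and a \(\ttd\) alone would also allow factors such as \(\ttc^j \ttd^j\).
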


\begin{proof}
  Put \(z = w \ttc^k \ttd^k\).
  Among all derivation trees \(T\) satisfying \(\vdash_{\Gn{4}{r}} T : S(z)\), choose one with the minimum number of occurrences of binary rules.
  If \(T\) contains no binary rule, then \(T\) is already a \(\Gnbn{4}{r}\)-derivation tree of \(z\).

  Otherwise, choose an occurrence of a binary rule closest to the root of \(T\).
  Then there exist a binary rule \(\pi\), a derivation tree context \(U\), and derivation trees \(T_0, T_1\) such that
  \(T = U[\pi(T_0, T_1)]\) and \(U\) contains no binary rule.
  For some tuples \(\vv*{u}{0}\), \(\vv*{u}{1}\), and \(\vv{v}\), we have
  \[
    \vdash_{\Gn{4}{r}} T_0 : I(\vv*{u}{0}),
    \qquad
    \vdash_{\Gn{4}{r}} T_1 : I(\vv*{u}{1}),
    \qquad
    \vdash_{\Gn{4}{r}} \pi(T_0, T_1) : I(\vv{v}).
  \]

  The start and unary regrouping rules occurring in \(U\) preserve concatenation,
  while each letter insertion rule inserts one occurrence of each of \(\tta, \ttb, \ttc, \ttd\).
  Hence there exist \(w_0 \in \Sigman{2}^*\) and \(m \geq 0\) such that
  \(\concat(\vv{v}) = w_0 \ttc^m \ttd^m\) and \(\len{w_0}_{\tta} = \len{w_0}_{\ttb} = m\).

  Put \(s = \concat(\vv*{u}{0})\) and \(t = \concat(\vv*{u}{1})\).
  By the definition of the binary rules, there exist words \(s'\) and \(s''\) such that
  \(\concat(\vv{v}) = s' t s''\) and \(s = s' s''\).
  By Lemma~\ref{lem:gpsi-properties}, \(s, t \in \MIX_4\).

  If \(t \neq \varepsilon\), then \(t\) contains every letter of \(\Sigman{4}\),
  so it contains a letter from \(w_0\) and a letter from the final \(\ttd^m\) block.
  Since \(t\) is a factor of \(w_0 \ttc^m \ttd^m\), it therefore contains the whole \(\ttc^m\) block.
  Hence \(s = \varepsilon\), since \(s \in \MIX_4\) and \(s\) contains no \(\ttc\).
  Thus either \(s = \varepsilon\) or \(t = \varepsilon\).

  Put \(i = 0\) if \(t = \varepsilon\), and \(i = 1\) otherwise.
  Then every component of \(\vv*{u}{1 - i}\) is \(\varepsilon\).
  By the definition of \(\pi\), \(\vv{v}\) is a regrouping of a tuple obtained from \(\vv*{u}{i}\) by inserting the components of \(\vv*{u}{1 - i}\).
  All the inserted components are empty, so the resulting tuple is a regrouping of \(\vv*{u}{i}\).
  By transitivity of \(\sqsubseteq\), \(\vv{v} \sqsubseteq \vv*{u}{i}\).
  Hence there is a unary regrouping rule \(\rho\) such that \(\vdash_{\Gn{4}{r}} \rho(T_i) : I(\vv{v})\).
  It follows that \(\vdash_{\Gn{4}{r}} U[\rho(T_i)] : S(z)\).
  Since \(U\) contains no binary rule and \(\rho\) is unary, the derivation tree \(U[\rho(T_i)]\) contains fewer occurrences of binary rules than
  \(T = U[\pi(T_0, T_1)]\), contradicting the choice of \(T\).
  Therefore \(T\) contains no binary rule and is a \(\Gnbn{4}{r}\)-derivation tree of \(z\).
\end{proof}

\begin{theoremB}
  The language \(\MIX_4\) is not a well-nested multiple context-free language.
  Equivalently, \(O_3\) is not a well-nested multiple context-free language.
\end{theoremB}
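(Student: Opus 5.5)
We argue by contradiction, following the outline at the start of this section. Suppose \(\MIX_4 \in \MCFLwn\), and let \(G\) be a well-nested MCFG over \(\Sigman{4}\) with \(L(G) = \MIX_4\). Corollary~\ref{cor:n-specialization} with \(n = 4\) gives \(r \geq 1\) such that \(\MIX_4 \subseteq L(\Gn{4}{r}) \subseteq \MIX_4\), so \(L(\Gn{4}{r}) = \MIX_4\). For this \(r\), Proposition~\ref{prop:two-letter-witnesses} provides a witness \(\mathcal{W}_r \in \MIX_2 \setminus L(\Gnbn{2}{r})\). Put \(k = \len{\mathcal{W}_r}_{\tta} = \len{\mathcal{W}_r}_{\ttb}\) and \(z = \mathcal{W}_r \ttc^k \ttd^k\). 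Then \(z \in \MIX_4 = L(\Gn{4}{r})\). Since \(\mathcal{W}_r \in \Sigman{2}^*\) has \(k\) occurrences each of \(\tta\) and \(\ttb\), Lemma~\ref{lem:binary-rule-elimination} yields \(z \in L(\Gnbn{4}{r})\).

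The remaining step is to project a \(\Gnbn{4}{r}\)-derivation of \(z\) onto \(\Sigman{2}\), using \(\proj_{\Sigman{2}}\) componentwise on every tuple. We verify that each rule of \(\Gnbn{4}{r}\) maps to a rule of \(\Gnbn{2}{r}\).
\begin{itemize}
  \item The nullary rule and the start rule map to themselves.
  \item A unary regrouping rule stays a unary regrouping rule, because componentwise homomorphisms preserve \(\sqsubseteq\) and concatenation.
  \item A letter insertion rule \(I(u_1 x_1 v_1, \dotsc, u_r x_r v_r) \gets I(x_1, \dotsc, x_r)\) has \(u_1 v_1 \dotsm u_r v_r \in \Perm(\tta\ttb\ttc\ttd)\) by Lemma~\ref{lem:psi-n-mpsi}. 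After deleting \(\ttc\) and \(\ttd\), the sequence of insertions is a permutation of \(\tta\ttb\). By Lemma~\ref{lem:psi-n-mpsi} again, this is an \(M_{\psin{2}}\)-word, so the projected rule is a letter insertion rule of \(\Gnbn{2}{r}\).
\end{itemize}
By induction on derivations, if \(\vdash_{\Gnbn{4}{r}} I(\vv{u})\) then \(\vdash_{\Gnbn{2}{r}} I(\proj_{\Sigman{2}}(\vv{u}))\), and the same holds for \(S\).

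Applying this to \(z\) gives \(\proj_{\Sigman{2}}(z) = \mathcal{W}_r \in L(\Gnbn{2}{r})\), which contradicts the choice of \(\mathcal{W}_r\). Hence \(\MIX_4 \notin \MCFLwn\). The statement about \(O_3\) then follows from \eqref{eq:mix-origin-crossing-wn-equivalence} with \(n = 3\).

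The substantive work is already done in Lemma~\ref{lem:binary-rule-elimination} and in the Bishop--Elder--Evetts--Gallot--Levine result. The only point here that needs care is the projection step, specifically checking that every projected letter insertion rule is a genuine rule of \(\Gnbn{2}{r}\). This holds because each such rule inserts exactly one \(\tta\) and one \(\ttb\).
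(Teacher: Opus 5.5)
Your proposal is correct and follows the paper's proof essentially step for step. It uses Corollary~\ref{cor:n-specialization} with \(n = 4\), the witness \(\mathcal{W}_r\), Lemma~\ref{lem:binary-rule-elimination}, the rule-by-rule check that \(\proj_{\Sigman{2}}\) turns a \(\Gnbn{4}{r}\)-derivation into a \(\Gnbn{2}{r}\)-derivation, and the transfer to \(O_3\) via \eqref{eq:mix-origin-crossing-wn-equivalence}.
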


\begin{proof}
  Assume, for contradiction, that \(\MIX_4\) is a well-nested MCFL\@.
  By Corollary~\ref{cor:n-specialization} with \(n = 4\), there exists \(r \geq 1\) such that \(L(\Gn{4}{r}) = \MIX_4\).
  Choose \(\mathcal{W}_r \in \MIX_2 \setminus L(\Gnbn{2}{r})\) from Proposition~\ref{prop:two-letter-witnesses}.
  Put \(k = \len{\mathcal{W}_r}_{\tta} = \len{\mathcal{W}_r}_{\ttb}\).
  Then \(\mathcal{W}_r \ttc^k \ttd^k \in \MIX_4 = L(\Gn{4}{r})\).
  By Lemma~\ref{lem:binary-rule-elimination},
  \(\mathcal{W}_r \ttc^k \ttd^k \in L(\Gnbn{4}{r})\).

  Let \(h = \proj_{\Sigman{2}}\colon \Sigman{4}^* \to \Sigman{2}^*\).
  Apply \(h\) to every tuple occurring in a \(\Gnbn{4}{r}\)-derivation of \(\mathcal{W}_r \ttc^k \ttd^k\).
  The nullary and start rules commute with \(h\).
  Since componentwise extensions of monoid homomorphisms preserve \(\sqsubseteq\), an application of a unary regrouping rule projects to an application of a unary regrouping rule.
  Each letter insertion rule projects to a letter insertion rule of \(\Gnbn{2}{r}\), because \(h\) erases \(\ttc, \ttd\)
  and leaves one \(\tta\) and one \(\ttb\) at distinct positions.
  Hence \(\mathcal{W}_r = h(\mathcal{W}_r \ttc^k \ttd^k) \in L(\Gnbn{2}{r})\), a contradiction.
  Therefore \(\MIX_4\) is not a well-nested multiple context-free language.
  The equivalent statement for \(O_3\) follows from \eqref{eq:mix-origin-crossing-wn-equivalence} with \(n = 3\).
\end{proof}

\begin{corollary}\label{cor:mix-n-not-well-nested}
  The language \(\MIX_n\) is not a well-nested multiple context-free language for every \(n \geq 4\),
  and \(O_n\) is not a well-nested multiple context-free language for every \(n \geq 3\).
\end{corollary}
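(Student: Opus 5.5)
The plan is to reduce to the two base cases \(\MIX_4\) (Theorem~\ref{thm:theorem-b}) and \(O_3\) (also Theorem~\ref{thm:theorem-b}), using closure of \(\MCFLwn\) under rational transductions, in particular under inverse homomorphisms and intersection with regular languages (\cite{seki-kato2008}*{Theorem~15}).

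For \(\MIX_n\) with \(n \geq 4\), suppose \(\MIX_n \in \MCFLwn\). Intersect with the regular language \(\{\tta_1, \tta_2, \tta_3, \tta_n\}^*\). The result is the set of words over these four letters with equal numbers of each. Renaming \(\tta_n\) to \(\ttd\) is a homomorphism, equivalently an inverse homomorphism of a bijective letter map, so we obtain \(\MIX_4 \in \MCFLwn\). This contradicts Theorem~\ref{thm:theorem-b}.

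For \(O_n\) with \(n \geq 3\), suppose \(O_n \in \MCFLwn\). Intersect \(O_n\) with \(\Delta_3^* \subseteq \Delta_n^*\), which is regular; the result is exactly \(O_3\). Hence \(O_3 \in \MCFLwn\), contradicting Theorem~\ref{thm:theorem-b}. Alternatively, one can use the equivalence \eqref{eq:mix-origin-crossing-wn-equivalence} with the result for \(\MIX_{n+1}\): for \(n \geq 3\) we have \(n + 1 \geq 4\).

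The only point needing care is that every intersection and renaming is a rational transduction, so that the cited closure property applies to every dimension \(m\) and hence to the union \(\MCFLwn\). This is routine, and there is no substantive obstacle.
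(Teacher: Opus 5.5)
Your argument for \(\MIX_n\) fails for every \(n \geq 5\). In \(\MIX_n\) all \(n\) letters must occur equally often. A word in \(\MIX_n \cap \{\tta_1, \tta_2, \tta_3, \tta_n\}^*\) contains none of \(\tta_4, \dotsc, \tta_{n-1}\), so \(\len{w}_{\tta_4} = 0\), and this forces every letter count to be \(0\). The intersection is therefore \(\{\varepsilon\}\), not a copy of \(\MIX_4\), and no contradiction arises. For \(n = 4\) the step is valid but vacuous.

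Restricting to a subalphabet works for \(O_n\) because its constraints \(\len{w}_{a_i} = \len{w}_{\bar{a}_i}\) are independent for each \(i\). It does not work for \(\MIX_n\), where every letter is tied to every other. The paper instead merges the surplus letters with an inverse homomorphism: \(h_n(\tta_i) = \tta_i\) for \(i \leq 3\) and \(h_n(\tta_4) = \tta_4\tta_5\dotsm\tta_n\), so that \(\MIX_4 = h_n^{-1}(\MIX_n)\). Equivalently, intersect with \(\{\tta_1, \tta_2, \tta_3, \tta_4\tta_5\dotsm\tta_n\}^*\) and then erase \(\tta_5, \dotsc, \tta_n\).

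Your \(O_n\) argument, by contrast, is correct: \(O_n \cap \Delta_3^* = O_3\), and \(O_3 \notin \MCFLwn\) is part of Theorem~\ref{thm:theorem-b}. This differs from the paper, which derives the \(O_n\) claim from \(\MIX_{n+1}\) via \eqref{eq:mix-origin-crossing-wn-equivalence}. Your route has the advantage of not depending on the \(\MIX\) case at all.

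It also gives the cheapest repair of the first half. For \(n \geq 4\) you have \(n - 1 \geq 3\), so \(O_{n-1} \notin \MCFLwn\). Then \eqref{eq:mix-origin-crossing-wn-equivalence} with \(n - 1\) in place of \(n\) yields \(\MIX_n \notin \MCFLwn\). Note that your \emph{alternative} derivation of \(O_n\) from \(\MIX_{n+1}\) inherits the flaw whenever \(n \geq 4\), so you should drop it or reorder the argument as just described.
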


\begin{proof}
  The case \(n = 4\) is Theorem~\ref{thm:theorem-b}, so suppose that \(n \geq 5\).
  Let \(h_n\colon \Sigman{4}^* \to \Sigman{n}^*\) be the monoid homomorphism defined by
  \(h_n(\tta_i) = \tta_i\) for \(1 \leq i \leq 3\) and \(h_n(\tta_4) = \tta_4\tta_5 \dotsm \tta_n\).
  Then \(\MIX_4 = h_n^{-1}(\MIX_n)\).
  The class \(\MCFLwn\) is closed under rational transductions and hence under inverse homomorphisms
  \cite{seki-kato2008}*{Theorem~15}.
  Thus, if \(\MIX_n \in \MCFLwn\), then
  \(\MIX_4 = h_n^{-1}(\MIX_n) \in \MCFLwn\), contrary to Theorem~\ref{thm:theorem-b}.
  The assertion about \(O_n\) follows from the assertion about \(\MIX_{n + 1}\)
  and \eqref{eq:mix-origin-crossing-wn-equivalence}.
\end{proof}

\begin{remark}
  The words \(w\ttc^k\ttd^k\) in Lemma~\ref{lem:binary-rule-elimination} have the following geometric interpretation.
  Every word \(w\ttc^k\ttd^k\) as in the lemma represents
  a closed walk in \(\Z^3\) without self-intersections.
  Let \(H\) be the plane in \(\mathbb{R}^3\) spanned by \((1, 1, 0)\) and \((0, 0, 1)\).
  Orthogonal projection onto \(H\) maps the step vectors
  \(\psin{4}(\tta)\), \(\psin{4}(\ttb)\), \(\psin{4}(\ttc)\), and \(\psin{4}(\ttd)\)
  to \((1 / 2, 1 / 2, 0)\), \((1 / 2, 1 / 2, 0)\), \((0, 0, 1)\), and \((-1, -1, -1)\), respectively.
  The projected walk therefore traverses the three sides of the triangle with vertices
  \((0, 0, 0)\), \((k, k, 0)\), and \((k, k, k)\), in cyclic order and without backtracking.
  Hence it has no self-intersections, and neither does the original walk.
  Equivalently, \(w\ttc^k\ttd^k\) has no nonempty proper factor belonging to \(\MIX_4\).

  Consider a tuple derivable from \(I\) whose concatenation is
  \(w_0\ttc^m\ttd^m\), where \(w_0 \in \Sigman{2}^*\) and
  \(\len{w_0}_{\tta} = \len{w_0}_{\ttb} = m\).
  If the last rule used in its derivation is a binary rule,
  the concatenation of one premise tuple is a factor of \(w_0\ttc^m\ttd^m\),
  and deleting this factor yields the concatenation of the other premise tuple.
  By Lemma~\ref{lem:gpsi-properties}, this factor belongs to \(\MIX_4\).
  Hence it is either empty or the entire word, and every component of at least
  one premise tuple is therefore \(\varepsilon\).
  If the last rule used in the derivation is a letter insertion rule,
  then \(m \geq 1\), and the concatenation of its premise tuple is
  \(w_1\ttc^{m - 1}\ttd^{m - 1}\) for some \(w_1 \in \Sigman{2}^*\) such that
  \(\len{w_1}_{\tta} = \len{w_1}_{\ttb} = m - 1\).

  The three-letter analogue of Lemma~\ref{lem:binary-rule-elimination} is false,
  so the argument used to prove Theorem~\ref{thm:theorem-b} does not extend directly
  to the Kanazawa--Salvati conjecture.
  Proposition~\ref{prop:binary-rule-elimination-three-letters}
  in Appendix~\ref{app:beegl-witnesses} gives an explicit counterexample.
\end{remark}

\section{Further background}
\label{sec:further-background}

This section places the two main results in the broader landscape of word problems and multiple context-free languages.
It is independent of the proofs above.

\subsection{Word problems and classes of formal languages}
\label{sec:group-word-problems}

For a choice of generators \(\pi\colon \Sigma^* \to G\),
the syntactic monoid of the word problem \(\pi^{-1}(1_G)\) is isomorphic to \(G\)
\cite{holt-rees-rover2017}*{3.4.9 Proposition}.
Thus the word problem contains enough information to recover \(G\) up to isomorphism.
Together with the invariance under changes of generators discussed in
Subsection~\ref{sec:rational-equivalence}, this makes it natural,
for a language class \(\mathcal{C}\) closed under inverse homomorphisms,
to ask which finitely generated groups have word problems in \(\mathcal{C}\).
Anisimov proved that the word problem of a finitely generated group \(G\) is regular
if and only if \(G\) is finite \cite{anisimov1971}.
Combining Muller and Schupp's theorem with Dunwoody's accessibility theorem shows
that the word problem of a finitely generated group \(G\) is context-free
if and only if \(G\) is virtually free, that is, \(G\) has a free subgroup of finite index
\cites{muller-schupp1983,dunwoody1985accessibility}.

Beyond the regular and context-free cases, word problems have been studied
for several other language and automaton classes.
Brough proved that every group virtually isomorphic to a finitely generated subgroup
of a direct product of free groups has a poly-context-free word problem,
and conjectured the converse \cite{brough2014}.
Elder, Kambites, and Ostheimer proved that a finitely generated group has
a word problem accepted by a blind \(n\)-counter automaton if and only if
it is virtually free abelian of rank at most \(n\)
\cite{elder-kambites-ostheimer2008}*{Theorem~1}.
Other counter models, including one-counter automata, have also been studied
\cite{holt-owens-thomas2008}.
Rino Nesin and Thomas proved that a finitely generated group has a word problem
that is a terminal Petri net language if and only if it is virtually abelian
\cite{rino-nesin-thomas2015}*{Theorem~1}.

The classification of groups with indexed word problems remains open.
Gilman's shrinking lemma was motivated by the study of groups for which the word problem is indexed
\cite{gilman1996shrinking}.
Gilman and Shapiro proved that an accessible group whose word problem is recognized
by a deterministic nested stack automaton with limited erasing,
accepting by final state and empty stack, is virtually free
\cite{gilman-shapiro1998}*{Theorem~1.1}.
This theorem concerns a restricted nested-stack model and does not settle the question
for arbitrary indexed word problems.
Gilman later remarked that it did not seem to be known whether the word problem
of \(\Z \times \Z\) is indexed \cite{gilman2005formal}.
Kanazawa and Salvati subsequently pointed out the connection between this question
and the indexedness of \(\MIX\)
\cite{kanazawa-salvati2012}*{Section~1, footnote~3}.
Elder explicitly conjectured a negative answer and proposed a strategy based on intersecting
the word problem with a particular regular language
\cite{elder2007gautomata}.
To our knowledge, no finitely generated group that is not virtually free is known
to have an indexed word problem
\cites{gilman-kropholler-schleimer2018,nyberg-brodda2023freeproducts}.
Since \(\MCFLwn \subseteq \IL\), no finitely generated group that is not virtually free
is currently known to have a word problem in \(\MCFLwn\) either.

For multiple context-free languages, Theorem~\ref{thm:salvati-dimension-two}
shows that a word problem of \(\Z^2\) is a \(2\)-MCFL\@.
Ho proved more generally that \(O_n\) is a multiple context-free language
for every \(n \geq 1\) \cite{ho2018}*{Theorem~2}.
More precisely, Theorem~\ref{thm:origin-crossing-dimension} shows that
\(O_n \in \mMCFL{n} \setminus \mMCFL{(n - 1)}\) for every \(n \geq 2\).
Kropholler and Spriano proved that if all vertex groups in a finite graph of groups
have multiple context-free word problems and all edge groups are finite,
then the fundamental group also has a multiple context-free word problem
\cite{kropholler-spriano2019}*{Theorem~A}.
Duncan, Elder, Frenkel, and Lyu proved that the rational subset membership problem
is decidable for every finitely generated group with a multiple context-free word problem
\cite{duncan-elder-frenkel-lyu2026}*{Lemma~6.1}.
Theorem~\ref{thm:theorem-b} and Corollary~\ref{cor:mix-n-not-well-nested}
show that no free abelian group of rank at least \(3\) has a word problem in \(\MCFLwn\).

\subsection{Equivalent formalisms and terminology}
\label{sec:equivalent-formalisms}

Multiple context-free languages are exactly the languages generated by string-based linear context-free rewriting systems
and by finite-copying parallel rewriting systems
\cite{rambow-satta1999}*{Sections~1 and 5}.
Denkinger characterized multiple context-free languages in terms of restricted tree stack automata:
for every \(m \geq 1\), the \(m\)-restricted tree stack automata recognize exactly the languages in \(\mMCFL{m}\)
\cite{denkinger2016automata}*{Theorem~4.12}.
Well-nested multiple context-free languages form a subclass of the multiple context-free languages
and are exactly the languages generated by coupled-context-free grammars
and by non-duplicating macro grammars, also called variable-linear macro grammars
\cite{kanazawa2009pumping}*{Section~3}.
Hotz and Pitsch gave a uniform parsing method for the coupled-context-free grammar hierarchy
\cite{hotz-pitsch1996}.
Sorokin introduced displacement context-free grammars and proved that they generate exactly the well-nested multiple context-free languages
\cite{sorokin2013normalforms}.
In the same paper, he established Chomsky- and Greibach-style normal forms for displacement context-free grammars.
Vijay-Shanker and Weir proved that tree-adjoining grammars, head grammars, linear indexed grammars, and the combinatory categorial grammars considered in their paper generate exactly the same class of languages
\cite{vijay-shanker-weir1994}.
For each \(m \geq 1\), the class \(\mMCFLwn{m}\) coincides with the class of languages generated
by variable-linear macro grammars of arity at most \(m - 1\)
\cite{kanazawa2009pumping}.

Table~\ref{tab:mcfg-terminology} summarizes the terminology used for MCFG parameters in related literature.
In papers on LCFRS, arity and branching are often called fan-out and rank, respectively \cites{rambow-satta1999,gomez-rodriguez2010efficient}.

\begin{table}[htb]
  \scriptsize
  \caption{Terminology and notation for the parameters called arity, dimension, branching, and branching factor in this paper.
  A dash means that no corresponding term or notation is recorded here from the cited source.}
  \label{tab:mcfg-terminology}
  \centering
  \begin{tabular}{l|l|l|l|l}
    \hline
    Paper & arity of \(A\) & dimension of \(G\) & branching of \(\pi\) & branching factor of \(G\)\\
    \hline
    \hline
    Seki et al.~\cite{seki1991}
    & \(d(A)\)
    & \(m\) in \(m\)-mcfg
    & \(a(f)\)
    & -- \\
    \hline
    Seki--Kato~\cite{seki-kato2008}
    & \(\dim(A)\)
    & \(\dim(G)\)
    & \(\operatorname{rank}(f)\)
    & \(\operatorname{rank}(G)\) \\
    \hline
    Kanazawa~\cite{kanazawa2009pumping}
    & rank of \(A\)
    & \(m\) in \(m\)-MCFG
    & rank of \(\pi\)
    & -- \\
    \hline
    Kanazawa--Salvati~\cite{kanazawa-salvati2010copying}
    & arity of \(A\)
    & dimension of \(G\)
    & --
    & branching factor (or rank)\\
    \hline
    Yoshinaka et al.~\cite{yoshinaka2010}
    & \(\dim(A)\)
    & \(q\) in \(q\)-MCFG\((r)\)
    & \(\operatorname{rank}(\pi)\)
    & \(r\) in \(q\)-MCFG\((r)\) \\
    \hline
    Kanazawa et al.~\cite{kanazawa-michaelis-salvati-yoshinaka2011}
    & \(\operatorname{arity}(A)\)
    & dimension of \(G\)
    & \(\operatorname{rank}(p)\)
    & \(\operatorname{rank}(G)\) \\
    \hline
    Kanazawa et al.~\cite{kanazawa-kobele-michaelis-salvati-yoshinaka2014}
    & dimension of \(A\)
    & \(m\) in \(m\)-MCFG
    & rank of \(\pi\)
    & -- \\
    \hline
    Salvati~\cite{salvati2015}
    & arity of \(A\)
    & \(k\) in \(k\)-MCFG
    & --
    & -- \\
    \hline
    Denkinger~\cite{denkinger2016automata}
    & fan-out of \(f\)
    & \(k\) in \(k\)-MCFG
    & --
    & -- \\
    \hline
    Nederhof~\cite{nederhof2017}
    & fanout of \(A\)
    & fanout of \(G\)
    & rank of \(\pi\)
    & rank of \(G\) \\
    \hline
    Kanazawa~\cite{kanazawa2019ogden}
    & dimension of \(A\)
    & \(m\) in \(m\)-MCFG
    & --
    & \(r\)-ary branching \\
    \hline
    Gebhardt et al.~\cite{gebhardt-meunier-salvati2022}
    & rank \(\rho(A)\)
    & dimension of \(G\)
    & --
    & -- \\
    \hline
    Bishop et al.~\cite{bishop2026}
    & rank of \(H\)
    & --
    & --
    & -- \\
    \hline
    Duncan et al.~\cite{duncan-elder-frenkel-lyu2026}
    & rank of \(A\)
    & \(k\) in \(k\)-MCF
    & --
    & -- \\
    \hline
    Rambow--Satta~\cite{rambow-satta1999}
    & fan-out \(\varphi(A)\)
    & fan-out \(\varphi(G)\)
    & rank \(\rho(p)\)
    & rank \(\rho(G)\) \\
    \hline
    G\'omez-Rodr\'iguez et al.~\cite{gomez-rodriguez2010efficient}
    & fan-out \(\varphi(A)\)
    & fan-out \(\varphi(G)\)
    & rank \(\rho(p)\)
    & rank \(\rho(G)\) \\
    \hline
  \end{tabular}
  \par\smallskip
  \begin{minipage}{\textwidth}
    \raggedright
    Bishop et al.\ do not introduce a name for the branching factor,
    but their definition of non-branching agrees with ours
    \cite{bishop2026}*{Definition~6.1}.

    For a rule with head \(A\) and composition function \(f\),
    the fan-out of \(f\) in Denkinger's terminology is the arity of \(A\) in ours.
  \end{minipage}
\end{table}

\subsection{Inclusions and separations}
\label{sec:language-class-landscape}

Figure~\ref{fig:mcfl-inclusions} summarizes the known inclusion relations among the classes considered in this paper.
The separation results underlying the figure are stated and proved in Appendix~\ref{app:language-class-landscape}.

\begin{figure}[H]
  \newcommand{\SEP}{14mm}
  \centering
  \fbox{
    \begin{tikzpicture}
      \node (1MCFL) at (0, 0) {\(\mMCFL{1}\)};
      \node (rel12) at (\SEP * 1, 0) {\(\subsetneq\)};
      \node (2MCFL) at (\SEP * 2, 0) {\(\mMCFL{2}\)};
      \node (rel23) at (\SEP * 3, 0) {\(\subsetneq\)};
      \node (3MCFL) at (\SEP * 4, 0) {\(\mMCFL{3}\)};
      \node (rel3d) at (\SEP * 5, 0) {\(\subsetneq\)};
      \node (dMCFL) at (\SEP * 5.5, 0) {\(\cdots\)};
      \node (reldM) at (\SEP * 6, 0) {\(\subsetneq\)};
      \node (MCFL) at (\SEP * 7, 0) {\(\MCFL\)};
      \node (1MCFLwn) at (0, -\SEP * 0.8) {\(\mMCFLwn{1}\)};
      \node (relwn12) at (\SEP * 1, -\SEP * 0.8) {\(\subsetneq\)};
      \node (2MCFLwn) at (\SEP * 2, -\SEP * 0.8) {\(\mMCFLwn{2}\)};
      \node (relwn23) at (\SEP * 3, -\SEP * 0.8) {\(\subsetneq\)};
      \node (3MCFLwn) at (\SEP * 4, -\SEP * 0.8) {\(\mMCFLwn{3}\)};
      \node (relwn3d) at (\SEP * 5, -\SEP * 0.8) {\(\subsetneq\)};
      \node (dMCFLwn) at (\SEP * 5.5, -\SEP * 0.8) {\(\cdots\)};
      \node (relwndM) at (\SEP * 6, -\SEP * 0.8) {\(\subsetneq\)};
      \node (MCFLwn) at (\SEP * 7, -\SEP * 0.8) {\(\MCFLwn\)};
      \node (relwnI) at (\SEP * 8, -\SEP * 0.8) {\(\subsetneq\)};
      \node (IL) at (\SEP * 8.5, -\SEP * 0.8) {\(\IL\)};
      \node (CFL) at (0, -\SEP * 0.8 * 2) {\(\CFL\)};
      \node (TAL) at (\SEP * 2, -\SEP * 0.8 * 2) {\(\TAL\)};
      \node (relV1) at (\SEP * 0, -\SEP * 0.8 * 0.5) {\(\rotatebox{90}{\(=\)}\)};
      \node (relV2) at (\SEP * 2, -\SEP * 0.8 * 0.5) {\(\rotatebox{90}{\(\subsetneq\)}\)};
      \node (relV3) at (\SEP * 4, -\SEP * 0.8 * 0.5) {\(\rotatebox{90}{\(\subsetneq\)}\)};
      \node (relVM) at (\SEP * 7, -\SEP * 0.8 * 0.5) {\(\rotatebox{90}{\(\subsetneq\)}\)};
      \node (rel1C) at (\SEP * 0, -\SEP * 0.8 * 1.5) {\(\rotatebox{90}{\(=\)}\)};
      \node (rel1T) at (\SEP * 2, -\SEP * 0.8 * 1.5) {\(\rotatebox{90}{\(=\)}\)};
    \end{tikzpicture}
  }
  \caption{Known proper inclusions and equalities among the language classes considered in this paper.}
  \label{fig:mcfl-inclusions}
\end{figure}

In particular, the hierarchy
\[
  \mMCFLwn{1}
  \subsetneq
  \mMCFLwn{2}
  \subsetneq
  \mMCFLwn{3}
  \subsetneq
  \dotsb
\]
does not terminate at any fixed dimension, and its union is \(\MCFLwn\).
Thus, for a language \(L\), proving \(L \notin \mMCFLwn{m}\) for one fixed \(m\)
is weaker than proving \(L \notin \MCFLwn\).
The Kanazawa--Salvati conjecture asserts that \(\MIX \notin \MCFLwn\),
and Theorem~\ref{thm:theorem-b} proves that \(\MIX_4 \notin \MCFLwn\).

\section{Conclusion}\label{sec:conclusion}

Theorem~\ref{thm:theorem-a} shows that every well-nested multiple context-free sublanguage of \(\Wpsi\)
is contained in \(L(\Gpsi{r})\) for some \(r \geq 1\).
Theorem~\ref{thm:theorem-b} shows that \(\MIX_4 \notin \MCFLwn\).
By Corollary~\ref{cor:mix-n-not-well-nested}, \(\MIX_n \notin \MCFLwn\) for every \(n \geq 4\)
and \(O_n \notin \MCFLwn\) for every \(n \geq 3\).

The Kanazawa--Salvati conjecture that \(\MIX \notin \MCFLwn\) remains open.
Proposition~\ref{prop:g3-tuples} shows that
\(L(\Gn{3}{r}) = \concat(\mathcal{A}_r)\) for every \(r \geq 1\).
By Corollary~\ref{cor:combinatorial-reformulation}, the Kanazawa--Salvati conjecture is equivalent to
\(\concat(\mathcal{A}_r) \subsetneq \MIX\) for every \(r \geq 1\).

By Remark~\ref{rem:first-two-combinatorial-cases},
\(\concat(\mathcal{A}_r) \subsetneq \MIX\) for \(r = 1, 2\).
Hence the first open case is \(r = 3\):
it is not known whether \(\concat(\mathcal{A}_3) = \MIX\).
The full Kanazawa--Salvati conjecture requires
\(\concat(\mathcal{A}_r) \subsetneq \MIX\) for every \(r \geq 3\).

\appendix

\section{Proof of the binary branching reduction}\label{app:binary-branching}

\begin{proof}[Proof of Lemma~\ref{lem:binary-branching-reduction}]
  It suffices to replace one rule of branching \(n > 2\) by finitely many well-nested rules of branching at most \(2\).
  Fix a well-nested rule
  \[
    \pi\colon
    A(t_1, \dotsc, t_r)
    \gets
    B_1(x_{1, 1}, \dotsc, x_{1, r_1}), \dotsc, B_n(x_{n, 1}, \dotsc, x_{n, r_n})
  \]
  of branching \(n > 2\), and put \(\vv{t} = (t_1, \dotsc, t_r)\).
  Since reordering the atoms in the body of \(\pi\) does not change the generated language, we may assume that
  \(x_{1, 1} \dotsm x_{n, 1} \preccurlyeq \concat(\vv{t})\).
  For \(1 \leq k \leq n\), put
  \[
    R_k
    =
    r_1 + \dotsb + r_k,
    \qquad
    X_k
    =
    \{x_{i, j} \mid 1 \leq i \leq k, 1 \leq j \leq r_i\},
    \qquad
    s_k
    =
    \proj_{X_k}(\concat(\vv{t})).
  \]
  Then \(s_k\) lists the variables in \(X_k\) in the order in which they occur in \(\concat(\vv{t})\).
  Write \(s_k = z_{k, 1} \dotsm z_{k, R_k}\), and put \(\vv*{z}{k} = (z_{k, 1}, \dotsc, z_{k, R_k})\).

  For the rule \(\pi\), introduce fresh nonterminals \(C_2^\pi, \dotsc, C_n^\pi\), with arities \(R_2, \dotsc, R_n\), and set
  \(C_1^\pi = B_1\).
  Replace \(\pi\) by the following rules:
  \begin{align*}
    C_k^\pi(\vv*{z}{k})
    &\gets
    C_{k - 1}^\pi(\vv*{z}{k - 1}),
    B_k(x_{k, 1}, \dotsc, x_{k, r_k})
    \qquad
    (2 \leq k \leq n),\\
    A(t_1, \dotsc, t_r)
    &\gets C_n^\pi(\vv*{z}{n}).
  \end{align*}

  We check that the new rules are well-nested.
  For every \(2 \leq k \leq n\), there exists \(0 \leq l \leq R_{k - 1}\) such that
  \begin{equation}\label{eq:binary-branching-insertion}
    \vv*{z}{k}
    =
    (
      z_{k - 1, 1}, \dotsc, z_{k - 1, l},
      x_{k, 1}, \dotsc, x_{k, r_k},
      z_{k - 1, l + 1}, \dotsc, z_{k - 1, R_{k - 1}}
    ),
  \end{equation}
  where the first block is omitted if \(l = 0\), and the last block is omitted if \(l = R_{k - 1}\).
  Indeed, if no such \(l\) existed, there would be indices \(i < k\), \(\lambda < \mu\), and \(\nu\)
  such that \(x_{k, \lambda} x_{i, \nu} x_{k, \mu} \preccurlyeq \concat(\vv{t})\).
  Since \(i < k\), \(x_{1, 1} \dotsm x_{n, 1} \preccurlyeq \concat(\vv{t})\), and \(\pi\) is non-permuting,
  \(x_{i, 1}\) occurs before \(x_{k, \lambda}\) in \(\concat(\vv{t})\).
  Hence \(\nu > 1\), and \(x_{i, 1} x_{k, \lambda} x_{i, \nu} x_{k, \mu} \preccurlyeq \concat(\vv{t})\),
  contradicting the well-nestedness of \(\pi\).

  It follows from \eqref{eq:binary-branching-insertion} that each binary rule is well-nested.
  The final unary rule is non-deleting and non-permuting by the definition of \(s_n\), and it is well-nested because it has only one body atom.

  By construction, each application of \(\pi\) can be replaced by the chain of new rules.
  Since \(C_2^\pi, \dotsc, C_n^\pi\) are fresh and occur only in this chain,
  each occurrence of the chain in a derivation can be replaced by one application of \(\pi\);
  hence the replacement preserves the generated language.

  Replacing every rule of branching greater than \(2\) in this way yields an equivalent well-nested MCFG of branching factor at most \(2\).
  If \(L(G)\) is nonempty, applying Lemma~\ref{lem:reduced} to this grammar yields an equivalent reduced well-nested subgrammar with branching factor at most \(2\).
\end{proof}

\section{Higman's lemma}\label{app:higman}

\begin{lemma}[Higman's lemma {\cite{higman1952}}]\label{lem:higman}
  Let \(\Sigma\) be a finite alphabet.
  Then the subsequence order \(\preccurlyeq\) on \(\Sigma^*\) is a well-quasi-order.
  Equivalently, for every infinite sequence \(w_0, w_1, w_2, \dotsc\) of words over \(\Sigma\), there exist indices \(i < j\) such that \(w_i \preccurlyeq w_j\).
\end{lemma}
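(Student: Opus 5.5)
The plan is to derive the equivalent ``infinite sequence'' formulation from the statement that \(\preccurlyeq\) is a well-quasi-order, and to prove the latter by the Nash-Williams minimal bad sequence argument. Call an infinite sequence \(w_0, w_1, \dotsc\) \emph{bad} if there are no \(i < j\) with \(w_i \preccurlyeq w_j\). We must show that no bad sequence exists. Reflexivity and transitivity of \(\preccurlyeq\) are immediate from composing index subsets \(J\). Hence the two formulations coincide with the standard definition of a well-quasi-order.

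The first step is to suppose, for contradiction, that a bad sequence exists and to build a minimal one inductively. Choose \(w_0\) of minimal length among first terms of bad sequences. Having chosen \(w_0, \dotsc, w_{n-1}\), choose \(w_n\) of minimal length among all words that extend \(w_0, \dotsc, w_{n-1}\) to a bad sequence. This uses dependent choice, and the resulting sequence is bad, since any witness \(i < j\) would already appear in a finite prefix. No \(w_n\) is \(\varepsilon\), because \(\varepsilon \preccurlyeq w_{n+1}\). So write \(w_n = a_n v_n\) with \(a_n \in \Sigma\).

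The second step uses the finiteness of \(\Sigma\). Some letter \(a\) occurs as \(a_n\) for infinitely many \(n\), say for \(n_0 < n_1 < \dotsb\). Consider the sequence
\[
  w_0, \dotsc, w_{n_0 - 1}, v_{n_0}, v_{n_1}, v_{n_2}, \dotsc .
\]
Its term in position \(n_0\) is \(v_{n_0}\), which is strictly shorter than \(w_{n_0}\). By the minimality in the choice of \(w_{n_0}\), this sequence must therefore be good.

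The main obstacle is ruling out every kind of good pair, so three cases are checked.
\begin{enumerate}
  \item A pair \(w_i \preccurlyeq w_j\) with \(i < j < n_0\) would contradict the badness of the original sequence.
  \item A pair \(w_i \preccurlyeq v_{n_k}\) with \(i < n_0\) gives \(w_i \preccurlyeq a v_{n_k} = w_{n_k}\), again contradicting badness.
  \item A pair \(v_{n_k} \preccurlyeq v_{n_l}\) with \(k < l\) gives \(a v_{n_k} \preccurlyeq a v_{n_l}\), that is, \(w_{n_k} \preccurlyeq w_{n_l}\), also a contradiction.
\end{enumerate}
Thus the sequence is bad, contradicting minimality. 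Hence no bad sequence exists, and \(\preccurlyeq\) is a well-quasi-order on \(\Sigma^*\).
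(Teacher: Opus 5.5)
Your proposal is correct and is essentially the paper's own proof. Both run the Nash-Williams minimal bad sequence argument: build a length-minimal bad sequence, pass to an infinite subsequence of terms sharing a first letter \(a\), strip that letter, and rule out good pairs by the same three-case check. The only detail the paper states that you leave implicit is that the set of admissible next terms stays nonempty at each stage, since the next term of any bad sequence witnessing the previous choice lies in it.
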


\begin{proof}
  An infinite sequence \((s_i)_{i \in \N}\) is called \emph{bad} if there are no indices \(i < j\) such that \(s_i \preccurlyeq s_j\).
  Assume, for contradiction, that a bad sequence exists.

  Construct a sequence \((u_i)_{i \in \N}\) recursively as follows.
  Suppose that \(u_0, \dotsc, u_{i - 1}\) have been chosen.
  Put
  \[
    E_i = \mleft\{ u \in \Sigma^* \setmid
      \begin{aligned}
        &\text{there exist words \(z_{i + 1}, z_{i + 2}, \dotsc \in \Sigma^*\) such that}\\
        &\text{\(u_0, \dotsc, u_{i - 1}, u, z_{i + 1}, z_{i + 2}, \dotsc\) is a bad sequence}
      \end{aligned}
    \mright\}.
  \]
  The set \(E_0\) is nonempty by assumption.
  If \(E_i\) is nonempty, choose \(u_i \in E_i\) of minimum length.
  If \(z_{i + 1}, z_{i + 2}, \dotsc\) witness that \(u_i \in E_i\), then \(z_{i + 1} \in E_{i + 1}\).
  In particular, \(E_{i + 1}\) is nonempty, so the recursive construction is well-defined.
  By construction, every finite prefix of \((u_i)_{i \in \N}\) can be extended to a bad sequence, so \((u_i)_{i \in \N}\) is itself bad.

  Each \(u_i\) is nonempty, since otherwise \(u_i = \varepsilon \preccurlyeq u_{i + 1}\).
  Write \(u_i = a_i v_i\), where \(a_i \in \Sigma\) and \(v_i \in \Sigma^*\).
  Since \(\Sigma\) is finite, there exist a letter \(a \in \Sigma\) and indices \(0 < i(0) < i(1) < \dotsb\) such that \(a_{i(j)} = a\) for every \(j \in \N\).
  Let \((w_i)_{i \in \N}\) denote the sequence
  \[
    u_0, u_1, \dotsc, u_{i(0) - 1}, v_{i(0)}, v_{i(1)}, \dotsc.
  \]
  We show that \((w_i)_{i \in \N}\) is a bad sequence.
  Assume, for contradiction, that \(w_i \preccurlyeq w_j\) for some indices \(i < j\).
  If \(i < j < i(0)\), then \(u_i = w_i \preccurlyeq w_j = u_j\), which contradicts the badness of \((u_i)_{i \in \N}\).
  If \(i < i(0) \leq j\), then there exists \(k \in \N\) such that
  \(u_i = w_i \preccurlyeq w_j = v_{i(k)} \preccurlyeq a v_{i(k)} = u_{i(k)}\), a contradiction.
  If \(i(0) \leq i < j\), then there exist \(k < l\) such that \(v_{i(k)} = w_i \preccurlyeq w_j = v_{i(l)}\),
  hence \(u_{i(k)} = a v_{i(k)} \preccurlyeq a v_{i(l)} = u_{i(l)}\), a contradiction.
  Thus \((w_i)_{i \in \N}\) is bad.

  Hence \(v_{i(0)} = w_{i(0)} \in E_{i(0)}\).
  However, \(\len{v_{i(0)}} < \len{u_{i(0)}}\), which contradicts the choice of \(u_{i(0)}\).
\end{proof}

\begin{corollary}\label{cor:subsequence-minimal-finite}
  Let \(\Sigma\) be a finite alphabet.
  For every subset \(X \subseteq \Sigma^*\), the set \(\Minsubseq(X)\) is finite.
\end{corollary}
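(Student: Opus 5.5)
The plan is to derive the corollary directly from Higman's lemma (Lemma~\ref{lem:higman}), using the fact that the minimal elements of \(X\) form an antichain for \(\preccurlyeq\).

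First, observe that any two distinct elements \(u, v \in \Minsubseq(X)\) are incomparable. Indeed, if \(u \preccurlyeq v\) with \(u \neq v\), then \(u \prec v\) and \(u \in X\), which contradicts the minimality of \(v\). The symmetric case is the same.

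Now suppose, for contradiction, that \(\Minsubseq(X)\) is infinite. Choose an injective sequence \(w_0, w_1, w_2, \dotsc\) of elements of \(\Minsubseq(X)\). By Lemma~\ref{lem:higman}, there exist indices \(i < j\) with \(w_i \preccurlyeq w_j\). Since the sequence is injective, \(w_i \neq w_j\), so \(w_i \prec w_j\). This contradicts the antichain property from the previous paragraph. Hence \(\Minsubseq(X)\) is finite.

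There is no real obstacle here. The only point needing care is that \(\preccurlyeq\) is antisymmetric on words, so incomparability of distinct minimal elements follows directly from the definition of \(\prec\). Antisymmetry holds because \(u \preccurlyeq v \preccurlyeq u\) forces \(\len{u} = \len{v}\), and hence \(u = v\). The case \(X = \varnothing\) is trivial, since then \(\Minsubseq(X) = \varnothing\).
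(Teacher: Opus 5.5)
Your proof is correct and follows the paper's argument: the elements of \(\Minsubseq(X)\) are pairwise incomparable, so they form an antichain under \(\preccurlyeq\), and Higman's lemma forces every antichain to be finite. Your remark on antisymmetry is harmless but not needed, since \(w_i \preccurlyeq w_j\) and \(w_i \neq w_j\) already give \(w_i \prec w_j\) by the definition of \(\prec\).
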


\begin{proof}
  Since the elements of \(\Minsubseq(X)\) are pairwise incomparable under \(\preccurlyeq\), \(\Minsubseq(X)\) is an antichain in \((\Sigma^*, \preccurlyeq)\).
  By Higman's lemma (Lemma~\ref{lem:higman}), this antichain is finite.
\end{proof}

\section{Minimal representatives and the Neutralization Theorem}\label{app:neutralization}

\subsection{Minimal representatives}\label{app:representatives}

As in Section~\ref{sec:proof-theorem-a}, fix a finite alphabet \(\Sigma\), an integer \(d \geq 1\),
and a surjective monoid homomorphism \(\psi\colon \Sigma^* \to \Z^d\).

\begin{definition}
  For \(\bm{v} \in \Z^d\), let
  \[
    \Reppsi(\bm{v})
    =
    \Minsubseq(\psi^{-1}(\bm{v})).
  \]
  An element of \(\Reppsi(\bm{v})\) is called a
  \emph{minimal representative} of \(\bm{v}\) with respect to \(\psi\).
\end{definition}

\begin{remark}[\(\Rep_{\psin{n}}(\bm{v})\) as the set of geodesics for \(\bm{v}\)]
  Let \(n \geq 2\) and \(\bm{v} \in \Z^{n - 1}\).
  We say that a word \(w \in \Sigman{n}^*\) is \emph{geodesic for \(\bm{v}\)} if \(\psin{n}(w) = \bm{v}\) and \(w\) has minimum length among all words with \(\psin{n}\)-value \(\bm{v}\).

  By the definition of \(\psin{n}\), a word \(w \in \psin{n}^{-1}(\bm{v})\) is geodesic for \(\bm{v}\) if and only if at least one letter of \(\Sigman{n}\) is absent from \(w\).
  Since the same condition characterizes membership in \(\Rep_{\psin{n}}(\bm{v})\), the set \(\Rep_{\psin{n}}(\bm{v})\) is exactly the set of words that are geodesic for \(\bm{v}\).
\end{remark}

\begin{lemma}\label{lem:rep-finite}
  For every \(\bm{v} \in \Z^d\), the set \(\Reppsi(\bm{v})\) is finite.
\end{lemma}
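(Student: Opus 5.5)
This is immediate from Corollary~\ref{cor:subsequence-minimal-finite}. By definition, \(\Reppsi(\bm{v}) = \Minsubseq(\psi^{-1}(\bm{v}))\), which is the set of \(\preccurlyeq\)-minimal elements of the subset \(X = \psi^{-1}(\bm{v}) \subseteq \Sigma^*\). I would apply the corollary with this \(X\); since \(\Sigma\) is finite, it gives the finiteness of \(\Minsubseq(X)\) at once.

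Two points need checking, and neither is a real obstacle. First, the corollary is stated for an arbitrary subset \(X\) of \(\Sigma^*\), so no property of \(\psi^{-1}(\bm{v})\) beyond being a set of words is required. Surjectivity of \(\psi\) only ensures \(X \neq \varnothing\), so that \(\Reppsi(\bm{v})\) is nonempty; finiteness does not depend on it. Second, the minimal elements are pairwise incomparable under \(\preccurlyeq\): if \(u \preccurlyeq u'\) with \(u \neq u'\) and both in \(X\), then \(u \prec u'\), contradicting the minimality of \(u'\). Hence \(\Minsubseq(X)\) is an antichain, and Higman's lemma (Lemma~\ref{lem:higman}) makes it finite. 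This is exactly the argument already given in the proof of Corollary~\ref{cor:subsequence-minimal-finite}, so the proof reduces to a one-line citation.

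For \(\psi = \psin{n}\) one could instead argue directly. By the preceding remark, \(\Rep_{\psin{n}}(\bm{v})\) consists of words of a fixed geodesic length, and there are only finitely many words of any given length. The general case, however, uses only Higman's lemma.
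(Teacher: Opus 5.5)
Your proposal is correct and is exactly the paper's argument: the paper proves the lemma in one line by applying Corollary~\ref{cor:subsequence-minimal-finite} to \(X = \psi^{-1}(\bm{v})\). Your side remarks are accurate but not needed. Surjectivity of \(\psi\) plays no role in finiteness, and for \(\psin{n}\) all geodesic words for \(\bm{v}\) have the same length, which gives a direct proof.
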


\begin{proof}
  Apply Corollary~\ref{cor:subsequence-minimal-finite} to \(X = \psi^{-1}(\bm{v})\).
\end{proof}

\begin{lemma}\label{lem:representative-subsequence}
  For every \(\bm{v} \in \Z^d\) and every \(z \in \psi^{-1}(\bm{v})\),
  there exists \(u \in \Reppsi(\bm{v})\) such that \(u \preccurlyeq z\).
\end{lemma}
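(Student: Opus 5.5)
The plan is to argue by well-foundedness of the subsequence order on the finite set of subsequences of \(z\). Fix \(\bm{v} \in \Z^d\) and \(z \in \psi^{-1}(\bm{v})\), and consider the set
\[
  D = \{u \in \Sigma^* \mid u \preccurlyeq z \text{ and } \psi(u) = \bm{v}\}.
\]
This set is nonempty, since \(z \in D\) by reflexivity of \(\preccurlyeq\). It is also finite, because \(z\) has at most \(2^{\len{z}}\) subsequences.

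Choose \(u \in D\) of minimum length. We claim \(u \in \Reppsi(\bm{v}) = \Minsubseq(\psi^{-1}(\bm{v}))\). Suppose instead that some \(v \in \psi^{-1}(\bm{v})\) satisfies \(v \prec u\). Then \(v \preccurlyeq u \preccurlyeq z\), and transitivity of \(\preccurlyeq\) (composition of index subsets) gives \(v \preccurlyeq z\), so \(v \in D\). Moreover, \(v \prec u\) forces \(\len{v} < \len{u}\): a subsequence of equal length uses every position, so it coincides with the word. This contradicts the minimality of \(\len{u}\). Hence \(u\) is a minimal element of \(\psi^{-1}(\bm{v})\), and \(u \preccurlyeq z\) holds by construction.

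There is no real obstacle. The only points needing a remark are the transitivity of \(\preccurlyeq\) and the fact that \(v \prec u\) implies \(\len{v} < \len{u}\). Both are immediate from the definition \(u = w[J]\) with \(J \subseteq [n]\). Higman's lemma is not needed here; it enters only for the finiteness in Lemma~\ref{lem:rep-finite}.
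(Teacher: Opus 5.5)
Your proof is correct and follows the paper's argument: take a shortest subsequence of \(z\) with \(\psi\)-value \(\bm{v}\), and observe that no proper subsequence of it can have the same value. The only difference is that you spell out two points the paper leaves implicit: the transitivity of \(\preccurlyeq\), and the fact that \(v \prec u\) forces \(\len{v} < \len{u}\).
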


\begin{proof}
  Take a shortest subsequence \(u \preccurlyeq z\) satisfying \(\psi(u) = \bm{v}\).
  Then no \(s \prec u\) satisfies \(\psi(s) = \bm{v}\), and thus \(u \in \Reppsi(\bm{v})\).
\end{proof}

Note that \(\Reppsi(\bm{0}) = \{\varepsilon\}\) since \(\varepsilon\) is a subsequence of every word in \(\Wpsi\), while \(\varepsilon \notin M_\psi\).

\subsection{Proof of the Neutralization Theorem}

We use the following join and split operations in this proof and in Appendix~\ref{app:simulation}.

\begin{definition}\label{def:join-split}
  For an alphabet \(\Omega\) and \(u = d_1 \dotsm d_k \in \Omega^k\) (\(k \geq 0\)), define the map
  \(\join_u\colon (\Omega^*)^{k + 1} \to \Omega^*\) by
  \(\join_u(w_0, \dotsc, w_k) = w_0 d_1 w_1 \dotsm d_k w_k\).

  For \(\Delta \subseteq \Omega\) and a word \(w \in \Omega^*\),
  \(w\) has a unique factorization \(w = w_0 d_1 w_1 \dotsm d_k w_k\),
  where \(k \geq 0\), \(d_1, \dotsc, d_k \in \Delta\), and \(w_0, \dotsc, w_k \in (\Omega \setminus \Delta)^*\).
  Define the map \(\Split_\Delta\colon \Omega^* \to (\Omega^*)^*\) by
  \(\Split_\Delta(w) = (w_0, \dotsc, w_k)\).
  For a letter \(a \in \Omega\), we write \(\Split_a\) for \(\Split_{\{a\}}\).
\end{definition}

The tuple \(\Split_\Delta(w)\) has \(\len{\proj_\Delta(w)} + 1\) components,
and satisfies \(\join_{\proj_\Delta(w)}(\Split_\Delta(w)) = w\).
For example, let \(\Omega = \{\tta, \ttb, \ttc\}\), \(\Delta = \{\tta, \ttb\}\), and \(w = \tta\ttb\ttc\tta\ttc\ttb\).
Then \(\Split_\Delta(w) = (\varepsilon, \varepsilon, \ttc, \ttc, \varepsilon)\).
The four commas correspond to the four letters of \(\proj_\Delta(w) = \tta\ttb\tta\ttb\).

\begin{proof}[Proof of Theorem~\ref{thm:neutralization}]
  For each nonterminal \(A\), let \(\bm{v}_A\) be the fixed \(\psi\)-value from Lemma~\ref{lem:nonterminal-value}.

  For a nonterminal \(A\) of arity \(r\) and \(u = a_1 \dotsm a_k \in \Reppsi(\bm{v}_A)\) (\(a_i \in \Sigma\)),
  an \emph{instantiation layout} for \((A, u)\) is a word \(\eta \in \{\separator, \placeholder\}^{r + k - 1}\)
  such that \(\len{\eta}_{\separator} = r - 1\) and \(\len{\eta}_{\placeholder} = k\).
  For each such pair \((u, \eta)\), we introduce a new nonterminal \(A_\eta^u\) of arity \(r + k\).
  We call the symbol \(\separator\) a \emph{separator} and the symbol \(\placeholder\) a \emph{placeholder}.
  The \(i\)-th symbol of the instantiation layout \(\eta\) specifies how the \(i\)-th comma in \(A_\eta^u(s_0, \dotsc, s_{r + k - 1})\) is interpreted.
  If this symbol is the \(j\)-th separator in \(\eta\), then this comma represents the \(j\)-th comma in \(A(x_1, \dotsc, x_r)\).
  If this symbol is the \(j\)-th placeholder in \(\eta\), then this comma represents the position reserved for the insertion of the \(j\)-th letter \(a_j\) of \(u\).

  Let \(\Omega\) be any alphabet containing \(\Sigma\) and disjoint from \(\{\separator, \placeholder\}\).
  For \(u = a_1 \dotsm a_k \in \Reppsi(\bm{v}_A)\) and an instantiation layout \(\eta\) for \((A, u)\), define an \emph{instantiation map}
  \(\inst_\eta^u\colon (\Omega^*)^{r + k} \to (\Omega^*)^r\) by the following formula,
  where the join and split operations are taken over \(\Omega \cup \{\separator, \placeholder\}\):
  \[
    \inst_\eta^u(\vv{s})
    =
    \Split_{\separator}
    (
      \join_u(
        \Split_{\placeholder}(
          \join_\eta(\vv{s})
        )
      )
    )
    \quad\text{for}\quad
    \vv{s} \in (\Omega^*)^{r + k}.
  \]
  If \(h\colon \Omega^* \to \Sigma^*\) is a monoid homomorphism such that
  \(h(a) = a\) for \(a \in \Sigma\), then
  \begin{equation}\label{eq:inst-homomorphism}
    h(\inst_\eta^u(\vv{s}))
    =
    \inst_\eta^u(h(\vv{s}))
  \end{equation}
  for every \(\vv{s} \in (\Omega^*)^{r + k}\).
  For example, suppose that a nonterminal \(C\) has arity \(4\) and \(u = \tta\ttb\tta \in \Reppsi(\bm{v}_C)\), and put
  \(\eta = \placeholder \placeholder \separator \separator \placeholder \separator\).
  Then, for \(s_0, s_1, s_2, s_3, s_4, s_5, s_6 \in \Omega^*\),
  \begin{align*}
    \inst_{\placeholder \placeholder \separator \separator \placeholder \separator}^{\tta\ttb\tta}(s_0, s_1, s_2, s_3, s_4, s_5, s_6)
    &=
    \Split_{\separator}
    (
      \join_{\tta\ttb\tta}(
        \Split_{\placeholder}(
          \join_{\placeholder \placeholder \separator \separator \placeholder \separator}(s_0, s_1, s_2, s_3, s_4, s_5, s_6)
        )
      )
    )\\
    &=
    \Split_{\separator}
    (
      \join_{\tta\ttb\tta}(
        \Split_{\placeholder}(
          s_0 \placeholder s_1 \placeholder s_2 \separator s_3 \separator s_4 \placeholder s_5 \separator s_6
        )
      )
    )\\
    &=
    \Split_{\separator}
    (
      \join_{\tta\ttb\tta}(
        s_0, s_1, s_2 \separator s_3 \separator s_4, s_5 \separator s_6
      )
    )\\
    &=
    \Split_{\separator}
    (
      s_0 \tta s_1 \ttb s_2 \separator s_3 \separator s_4 \tta s_5 \separator s_6
    )\\
    &=
    (s_0 \tta s_1 \ttb s_2, s_3, s_4 \tta s_5, s_6).
  \end{align*}
  Since \(\psi\) is a monoid homomorphism and \(\Z^d\) is abelian,
  it follows from the definition of \(\inst_\eta^u\) that every \(\vv{z} \in (\Sigma^*)^{r + k}\) satisfies
  \begin{equation}\label{eq:inst-general}
    \psi(\concat(\inst_\eta^u(\vv{z}))) = \psi(\concat(\vv{z})) + \psi(u).
  \end{equation}

  \proofheading{Construction of \(G'\)}
  Put
  \[
    N' = \{A_\eta^u \mid A \in N, \ u \in \Reppsi(\bm{v}_A), \ \text{\(\eta\) is an instantiation layout for \((A, u)\)}\}.
  \]
  Since \(\bm{v}_S = \bm{0}\) and \(\Reppsi(\bm{0}) = \{\varepsilon\}\),
  set \(S' = S_\varepsilon^\varepsilon\).

  For each rule
  \[
    \pi\colon
    A(t_1, \dotsc, t_r)
    \gets
    B_1(x_{1, 1}, \dotsc, x_{1, r_1}), \dotsc, B_n(x_{n, 1}, \dotsc, x_{n, r_n})
  \]
  in \(P\), put \(\vv{t} = (t_1, \dotsc, t_r)\) and
  \(\vv*{x}{i} = (x_{i, 1}, \dotsc, x_{i, r_i})\) for \(1 \leq i \leq n\).
  Let \(P'_\pi\) be the set of all rules
  \[
    A_\eta^u(\vv{s})
    \gets
    (B_1)_{\eta_1}^{u_1}(\vv*{x}{1}'), \dotsc, (B_n)_{\eta_n}^{u_n}(\vv*{x}{n}')
  \]
  satisfying the following conditions.
  \begin{enumerate}
    \item\label{item:neutralization-body-data} For each \(1 \leq i \leq n\), we have \(u_i \in \Reppsi(\bm{v}_{B_i})\),
    \(\eta_i\) is an instantiation layout for \((B_i, u_i)\), and \(\vv*{x}{i}'\) is a tuple of \(r_i + \len{u_i}\) variables.
    The variables occurring in \(\vv*{x}{1}', \dotsc, \vv*{x}{n}'\) are pairwise distinct and fresh.
    \item\label{item:neutralization-head-data} We have \(u \in \Reppsi(\bm{v}_A)\), \(\eta\) is an instantiation layout for \((A, u)\), and
    \(\vv{s}\) has \(r + \len{u}\) components.
    \item\label{item:neutralization-rule-compatibility} Let \(\tau\) be the unique monoid homomorphism determined by
    \(\tau(a) = a\) for \(a \in \Sigma\) and, for each \(1 \leq i \leq n\), by
    \begin{equation}\label{eq:neutralization-body}
      \tau(\vv*{x}{i})
      =
      \inst_{\eta_i}^{u_i}(\vv*{x}{i}').
    \end{equation}
    We require
    \begin{equation}\label{eq:neutralization-head}
      \inst_\eta^u(\vv{s})
      =
      \tau(\vv{t}).
    \end{equation}
  \end{enumerate}
  Put \(P' = \bigcup_{\pi \in P} P'_\pi\).

  To illustrate the construction of \(P'_\pi\), consider the rule
  \[
    \pi_5\colon A(y x_1, \tta x_2 \ttc)
    \gets
    B(y),
    A(x_1, x_2)
  \]
  in the proof sketch.
  For the nonterminals \(A\) and \(B\) of this grammar, we have \(\Rep_{\psin{3}}(\bm{v}_A) = \{\tta\ttb, \ttb\tta\}\) and \(\Rep_{\psin{3}}(\bm{v}_B) = \{\ttb\}\).
  For example, the following rules belong to \(P'_{\pi_5}\):
  \begin{align*}
    A_{\placeholder \placeholder \separator}^{\ttb\tta}(y'_0, y'_1 x'_0, x'_1 \ttb x'_2, \tta x'_3 \ttc)
    &\gets
    B_{\placeholder}^{\ttb}(y'_0, y'_1),
    A_{\placeholder \placeholder \separator}^{\tta\ttb}(x'_0, x'_1, x'_2, x'_3)\\
    (\text{where}\quad& \tau(y) = y'_0 \ttb y'_1, \tau(x_1) = x'_0 \tta x'_1 \ttb x'_2, \tau(x_2) = x'_3),\\
    A_{\separator \placeholder \placeholder}^{\tta\ttb}(y'_0 \ttb y'_1 x'_0 \tta x'_1, \varepsilon, x'_2, x'_3 \ttc)
    &\gets
    B_{\placeholder}^{\ttb}(y'_0, y'_1),
    A_{\placeholder \separator \placeholder}^{\tta\ttb}(x'_0, x'_1, x'_2, x'_3)\\
    (\text{where}\quad& \tau(y) = y'_0 \ttb y'_1, \tau(x_1) = x'_0 \tta x'_1, \tau(x_2) = x'_2 \ttb x'_3).
  \end{align*}
  In the informal notation of the proof sketch, the corresponding inference schemata are as follows:
  \begin{align*}
    \begin{prooftree}
      \hypo{B^{\ttb}(y'_0 \placeholder y'_1)}
      \hypo{A^{\tta\ttb}(x'_0 \placeholder x'_1 \placeholder x'_2, x'_3)}
      \infer2{A^{\ttb\tta}(y'_0 \placeholder y'_1 x'_0 \placeholder x'_1 \ttb x'_2, \tta x'_3 \ttc)\rlap{,}}
    \end{prooftree}
    &&
    \begin{prooftree}
      \hypo{B^{\ttb}(y'_0 \placeholder y'_1)}
      \hypo{A^{\tta\ttb}(x'_0 \placeholder x'_1, x'_2 \placeholder x'_3)}
      \infer2{A^{\tta\ttb}(y'_0 \ttb y'_1 x'_0 \tta x'_1, \placeholder x'_2 \placeholder x'_3 \ttc)\rlap{.}}
    \end{prooftree}
  \end{align*}
  The first rule is the one used in the derivation displayed in the proof sketch.

  \proofheading{Finiteness}
  By Lemma~\ref{lem:rep-finite}, each set \(\Reppsi(\bm{v}_A)\) is finite.
  If \(A\) has arity \(r\) and \(\len{u} = k\), then there are exactly \(\binom{r + k - 1}{k}\) instantiation layouts for \((A, u)\).
  Thus \(N'\) is finite.
  Fix a rule \(\pi \in P\), representatives \(u_1 \in \Reppsi(\bm{v}_{B_1}), \dotsc, u_n \in \Reppsi(\bm{v}_{B_n}), u \in \Reppsi(\bm{v}_A)\),
  instantiation layouts \(\eta_i\) for \((B_i, u_i)\) for \(1 \leq i \leq n\),
  and an instantiation layout \(\eta\) for \((A, u)\).
  For these fixed choices, condition~\ref{item:neutralization-body-data} determines the tuples \(\vv*{x}{i}'\) up to \(\alpha\)-equivalence,
  and \eqref{eq:neutralization-body} then determines the homomorphism \(\tau\) and the tuple \(\tau(\vv{t})\).
  Since the tuple \(\vv{s}\) has \(r + \len{u}\) components and total length
  \(\len{\concat(\tau(\vv{t}))} - \len{u}\) by \eqref{eq:neutralization-head},
  and its components use only letters of \(\Sigma\) and variables occurring in \(\vv*{x}{1}', \dotsc, \vv*{x}{n}'\),
  only finitely many choices of \(\vv{s}\) are possible.
  There are also only finitely many choices of
  \((u_1, \eta_1), \dotsc, (u_n, \eta_n)\) and \((u, \eta)\),
  so \(P'_\pi\) is finite.
  Since \(P\) is finite, so is \(P'\).

  \proofheading{Well-nestedness and branching factor}
  Consider a rule in \(P'_\pi\).
  Since \(\pi\) is non-deleting and non-permuting and instantiation maps neither delete nor reorder variables, condition~\ref{item:neutralization-rule-compatibility} implies that this rule is non-deleting and non-permuting.
  By condition~\ref{item:neutralization-rule-compatibility}, for each \(i, j\), the variables occurring in \(\tau(x_{i, j})\) form a consecutive block of the variables in \(\vv*{x}{i}'\), and these blocks occur in \(\concat(\vv{s})\) in the order in which the variables \(x_{i, j}\) occur in \(\concat(\vv{t})\).
  If the new rule were not well-nested, the four variables witnessing this would belong to four distinct blocks.
  The corresponding variables of \(\pi\) would then witness that \(\pi\) is not well-nested, a contradiction.
  Each rule in \(P'_\pi\) has the same branching as \(\pi\), so the branching factor of \(G'\) is at most that of \(G\).

  \proofheading{The inclusion \(L(G') \subseteq L(G)\)}
  We prove by induction on \(G'\)-derivations that, for every nonterminal \(A\) of arity \(r\),
  every \(u \in \Reppsi(\bm{v}_A)\), every instantiation layout \(\eta\) for \((A, u)\), and every \(\vv{z} \in (\Sigma^*)^{r + \len{u}}\),
  \begin{equation}\label{eq:neutralization-to-original}
    \vdash_{G'} A_\eta^u(\vv{z})
    \implies
    \vdash_G A(\inst_\eta^u(\vv{z})).
  \end{equation}
  Suppose that \(\vdash_{G'} A_\eta^u(\vv{z})\), and choose a \(G'\)-derivation of this judgment.
  Let \(\pi' \in P'\) be the last rule used in this derivation, and choose a rule \(\pi \in P\) such that \(\pi' \in P'_\pi\).
  By the definition of derivability, there exists a monoid homomorphism \(\sigma\) such that \(\sigma(a) = a\) for \(a \in \Sigma\),
  \(\vdash_{G'} (B_i)_{\eta_i}^{u_i}(\sigma(\vv*{x}{i}'))\) for \(1 \leq i \leq n\),
  and \(\vv{z} = \sigma(\vv{s})\).
  For each \(1 \leq i \leq n\), the induction hypothesis for the \(i\)-th premise implies that
  \(\vdash_G B_i(\inst_{\eta_i}^{u_i}(\sigma(\vv*{x}{i}')))\).
  By Equations~\eqref{eq:inst-homomorphism} and~\eqref{eq:neutralization-body},
  \(\inst_{\eta_i}^{u_i}(\sigma(\vv*{x}{i}')) = \sigma(\inst_{\eta_i}^{u_i}(\vv*{x}{i}')) = \sigma(\tau(\vv*{x}{i}))\).
  Hence the inference schema for \(G\), with rule \(\pi\) and homomorphism
  \(\sigma \circ \tau\), yields the judgment \(\vdash_G A(\sigma(\tau(\vv{t})))\).
  Applying \(\sigma\) to \eqref{eq:neutralization-head} and using \eqref{eq:inst-homomorphism}, we obtain
  \(\sigma(\tau(\vv{t})) = \sigma(\inst_\eta^u(\vv{s})) = \inst_\eta^u(\sigma(\vv{s})) = \inst_\eta^u(\vv{z})\).
  This proves \eqref{eq:neutralization-to-original}.
  Since \(\inst_\varepsilon^\varepsilon\) is the identity map, the case \(A_\eta^u = S'\) of \eqref{eq:neutralization-to-original} proves \(L(G') \subseteq L(G)\).

  \proofheading{The inclusion \(L(G) \subseteq L(G')\)}
  We prove the following statement by induction on \(G\)-derivations:
  if a nonterminal \(A\) of arity \(r\) and a tuple
  \(\vv{w} \in (\Sigma^*)^r\) satisfy \(\vdash_G A(\vv{w})\), then there exist \(u \in \Reppsi(\bm{v}_A)\), an instantiation layout \(\eta\) for \((A, u)\),
  and \(\vv{z} \in (\Sigma^*)^{r + \len{u}}\) such that \(\vdash_{G'} A_\eta^u(\vv{z})\) and \(\inst_\eta^u(\vv{z}) = \vv{w}\).

  Suppose that \(\vdash_G A(\vv{w})\) for a nonterminal \(A\) of arity \(r\) and a tuple \(\vv{w} \in (\Sigma^*)^r\), and choose a \(G\)-derivation of this judgment.
  Let \(\pi\) be the last rule used in this derivation, and write it as in the definition of \(P'_\pi\).
  By the definition of derivability, there exists a monoid homomorphism \(\sigma\) such that \(\sigma(a) = a\) for \(a \in \Sigma\),
  \(\vdash_G B_i(\sigma(\vv*{x}{i}))\) for \(1 \leq i \leq n\),
  and \(\vv{w} = \sigma(\vv{t})\).
  By the induction hypothesis, for each \(1 \leq i \leq n\), there exist
  \(u_i \in \Reppsi(\bm{v}_{B_i})\), an instantiation layout \(\eta_i\) for \((B_i, u_i)\),
  and a tuple \(\vv*{z}{i} \in (\Sigma^*)^{r_i + \len{u_i}}\) such that
  \(\vdash_{G'} (B_i)_{\eta_i}^{u_i}(\vv*{z}{i})\) and \(\inst_{\eta_i}^{u_i}(\vv*{z}{i}) = \sigma(\vv*{x}{i})\).
  Choose pairwise distinct fresh variables forming a tuple \(\vv*{x}{i}'\) of length \(r_i + \len{u_i}\) for each \(1 \leq i \leq n\), and define \(\tau\) by \eqref{eq:neutralization-body}.
  Let \(\sigma'\) be the unique monoid homomorphism determined by \(\sigma'(a) = a\) for \(a \in \Sigma\) and \(\sigma'(\vv*{x}{i}') = \vv*{z}{i}\) for \(1 \leq i \leq n\).
  By Equations~\eqref{eq:inst-homomorphism} and~\eqref{eq:neutralization-body},
  \((\sigma' \circ \tau)(\vv*{x}{i})
  = \sigma'(\inst_{\eta_i}^{u_i}(\vv*{x}{i}'))
  = \inst_{\eta_i}^{u_i}(\sigma'(\vv*{x}{i}'))
  = \inst_{\eta_i}^{u_i}(\vv*{z}{i})
  = \sigma(\vv*{x}{i})\)
  for each \(1 \leq i \leq n\).
  Since the tuples \(\vv*{x}{1}, \dotsc, \vv*{x}{n}\) contain all variables of \(\pi\), and both \(\sigma' \circ \tau\) and \(\sigma\) fix every letter of \(\Sigma\), we have \(\sigma' \circ \tau = \sigma\).

  Put \(w = \proj_\Sigma(\concat(\tau(\vv{t})))\).
  By \eqref{eq:neutralization-body} and the definition of \(\inst_{\eta_i}^{u_i}\),
  \(\proj_\Sigma(\concat(\tau(\vv*{x}{i}))) = \proj_\Sigma(\concat(\inst_{\eta_i}^{u_i}(\vv*{x}{i}'))) = u_i\)
  for each \(1 \leq i \leq n\).
  Applying Lemma~\ref{lem:terminal-contribution-formula} to \(\proj_\Sigma \circ \tau\), we obtain
  \[
    \psi(w)
    =
    \psi(\tc(\pi)) + \sum_{i = 1}^n \psi(u_i)
    =
    \psi(\tc(\pi)) + \sum_{i = 1}^n \bm{v}_{B_i},
  \]
  where the second equality follows from \(u_i \in \Reppsi(\bm{v}_{B_i})\).
  Since \(G\) is reduced, applying \(\pi\) to one \(B_i\)-derivable tuple for each \(1 \leq i \leq n\) yields an \(A\)-derivable tuple.
  By Lemmas~\ref{lem:terminal-contribution-formula} and~\ref{lem:nonterminal-value},
  \[
    \bm{v}_A
    =
    \psi(\tc(\pi)) + \sum_{i = 1}^n \bm{v}_{B_i}.
  \]
  Therefore \(\psi(w) = \bm{v}_A\).

  By Lemma~\ref{lem:representative-subsequence}, choose \(u \in \Reppsi(\bm{v}_A)\) such that \(u \preccurlyeq w\).
  By the definition of \(w\), we also have \(u \preccurlyeq \concat(\tau(\vv{t}))\).
  Choose positions in \(\concat(\tau(\vv{t}))\) whose corresponding subsequence is \(u\), and delete the letters at those positions.
  Reading the component boundaries of \(\tau(\vv{t})\) as separators and the deleted letters as placeholders determines an instantiation layout \(\eta\) for \((A, u)\).
  The remaining factors yield a tuple \(\vv{s}\) with \(r + \len{u}\) components, which satisfies \eqref{eq:neutralization-head} by construction.
  Thus the rule
  \[
    A_\eta^u(\vv{s})
    \gets
    (B_1)_{\eta_1}^{u_1}(\vv*{x}{1}'), \dotsc,
    (B_n)_{\eta_n}^{u_n}(\vv*{x}{n}')
  \]
  belongs to \(P'_\pi\).
  Put \(\vv{z} = \sigma'(\vv{s})\).
  Since \(\sigma'(\vv*{x}{i}') = \vv*{z}{i}\) for \(1 \leq i \leq n\),
  applying this rule with \(\sigma'\) to the derivations of the judgments
  \(\vdash_{G'} (B_i)_{\eta_i}^{u_i}(\vv*{z}{i})\) for \(1 \leq i \leq n\)
  yields the judgment \(\vdash_{G'} A_\eta^u(\vv{z})\).
  Using \eqref{eq:inst-homomorphism} and~\eqref{eq:neutralization-head}, we also obtain \(\inst_\eta^u(\vv{z})
  = \inst_\eta^u(\sigma'(\vv{s}))
  = \sigma'(\inst_\eta^u(\vv{s}))
  = (\sigma' \circ \tau)(\vv{t})
  = \sigma(\vv{t})
  = \vv{w}\).
  Since \(\Reppsi(\bm{v}_S) = \{\varepsilon\}\) and \(\varepsilon\) is the only instantiation layout for \((S, \varepsilon)\), applying the preceding statement with \(A = S\) proves \(L(G) \subseteq L(G')\).

  \proofheading{Neutrality}
  Let \(A_\eta^u \in N'\) and suppose that \(\vdash_{G'} A_\eta^u(\vv{z})\).
  By \eqref{eq:neutralization-to-original}, Lemma~\ref{lem:nonterminal-value}, and \eqref{eq:inst-general},
  \[
    \psi(\concat(\vv{z}))
    =
    \psi(\concat(\inst_\eta^u(\vv{z}))) - \psi(u)
    =
    \bm{v}_A - \bm{v}_A
    =
    \bm{0}.
  \]
  Thus every nonterminal of \(G'\) is neutral.
\end{proof}

\section{Proof of the Simulation Theorem}\label{app:simulation}

We use the join and split operations defined in Definition~\ref{def:join-split}.

For \(1 \leq s \leq r\) and a tuple \(\vv{u} = (u_1, \dotsc, u_s) \in (\Sigma^*)^s\), put
\(\pad_r(\vv{u}) = (u_1, \dotsc, u_s, \underbrace{\varepsilon, \dotsc, \varepsilon}_{r - s}) \in (\Sigma^*)^r\).
Since \(\pad_r(\vv{u})\) is obtained from \(\vv{u}\) by inserting empty components,
\(\vv{u} \sqsubseteq \pad_r(\vv{u}) \sqsubseteq \vv{u}\).

\begin{proof}[Proof of Theorem~\ref{thm:simulation}]
  The inclusion \(L(\Gpsi{r}) \subseteq \Wpsi\) follows from Lemma~\ref{lem:gpsi-properties}.
  We prove by induction on \(G\)-derivations that, for every nonterminal \(A\) of arity \(s\) and every
  \(\vv{w} \in (\Sigma^*)^s\),
  \[
    \vdash_G A(\vv{w})
    \implies
    \vdash_{\Gpsi{r}} I(\pad_r(\vv{w})).
  \]
  For non-branching \(G\), we simultaneously prove
  \(\vdash_G A(\vv{w}) \implies \vdash_{\Gnbpsi{r}} I(\pad_r(\vv{w}))\).

  Suppose that \(\vdash_G A(\vv{w})\), and choose a \(G\)-derivation of this judgment.
  Let \(\pi\) be the last rule used in this derivation, and write it as
  \[
    \pi\colon
    A(t_1, \dotsc, t_s)
    \gets
    B_1(x_{1, 1}, \dotsc, x_{1, r_1}), \dotsc,
    B_n(x_{n, 1}, \dotsc, x_{n, r_n}),
  \]
  where \(0 \leq n \leq 2\).
  Put \(\vv{t} = (t_1, \dotsc, t_s)\).
  By the definition of derivability, there exists a monoid homomorphism \(\sigma\) such that \(\sigma(a) = a\) for \(a \in \Sigma\),
  \(\vdash_G B_i(\sigma(\vv*{x}{i}))\) for \(1 \leq i \leq n\), and \(\vv{w} = \sigma(\vv{t})\).
  Write \(z = \tc(\pi) = a_1 \dotsm a_l\).
  By Lemma~\ref{lem:neutral-contribution}, \(z \in \Wpsi\), and \(l \leq c\).
  Using the separator \(\separator\) as a fresh letter,
  put
  \begin{align*}
    p
    &=
    \proj_{\Sigma \cup \{\separator\}}(\join_{\separator^{s - 1}}(\vv{t}))
    =
    p_1 \dotsm p_{l + s - 1} \in (\Sigma \cup \{\separator\})^{l + s - 1},\\
    \vv{t}'
    &=
    \Split_{\Sigma \cup \{\separator\}}(\join_{\separator^{s - 1}}(\vv{t}))
    =
    (t_1', \dotsc, t_{l + s}')
    \in
    (\{x_{i, j} \mid 1 \leq i \leq n, 1 \leq j \leq r_i\}^*)^{l + s}.
  \end{align*}
  Then \(\proj_\Sigma(p) = z\).
  The tuple \(\vv{t}'\) has \(l + s \leq c + m = r\) components.

  We first derive \(\pad_r(\sigma(\vv{t}'))\).
  If \(n = 0\), \(\pad_r(\sigma(\vv{t}'))\) is derived by the nullary rule.
  If \(n = 1\), since \(\pi\) is non-deleting and non-permuting, \(\vv{t}' \sqsubseteq \vv*{x}{1}\), and hence
  \[
    \pad_r(\sigma(\vv{t}'))
    \sqsubseteq
    \sigma(\vv{t}')
    \sqsubseteq
    \sigma(\vv*{x}{1})
    \sqsubseteq
    \pad_r(\sigma(\vv*{x}{1})).
  \]
  The induction hypothesis shows that \(\pad_r(\sigma(\vv*{x}{1}))\) is derivable from \(I\).
  The displayed relation and the definition of the unary regrouping rules show that
  \(\pad_r(\sigma(\vv{t}'))\) is also derivable from \(I\).

  Suppose that \(n = 2\).
  By the well-nestedness of \(\pi\), after possibly swapping the two body atoms, there exists \(0 \leq j \leq r_1\) such that
  \begin{equation}\label{eq:simulation-binary-order}
    \vv{t}'
    \sqsubseteq
    (x_{1, 1}, \dotsc, x_{1, j},
    x_{2, 1}, \dotsc, x_{2, r_2},
    x_{1, j + 1}, \dotsc, x_{1, r_1}).
  \end{equation}
  Here the sequence \(x_{1, 1}, \dotsc, x_{1, j}\) is omitted if \(j = 0\), and
  the sequence \(x_{1, j + 1}, \dotsc, x_{1, r_1}\) is omitted if \(j = r_1\).
  The induction hypothesis yields derivations of
  \(\pad_r(\sigma(\vv*{x}{1}))\) and \(\pad_r(\sigma(\vv*{x}{2}))\).
  Let \(\vv{b}\) be the \(2r\)-tuple obtained by inserting \(\pad_r(\sigma(\vv*{x}{2}))\) after the first \(j\) components of
  \(\pad_r(\sigma(\vv*{x}{1}))\).
  The tuple \(\vv{b}\) is obtained from the componentwise \(\sigma\)-image of the tuple on the right-hand side of
  \eqref{eq:simulation-binary-order} by inserting empty components.
  Hence \(\pad_r(\sigma(\vv{t}')) \sqsubseteq \vv{b}\), so a binary rule of \(\Gpsi{r}\) derives
  \(\pad_r(\sigma(\vv{t}'))\) from \(\pad_r(\sigma(\vv*{x}{1}))\) and \(\pad_r(\sigma(\vv*{x}{2}))\).

  For \(1 \leq j \leq l\), let \(\xi(j)\) be the unique index such that \(p_{\xi(j)} = a_j\) and
  \(\len{\proj_\Sigma(p_1 \dotsm p_{\xi(j)})} = j\).
  For \(1 \leq i < l + s\), put \(t_i'' = \sigma(t_i')\proj_\Sigma(p_i)\), put
  \(t_{l + s}'' = \sigma(t_{l + s}')\), and let \(\vv{t}'' = (t_1'', \dotsc, t_{l + s}'')\).
  By Lemma~\ref{lem:mpsi-decomposition}, choose an \(M_\psi\)-decomposition
  \([l] = J_1 \sqcup \dotsb \sqcup J_k\) of \(z\).
  For each \(1 \leq i \leq k\), one letter insertion rule appends \(a_j\) to the \(\xi(j)\)-th component for every \(j \in J_i\).
  Since \(j < j'\) implies \(\xi(j) < \xi(j')\), the terminal contribution of this rule is \(z[J_i] \in M_\psi\).
  Thus \(\pad_r(\vv{t}'')\) is derivable from \(I\).

  The occurrences of \(\separator\) in \(p\) correspond exactly to the component boundaries of \(\vv{t}\), so
  \(\vv{w} = \sigma(\vv{t}) \sqsubseteq \vv{t}''\), and hence
  \[
    \pad_r(\vv{w})
    \sqsubseteq
    \vv{w}
    \sqsubseteq
    \vv{t}''
    \sqsubseteq
    \pad_r(\vv{t}'').
  \]
  The displayed relation and the definition of the unary regrouping rules show that
  \(\pad_r(\vv{w})\) is derivable from \(I\), completing the induction.

  In particular, if \(\vdash_G S(w)\), then \(I(\pad_r((w)))\) is derivable.
  Since \(\concat(\pad_r((w))) = w\), the start rule derives \(w\).
  Hence \(L(G) \subseteq L(\Gpsi{r})\).
  If \(G\) is non-branching, then \(n \leq 1\) at every step of the induction.
  Thus the binary case does not occur, and the induction proves \(L(G) \subseteq L(\Gnbpsi{r})\).
\end{proof}

\section{Explicit counterexample words from Bishop--Elder--Evetts--Gallot--Levine}\label{app:beegl-witnesses}

We give an explicit choice of the counterexample word in Proposition~\ref{prop:two-letter-witnesses},
following the construction of
Bishop--Elder--Evetts--Gallot--Levine \cite{bishop2026}.
The proof below reads a derivation backwards.
We first show that unary regrouping rules can be replaced by left and right merge rules, whose reverses split one component into two adjacent components.

For \(r \geq 1\) and \(1 \leq i < r\), call the unary regrouping rules
\[
  I(x_1, \dotsc, x_{i - 1}, x_i x_{i + 1}, \varepsilon, x_{i + 2}, \dotsc, x_r)
  \gets
  I(x_1, \dotsc, x_r)
\]
and
\[
  I(x_1, \dotsc, x_{i - 1}, \varepsilon, x_i x_{i + 1}, x_{i + 2}, \dotsc, x_r)
  \gets
  I(x_1, \dotsc, x_r)
\]
the \emph{left merge rule} and the \emph{right merge rule} at position \(i\), respectively.
These are unary regrouping rules, not additional rules of the grammars.

\begin{lemma}\label{lem:regroup-by-merges}
  Let \(r \geq 1\).
  Let \(\vv{v} = (v_1, \dotsc, v_r)\) and \(\vv{w} = (w_1, \dotsc, w_r)\) be tuples of words.
  If \(\vv{v} \sqsubseteq \vv{w}\), then \(\vv{v}\) can be obtained from \(\vv{w}\) by a (possibly empty) sequence of applications of left and right merge rules.
\end{lemma}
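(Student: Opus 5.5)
Since \(\vv{v}\) and \(\vv{w}\) both have \(r\) components, \(\vv{v} \sqsubseteq \vv{w}\) means \(\vv{v} \in \Regroup_r(\vv{w})\). So there are indices \(0 = j(0) \leq \dotsb \leq j(r) = r\) with \(v_k = w_{(j(k-1), j(k)]}\). Equivalently, there is a nondecreasing \emph{target map} \(f\colon [r] \to [r]\), defined by \(f(i) = k\) iff \(j(k-1) < i \leq j(k)\), such that \(v_k\) is the concatenation, in increasing order of \(i\), of the \(w_i\) with \(f(i) = k\). The plan is to move the ``pieces'' \(w_1, \dotsc, w_r\) one slot at a time toward their targets, using merges, while keeping careful bookkeeping.

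\textbf{State and invariant.} A state records, for each piece \(i\), its current slot \(g(i) \in [r]\). The tuple is then determined: slot \(k\) holds the concatenation of the pieces with \(g(i) = k\), in increasing order of \(i\). I maintain two invariants:
\begin{itemize}
  \item \(g\) is nondecreasing;
  \item all pieces sharing a slot have the same target.
\end{itemize}
Both hold initially, when \(g = \mathrm{id}\). A right merge at position \(k\) moves every piece in slot \(k\) to slot \(k+1\), placing them before the pieces already there. A left merge at position \(k-1\) moves every piece in slot \(k\) to slot \(k-1\), placing them after the pieces already there. In both cases \(g\) stays nondecreasing and the concatenation order is preserved. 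Because of the second invariant, each occupied slot \(k\) has a well-defined target \(f_k\), and \(f_k\) is nondecreasing in \(k\) over the occupied slots. Slots are tracked by piece sets rather than contents, so empty words \(w_i\) cause no difficulty.

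\textbf{Phase 1 (rightward moves).} While some occupied slot has \(f_k > k\), take the \emph{largest} such \(k\) and apply the right merge at position \(k\). Suppose slot \(k+1\) is occupied. By maximality of \(k\), \(f_{k+1} \leq k+1\). By monotonicity, \(f_{k+1} \geq f_k \geq k+1\). Hence \(f_{k+1} = f_k\), and the second invariant survives the merge.

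\textbf{Phase 2 (leftward moves).} While some occupied slot has \(f_k < k\), take the \emph{smallest} such \(k\) and apply the left merge at position \(k-1\). Suppose slot \(k-1\) is occupied. By minimality of \(k\), \(f_{k-1} \geq k-1\). By monotonicity, \(f_{k-1} \leq f_k \leq k-1\). So the targets again agree. Moreover, Phase 2 never creates a slot with \(f_k > k\): pieces only move left to slot \(k-1\), and their target is at most \(k-1\).

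\textbf{Termination and conclusion.} The potential \(\sum_i \abs{g(i) - f(i)}\) strictly decreases with each merge in either phase, so both phases terminate. At the end every piece sits in its target slot, that is, \(g = f\). Then slot \(k\) holds the concatenation of the \(w_i\) with \(f(i) = k\), which is \(v_k\). This gives the required sequence of left and right merges.

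The main obstacle is preserving the invariant that pieces sharing a slot have a common target. The choice of the largest index in Phase 1 and the smallest index in Phase 2 is exactly what makes the monotonicity squeeze work.
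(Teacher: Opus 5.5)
Your proof is correct and is essentially the paper's argument in dual coordinates. The paper inducts backward on \(D = \sum_{i=0}^{r} \abs{i - j(i)}\) over the cut points: it peels off a final left merge when some \(j(i) > i\) and a final right merge otherwise, choosing an extremal index so the \(j(i)\) stay weakly increasing. You run the same adjacent moves forward on piece positions, rightward first and then leftward, and your shared-target invariant does the job of that monotonicity; your initial potential \(\sum_i \abs{i - f(i)}\) equals \(D\), since both count how many pieces must cross each boundary.
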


\begin{proof}
  Choose indices \(0 = j(0) \leq j(1) \leq \dotsb \leq j(r) = r\) such that
  \[
    (v_1, \dotsc, v_r)
    =
    (w_{(j(0), j(1)]}, \dotsc, w_{(j(r - 1), j(r)]}).
  \]
  We proceed by induction on \(D = \sum_{i = 0}^r \abs{i - j(i)}\).
  If \(D = 0\), then \(j(i) = i\) for every \(i\), so no rule application is needed.

  Suppose that \(D > 0\) and that \(j(i) > i\) for some \(i\).
  Let \(k\) be the largest such index.
  Then \(j(k) = j(k + 1) = k + 1\).
  Let \(l\) be the smallest index such that \(j(l) = j(k)\).
  We have \(j(l - 1) < j(l) = j(l + 1)\) and \(l < j(l)\).
  Replacing \(j(l)\) by \(j(l) - 1\) preserves the weak inequalities among the \(j(i)\) and decreases \(D\) by one.
  Hence, by the induction hypothesis, the tuple
  \[
    (
      w_{(j(0), j(1)]}, \dotsc,
      w_{(j(l - 1), j(l) - 1]},
      w_{(j(l) - 1, j(l + 1)]},
      \dotsc,
      w_{(j(r - 1), j(r)]}
    )
  \]
  can be obtained from \(\vv{w}\).
  Applying the left merge rule at position \(l\) yields \(\vv{v}\).

  It remains to consider the case in which \(j(i) \leq i\) for every \(i\).
  Since \(D > 0\), let \(k\) be the smallest index such that \(j(k) < k\).
  Then \(j(k - 1) = j(k) = k - 1\).
  Let \(l\) be the largest index such that \(j(l) = j(k)\).
  We have \(j(l - 1) = j(l) < j(l + 1)\) and \(j(l) < l\).
  Replacing \(j(l)\) by \(j(l) + 1\) preserves the weak inequalities among the \(j(i)\) and decreases \(D\) by one.
  Hence, by the induction hypothesis, the tuple
  \[
    (
      w_{(j(0), j(1)]}, \dotsc,
      w_{(j(l - 1), j(l) + 1]},
      w_{(j(l) + 1, j(l + 1)]},
      \dotsc,
      w_{(j(r - 1), j(r)]}
    )
  \]
  can be obtained from \(\vv{w}\).
  Applying the right merge rule at position \(l\) yields \(\vv{v}\).
\end{proof}

\begin{corollary}\label{cor:unary-regrouping-normalization}
  For every \(r \geq 1\), each derivation tree in \(\Gnbpsi{r}\) or \(\Gpsi{r}\) can be replaced by a derivation tree for the same ground atom with the following properties:
  \begin{itemize}
    \item Every unary regrouping rule used is a left or right merge rule.
    \item The ground premise and conclusion of each application of a unary regrouping rule are distinct.
    \item Every other rule application remains unchanged.
  \end{itemize}
\end{corollary}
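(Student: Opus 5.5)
The plan is to normalize a given derivation tree one unary regrouping application at a time, using Lemma~\ref{lem:regroup-by-merges}. Fix a derivation tree \(T\) in \(\Gnbpsi{r}\) or \(\Gpsi{r}\) with \(\vdash T : C(\vv{w})\). I would proceed by structural induction on \(T\), rebuilding it bottom-up. If the root rule is not a unary regrouping rule, keep it unchanged and replace its immediate subtrees by their normalized versions. By the induction hypothesis these derive the same ground atoms, so the same rule with the same substitution still applies.

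Now suppose the root rule is a unary regrouping rule \(\rho\colon I(\vv{t}) \gets I(x_1, \dotsc, x_r)\) with premise \(I(\vv{w}')\) and conclusion \(I(\vv{w})\). Then \(\vv{w} = \sigma(\vv{t})\) and \(\vv{w}' = \sigma(x_1, \dotsc, x_r)\). Since \(\vv{t} \sqsubseteq (x_1, \dotsc, x_r)\) and componentwise homomorphisms preserve \(\sqsubseteq\), we get \(\vv{w} \sqsubseteq \vv{w}'\), and both tuples have \(r\) components. Lemma~\ref{lem:regroup-by-merges} then gives a finite sequence \(\vv{w}' = \vv{w}^{(0)}, \vv{w}^{(1)}, \dotsc, \vv{w}^{(q)} = \vv{w}\). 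Each step applies a left or right merge rule, and every merge rule is itself a unary regrouping rule of the grammar. So I replace the single application of \(\rho\) by this chain, placed on top of the normalized subtree for \(I(\vv{w}')\).

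The distinctness requirement is handled by deleting trivial steps. Whenever \(\vv{w}^{(p-1)} = \vv{w}^{(p)}\), the corresponding application is dropped, and the chain still ends at \(\vv{w}\). If every step is trivial, including the case \(q = 0\) or an identity regrouping rule, the whole application disappears and the normalized subtree for \(I(\vv{w}') = I(\vv{w})\) is used directly. Removing a node whose premise equals its conclusion is harmless because the parent only depends on the ground atom derived. The non-regrouping rule applications are never altered, only their subtrees are replaced, so the third property holds.

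The only real point to verify is that a merge step in Lemma~\ref{lem:regroup-by-merges} is an application of a grammar rule with some ground substitution. This is immediate: the merge rules are explicitly unary regrouping rules, and instantiating \(x_i \mapsto w^{(p-1)}_i\) gives \(\vv{w}^{(p)}\) by construction of the sequence in that lemma. I therefore expect no serious obstacle, only bookkeeping. For that bookkeeping I would make the induction hypothesis state that the normalized tree derives the \emph{same ground atom}, so that substituting it into parent applications stays valid.
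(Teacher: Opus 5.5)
Your proposal is correct and follows essentially the same route as the paper. Both arguments use the fact that \(\sigma\) preserves \(\sqsubseteq\) to apply Lemma~\ref{lem:regroup-by-merges} to each ground application of a unary regrouping rule, replace that application by the resulting chain of merge-rule applications, and omit any step whose premise equals its conclusion. Your empty-chain case also covers \(r = 1\), which the paper treats separately.
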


\begin{proof}
  Consider a ground application of a unary regrouping rule \(I(\vv{t}) \gets I(x_1, \dotsc, x_r)\) determined by a monoid homomorphism \(\sigma\) as in the definition of derivability.
  Since componentwise extensions of monoid homomorphisms preserve \(\sqsubseteq\),
  \(\sigma(\vv{t}) \sqsubseteq (\sigma(x_1), \dotsc, \sigma(x_r))\).
  Lemma~\ref{lem:regroup-by-merges} replaces this application by a possibly empty sequence of applications of left and right merge rules with the same ground premise and conclusion.
  Repeating this replacement for every application of a unary regrouping rule yields the required derivation tree.
  If an application introduced by the replacement has the same ground premise and conclusion, omit it.
  When \(r = 1\), the identity rule is the only unary regrouping rule, so every application of a unary regrouping rule can be omitted.
\end{proof}

Recall that \(\MIX_2 = \psin{2}^{-1}(0)\), where
\(\psin{2}\colon \Sigman{2}^* \to \Z\) is given by
\(\psin{2}(\tta) = 1\) and \(\psin{2}(\ttb) = -1\).
Fix \(r \geq 1\), and put \(k = \max\{2, r\}\) and \(m = 24 k + 7\).
Following \cite{bishop2026}*{Definition~7.5}, define
\[
  \mathcal{T}_0 = \ttb^2,
  \qquad
  \mathcal{T}_n = \tta^{m^n} \mathcal{T}_{n - 1}^{2 m} \tta^{m^n}
  \quad
  (1 \leq n \leq 2 k),
  \qquad
  \mathcal{W}_r = \tta^{m^{2 k}} \mathcal{T}_{2 k} \tta^{m^{2 k}}.
\]
The calculation in \cite{bishop2026}*{Lemma~7.7} shows that
\(\psin{2}(\mathcal{T}_n) = -2 m^n\) for \(0 \leq n \leq 2 k\);
hence \(\mathcal{W}_r \in \MIX_2\).

Assume, for contradiction, that
\(\mathcal{W}_r \in L(\Gnbn{2}{r})\).
Since \(r \leq k\), it follows from Lemma~\ref{lem:gpsi-properties} that
\(\mathcal{W}_r \in L(\Gnbn{2}{k})\).
Let \(\vv{w}\) be the \(k\)-tuple derived from \(I\) immediately before the start rule is applied.
Then \(\concat(\vv{w}) = \mathcal{W}_r\), and
\[
  (\mathcal{W}_r,
  \underbrace{\varepsilon, \dotsc, \varepsilon}_{k - 1})
  \sqsubseteq
  \vv{w}.
\]
Choose an \(I\)-derivation of \(\vv{w}\).
The displayed relation and the definition of the unary regrouping rules extend it to an \(I\)-derivation of
\((\mathcal{W}_r, \varepsilon, \dotsc, \varepsilon)\) in \(\Gnbn{2}{k}\).
Apply Corollary~\ref{cor:unary-regrouping-normalization} to the resulting derivation.

Read the resulting sequence of \(I\)-tuples backwards, from \((\mathcal{W}_r, \varepsilon, \dotsc, \varepsilon)\) to \((\varepsilon, \dotsc, \varepsilon)\).
Reversing an application of a left or right merge rule splits one component into two adjacent components.
Reversing an application of a letter insertion rule removes one \(\tta\) and one \(\ttb\) from the corresponding ends of components.
Replace each such step by two successive steps, first removing \(\ttb\) and then removing \(\tta\).
Every tuple \(\vv{v}\) belonging to the normalized \(\Gnbn{2}{k}\)-derivation satisfies \(\psin{2}(\concat(\vv{v})) = 0\) by Lemma~\ref{lem:gpsi-properties}.
Every intermediate tuple \(\vv{v}\) introduced by splitting a reversed letter insertion step satisfies \(\psin{2}(\concat(\vv{v})) = 1\).
Beginning with \((\mathcal{W}_r, \varepsilon, \dotsc, \varepsilon)\), each step either splits a component or deletes a letter from an end of a component.
It follows by induction that, for every tuple \(\vv{v} = (v_1, \dotsc, v_k)\) in the resulting sequence, there exist words \(u_0, \dotsc, u_k\) such that
\(\mathcal{W}_r = u_0 v_1 u_1 \dotsm v_k u_k\).
Thus every tuple in the sequence is a
\(\mathcal{W}_r\)-sentential tuple with \(C = 1\)
in the notation of \cite{bishop2026}*{Definition~7.23}.

In the terminology of Bishop--Elder--Evetts--Gallot--Levine
\cite{bishop2026}*{Definitions~7.13 and~7.20},
the tuple \((\mathcal{W}_r, \varepsilon, \dotsc, \varepsilon)\) has a \(2\)-decomposition
\cite{bishop2026}*{Remark~7.21}, and
repeated applications of \cite{bishop2026}*{Proposition~7.31} show that
every subsequent tuple in the resulting sequence has a decomposition.
However, the final tuple \((\varepsilon, \dotsc, \varepsilon)\) has no such decomposition,
because none of its components is \(n\)-heavy for any \(2 \leq n \leq 2 k\).
Therefore \(\mathcal{W}_r \notin L(\Gnbn{2}{r})\).

\begin{proposition}\label{prop:binary-rule-elimination-three-letters}
  Put
  \(N = \len{\mathcal{W}_2}_{\tta} = \len{\mathcal{W}_2}_{\ttb}\).
  Then \(\mathcal{W}_2 \ttc^N \in L(\Gn{3}{2}) \setminus L(\Gnbn{3}{2})\).
  In particular, the three-letter analogue of
  Lemma~\ref{lem:binary-rule-elimination} is false.
\end{proposition}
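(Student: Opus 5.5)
The plan is to prove the two halves of \(\mathcal{W}_2 \ttc^N \in L(\Gn{3}{2}) \setminus L(\Gnbn{3}{2})\) separately. The positive half uses a general construction that works for every word in \(\MIX_2\) followed by a block of \(\ttc\)'s. The negative half is a projection argument, as in the proof of Theorem~\ref{thm:theorem-b}.

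\emph{Membership.} I will show by induction on \(\len{u}\) that for every \(u \in \MIX_2\) with \(\len{u}_{\tta} = \len{u}_{\ttb} = j\), the pair \((u, \ttc^j)\) is derivable from \(I\) in \(\Gn{3}{2}\).
\begin{enumerate}
  \item The base case \(u = \varepsilon\) is the nullary rule.
  \item If \(u\) is nonempty and not a \emph{prime} balanced word (one with no nonempty proper balanced prefix), write \(u = u' u''\) with \(u', u''\) nonempty and balanced. Derive \((u', \ttc^{j'})\) and \((u'', \ttc^{j''})\) by induction. Apply the binary rule inserting the second pair at position \(i = 1\). This gives the sequence \((u', u'', \ttc^{j''}, \ttc^{j'})\), which is regrouped as \((u' u'', \ttc^{j''+j'})\).
  \item If \(u\) is prime, then \(u = \tta v \ttb\) or \(u = \ttb v \tta\) with \(v\) balanced, by the standard walk argument on \(\Z\). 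From \((v, \ttc^{j-1})\) apply the letter insertion rule with \(u_1 = \tta\), \(v_1 = \ttb\), \(u_2 = \ttc\), \(v_2 = \varepsilon\), or the analogous rule with \(\ttb, \tta\). This is legitimate because \(\tta\ttb\ttc\) and \(\ttb\tta\ttc\) lie in \(M_{\psin{3}}\) by Lemma~\ref{lem:psi-n-mpsi}. The result is \((u, \ttc^j)\).
\end{enumerate}
Applying the start rule then gives \(u \ttc^j \in L(\Gn{3}{2})\). Taking \(u = \mathcal{W}_2\), which lies in \(\MIX_2\) by the computation recalled above from \cite{bishop2026}*{Lemma~7.7}, yields \(\mathcal{W}_2 \ttc^N \in L(\Gn{3}{2})\).

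\emph{Non-membership.} Suppose \(\mathcal{W}_2 \ttc^N \in L(\Gnbn{3}{2})\), and apply \(h = \proj_{\Sigman{2}}\) to every tuple of a \(\Gnbn{3}{2}\)-derivation.
\begin{enumerate}
  \item The nullary and start rules are preserved by \(h\).
  \item Unary regrouping rules project to unary regrouping rules, since componentwise homomorphisms preserve \(\sqsubseteq\).
  \item Each letter insertion rule inserts one \(\tta\), one \(\ttb\) and one \(\ttc\) at component ends. After erasing \(\ttc\) it becomes a letter insertion rule of \(\Gnbn{2}{2}\), whose terminal contribution \(\tta\ttb\) or \(\ttb\tta\) lies in \(M_{\psin{2}}\).
\end{enumerate}
Hence \(\mathcal{W}_2 \in L(\Gnbn{2}{2})\). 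This contradicts the argument of this appendix with \(r = 2\), which shows \(\mathcal{W}_2 \notin L(\Gnbn{2}{2})\).

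\emph{Consequence.} The word \(\mathcal{W}_2 \ttc^N\) has the three-letter analogue of the shape in Lemma~\ref{lem:binary-rule-elimination}: a word over \(\{\tta, \ttb\}\) with \(N\) occurrences of each letter, followed by \(\ttc^N\). It lies in \(L(\Gn{3}{2})\) but not in \(L(\Gnbn{3}{2})\), so the three-letter analogue of Lemma~\ref{lem:binary-rule-elimination} fails.

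\emph{Main obstacle.} Essentially all the work sits in the external result that \(\mathcal{W}_2 \notin L(\Gnbn{2}{2})\), which is taken as established above. The only points to check here are that the prime factorization of balanced words covers both orientations \(\tta v \ttb\) and \(\ttb v \tta\), and that the chosen letter insertion rules exist in \(\Gn{3}{2}\).
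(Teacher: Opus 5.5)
Your proof is correct and follows the paper's argument. Membership is proved by induction, using a letter insertion rule for words of the form \(\tta v \ttb\) or \(\ttb v \tta\) and the binary rule at insertion position \(1\) for a split into two balanced pieces; non-membership uses the same projection \(\proj_{\Sigman{2}}\) onto \(\Gnbn{2}{2}\). The differences are cosmetic: the paper splits cases by whether the first and last letters differ rather than by primality, and it phrases the induction in terms of \(\mathcal{A}_2\) before invoking Proposition~\ref{prop:g3-tuples}.
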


\begin{proof}
  We first prove by induction on \(n\) that \((w, \ttc^n) \in \mathcal{A}_2\)
  for every \(w \in \Sigman{2}^*\) such that
  \(\len{w}_{\tta} = \len{w}_{\ttb} = n\).
  For \(n = 0\), this is the first defining condition for
  \(\mathcal{A}_2\).
  Suppose that \(n > 0\).

  If the first and last letters of \(w\) are different, then,
  after exchanging \(\tta\) and \(\ttb\) if necessary, write
  \(w = \tta u \ttb\).
  By the induction hypothesis,
  \((u, \ttc^{n - 1}) \in \mathcal{A}_2\).
  The second defining condition for \(\mathcal{A}_2\) then implies
  \((w, \ttc^n) \in \mathcal{A}_2\).

  Suppose that the first and last letters of \(w\) are equal.
  After exchanging \(\tta\) and \(\ttb\) if necessary, write
  \(w = \tta u \tta\).
  Since \(\psin{2}(\tta) = 1\) and \(\psin{2}(\tta u) = -1\), and each successive letter changes the \(\psin{2}\)-value by \(1\) or \(-1\),
  there is a factorization \(u = v_1 v_2\) such that
  \(\psin{2}(\tta v_1) = 0\).
  Since \(\psin{2}(w) = 0\), we also have
  \(\psin{2}(v_2 \tta) = 0\).
  Put \(j = \len{\tta v_1}_{\tta}\).
  By the induction hypothesis,
  \((\tta v_1, \ttc^j), (v_2 \tta, \ttc^{n - j}) \in \mathcal{A}_2\).
  The third defining condition for \(\mathcal{A}_2\), with the
  second tuple inserted after the first component, implies that
  \((\tta v_1 v_2 \tta, \ttc^{n - j}\ttc^j) = (w, \ttc^n) \in \mathcal{A}_2\).

  Taking \(n = N\) and \(w = \mathcal{W}_2\) in the preceding conclusion and applying Proposition~\ref{prop:g3-tuples},
  we obtain \(\mathcal{W}_2\ttc^N \in L(\Gn{3}{2})\).

  Assume, for contradiction, that
  \(\mathcal{W}_2\ttc^N \in L(\Gnbn{3}{2})\), and let
  \(h = \proj_{\Sigman{2}}\colon \Sigman{3}^* \to \Sigman{2}^*\).
  Apply \(h\) componentwise to every tuple in a
  \(\Gnbn{3}{2}\)-derivation of \(\mathcal{W}_2\ttc^N\).
  The nullary and start rules commute with \(h\).
  Since componentwise extensions of monoid homomorphisms preserve \(\sqsubseteq\), an application of a unary regrouping rule projects to an application of a unary regrouping rule.
  Each letter insertion rule projects to a letter insertion rule
  of \(\Gnbn{2}{2}\), since \(h\) erases \(\ttc\).
  The resulting derivation shows that
  \(\mathcal{W}_2 \in L(\Gnbn{2}{2})\), contradicting the
  construction of \(\mathcal{W}_2\).
\end{proof}

\section{Separations of language classes and related open problems}
\label{app:language-class-landscape}

This appendix records the separation results underlying
Figure~\ref{fig:mcfl-inclusions} and two related open problems.

\begin{lemma}\label{lem:language-class-separations}
  The following statements hold.
  \begin{enumerate}[leftmargin=*]
    \item\label{item:language-class-hierarchies}
      For every \(m \geq 1\), \(\mMCFL{m} \subsetneq \mMCFL{(m + 1)}\) and \(\mMCFLwn{m} \subsetneq \mMCFLwn{(m + 1)}\).

    \item\label{item:fixed-dimension-separations}
      For every \(m \geq 2\),
      \begin{itemize}[leftmargin=*]
        \item \(\mMCFLwn{m} \subsetneq \mMCFL{m} \cap \mMCFLwn{(m + 1)}\).
        \item \(\mMCFL{m}\) and \(\mMCFLwn{(m + 1)}\) are incomparable.
        \item \(\mMCFL{m} \cup \mMCFLwn{(m + 1)} \subsetneq \mMCFL{(m + 1)}\).
      \end{itemize}

    \item\label{item:unbounded-dimension-separations}
      We have:
      \begin{itemize}[leftmargin=*]
        \item \(\MCFLwn \subsetneq \MCFL\) and \(\MCFLwn \subsetneq \IL\).
          In fact, \(\mMCFL{2} \nsubseteq \MCFLwn\).
        \item \(\MCFL\) and \(\IL\) are incomparable.
          More precisely, \(\mMCFL{3} \nsubseteq \IL\) and \(\IL \nsubseteq \MCFL\).
      \end{itemize}
  \end{enumerate}
\end{lemma}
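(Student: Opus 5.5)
The lemma bundles several separations, and I will prove them by exhibiting witness languages. Each witness is placed in the smaller classes by explicit grammars and excluded from the others by pumping or rank arguments from the literature.

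For (i), I will use the counting languages \(L_k = \{\tta_1^n \dotsm \tta_{k}^n \mid n \geq 0\}\). Seki et al.\ show that \(L_{2m+2}\) belongs to \(\mMCFL{(m+1)}\) but not to \(\mMCFL{m}\). The grammar generating it is a non-branching rule \(A(\tta_1 x_1 \tta_2, \dotsc, \tta_{2m+1} x_{m+1} \tta_{2m+2}) \gets A(x_1, \dotsc, x_{m+1})\) together with a start rule concatenating the components. This grammar is well-nested: being non-branching, it has no pair of distinct body atoms to interleave. Since \(\mMCFLwn{m} \subseteq \mMCFL{m}\), the same witness separates the well-nested hierarchy.

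For (ii), I need a language in \(\mMCFL{m} \cap \mMCFLwn{(m+1)}\) that is not in \(\mMCFLwn{m}\). Kanazawa--Salvati type results give such languages: for \(m = 2\), the language \(\MIX\) is in \(\mMCFL{2}\) by Theorem~\ref{thm:salvati-dimension-two} and not in \(\mMCFLwn{2}\) by Theorem~\ref{thm:kanazawa-salvati-dimension-two}. However, \(\MIX\)'s membership in \(\mMCFLwn{3}\) is open, so it is not a valid witness. I will instead use the copy-type language \(\{w \# w \mid w \in D\}\) with \(D\) a Dyck language, or more generally the \(m\)-fold interleaved copy language from Kanazawa--Salvati's ``copying'' paper. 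That paper proves such languages lie in \(\mMCFL{m}\) but need dimension \(m+1\) when well-nested, while a well-nested grammar of dimension \(m+1\) generates them.

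Incomparability of \(\mMCFL{m}\) and \(\mMCFLwn{(m+1)}\) then follows from two observations. The witness just described gives \(\mMCFL{m} \nsubseteq \mMCFLwn{m}\), and more precisely one also needs a language in \(\mMCFL{m}\) outside \(\mMCFLwn{(m+1)}\). For that direction I would instead take a language in \(\mMCFL{2}\) outside all of \(\MCFLwn\), produced in (iii). In the other direction, \(L_{2m+2} \in \mMCFLwn{(m+1)} \setminus \mMCFL{m}\).

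The third bullet of (ii), strictness of the union, follows from closure under concatenation with disjoint alphabets. If \(K_1 \in \mMCFL{m} \setminus \mMCFLwn{(m+1)}\) and \(K_2 \in \mMCFLwn{(m+1)} \setminus \mMCFL{m}\) are over disjoint alphabets, then \(K_1 \# K_2 \in \mMCFL{(m+1)}\). Intersecting with a regular language and projecting by a homomorphism shows that \(K_1 \# K_2\) lies in neither class, since both classes are closed under rational transductions by \cite{seki1991} and \cite{seki-kato2008}.

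For (iii), I will use the non-well-nestedness results. Kanazawa--Salvati's copying paper (or the literature on the language \(\{w\#w\#\dotsm\}\)) shows that some language in \(\mMCFL{2}\) is not in \(\MCFLwn\); this gives \(\MCFLwn \subsetneq \MCFL\). Theorem~\ref{thm:theorem-b} supplies a second witness, since \(O_3 \in \mMCFL{3}\) by Theorem~\ref{thm:origin-crossing-dimension} but \(O_3 \notin \MCFLwn\). For \(\MCFLwn \subsetneq \IL\), I will take a non-semilinear indexed language such as \(\{\tta^{2^n}\}\): MCFLs are semilinear, but this language is indexed. The same language gives \(\IL \nsubseteq \MCFL\).

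The remaining claim, \(\mMCFL{3} \nsubseteq \IL\), is the main obstacle. I would try to cite a known non-indexed 3-MCFL from the literature; candidates are Kanazawa--Salvati's examples or \(O_3\)-related results. If no such citation exists, I would rely on Gilman's shrinking lemma applied to a suitable 3-MCFL, but locating a correct witness and reference is the step I expect to require the most care.
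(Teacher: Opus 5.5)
There are two genuine gaps. Part (i), the incomparability bullet of (ii), the union bullet and the semilinearity arguments in (iii) are sound. The union bullet works by concatenating a copy-language witness with \(L_{2m+2}\), an alternative to the paper's disjoint union \(\mathtt{0}L_m \cup \mathtt{1}\mathrm{RESP}_{m+1}\). For \(\MCFLwn \subsetneq \IL\) you should also state explicitly that \(\MCFLwn \subseteq \IL\).

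\textbf{First gap: the witness for \(\mMCFLwn{m} \subsetneq \mMCFL{m} \cap \mMCFLwn{(m+1)}\) fails.} The copy language \(\{w\#w \mid w \in D\}\), with \(D\) a Dyck language (not EDT0L), is exactly Kanazawa--Salvati's example of a language in \(\mMCFL{2}\) that lies outside \(\MCFLwn\) altogether. Their result is that if \(\{w\#w \mid w \in L_0\}\) is well-nested multiple context-free, then \(L_0\) is EDT0L. You yourself use this language in that role for incomparability and in (iii).

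So it is not in \(\mMCFLwn{(m+1)}\), and no result says it becomes well-nested with one extra dimension. The unspecified ``\(m\)-fold interleaved copy language'' does not repair this. You need a language that fails well-nestedness at dimension \(m\) but not at \(m+1\). The paper uses the respectively language
\[
  \mathrm{RESP}_m = \{a_1^i a_2^i b_1^j b_2^j \dotsm a_{2m-1}^i a_{2m}^i b_{2m-1}^j b_{2m}^j \mid i, j \geq 0\}.
\]
Seki--Kato proved \(\mathrm{RESP}_m \in \mMCFL{m} \setminus \mMCFLwn{m}\). The paper then gives a non-branching, hence well-nested, grammar of dimension \(m+1\) for it. That grammar first derives \((\varepsilon, b_1^j b_2^j, \dotsc, b_{2m-1}^j b_{2m}^j)\), then repeatedly wraps the \(a\)'s around the \(b\)-blocks, with the extra first component carrying \(a_1^i\).

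\textbf{Second gap: there is no witness for \(\mMCFL{3} \nsubseteq \IL\).} The candidates you mention do not work: whether \(O_3\), or any word problem of \(\Z^n\) with \(n \geq 2\), is indexed is open. The missing idea is to triple the copy language. The language \(\{w\#w\#w \mid w \in L_0\}\), with \(L_0 \in \CFL \setminus \mathrm{EDT0L}\), lies in \(\mMCFL{3}\), using one component per copy. The triple-copying theorem of Engelfriet--Skyum says that if this language were indexed, then \(L_0\) would be EDT0L, a contradiction. The double-copying analogue for indexed languages is only conjectured, which is why the statement uses dimension \(3\) rather than \(2\).
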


\begin{proof}
  For \ref{item:language-class-hierarchies}, let
  \[
    L_m
    =
    \{a_1^n a_2^n \dotsm a_{2 m + 1}^n \mid n \geq 1\}.
  \]
  Seki--Matsumura--Fujii--Kasami proved that
  \(L_m \notin \mMCFL{m}\)
  \cite{seki1991}*{Lemma~3.3}.
  The language \(L_m\) is generated by the following variant of the grammar in their Example~2.1(3):
  \begin{align*}
    A(a_1 a_2, \dotsc, a_{2 m - 1} a_{2 m}, a_{2 m + 1})
    &\gets,\\
    A(a_1 x_1 a_2, \dotsc, a_{2 m - 1} x_m a_{2 m},
      a_{2 m + 1} x_{m + 1})
    &\gets A(x_1, \dotsc, x_{m + 1}),\\
    S(x_1 \dotsm x_{m + 1})
    &\gets A(x_1, \dotsc, x_{m + 1}).
  \end{align*}
  Here, \(S\) and \(A\) are nonterminals of arity \(1\) and \(m + 1\), respectively.
  This grammar is a well-nested MCFG of dimension \(m + 1\), so
  \(L_m \in \mMCFLwn{(m + 1)}\).
  Thus both inclusions in \ref{item:language-class-hierarchies} are strict.

  For \ref{item:fixed-dimension-separations}, let
  \[
    \mathrm{RESP}_m
    =
    \{a_1^i a_2^i b_1^j b_2^j \dotsm
    a_{2 m - 1}^i a_{2 m}^i b_{2 m - 1}^j b_{2 m}^j
    \mid i, j \geq 0\},
  \]
  following the notation of Kanazawa and Salvati
  \cite{kanazawa-salvati2010copying}*{Section~4}.
  Seki and Kato proved that
  \[
    \mathrm{RESP}_m
    \in
    \mMCFL{m} \setminus \mMCFLwn{m}
  \]
  for every \(m \geq 2\) \cite{seki-kato2008}*{Lemma~13}.
  We next show that \(\mathrm{RESP}_m \in \mMCFLwn{(m + 1)}\).
  Let \(H_m\) be the MCFG with nonterminals \(S\) of arity \(1\) and \(A, B\) of arity \(m + 1\), and with the following rules:
  \[
  \begin{aligned}
    B(\varepsilon, \dotsc, \varepsilon)
    &\gets,\\
    B(x_1, b_1 x_2 b_2, \dotsc, b_{2m - 1} x_{m + 1} b_{2m})
    &\gets B(x_1, \dotsc, x_{m + 1}),\\
    A(x_1, \dotsc, x_{m + 1})
    &\gets B(x_1, \dotsc, x_{m + 1}),\\
    A(a_1 x_1, a_2 x_2 a_3, \dotsc, a_{2m - 2} x_m a_{2m - 1}, a_{2m} x_{m + 1})
    &\gets A(x_1, \dotsc, x_{m + 1}),\\
    S(x_1 \dotsm x_{m + 1})
    &\gets A(x_1, \dotsc, x_{m + 1}).
  \end{aligned}
  \]
  The grammar \(H_m\) is a well-nested MCFG of dimension \(m + 1\) satisfying
  \(L(H_m) = \mathrm{RESP}_m\).
  Together with the result of Seki and Kato, this proves the first assertion in
  \ref{item:fixed-dimension-separations}.

  Fix \(L_0 \in \CFL \setminus \mathrm{EDT0L}\), let \(\#\) be a fresh letter, and set
  \[
    D = \{w\#w \mid w \in L_0\}.
  \]
  Here, \(\mathrm{EDT0L}\) stands for ``Extended, Deterministic, Table, \(0\)-interaction, Lindenmayer''
  \cites{freden-martinsen2023,bishop2026}.
  For example, \(L_0\) may be taken to be a one-sided Dyck language over one pair of parentheses
  \cite{rozoy1987dyck}*{Section~5 and Theorem~6.3.2}.
  Kanazawa and Salvati
  \cite{kanazawa-salvati2010copying}*{Theorem~6(i) and Corollary~9}
  showed that
  \[
    D \in \mMCFL{2} \setminus \MCFLwn.
  \]
  For \(m \geq 2\), the languages \(L_m\) and \(D\) show that \(\mMCFL{m}\) and \(\mMCFLwn{(m + 1)}\) are incomparable.

  Fix \(m \geq 2\).
  Let \(\Sigma\) be an alphabet such that
  \(L_m \cup \mathrm{RESP}_{m + 1} \subseteq \Sigma^*\), and let
  \(\mathtt{0}\) and \(\mathtt{1}\) be fresh letters such that
  \(\mathtt{0}, \mathtt{1} \notin \Sigma\).
  Put
  \(K_m = \mathtt{0} L_m \cup \mathtt{1} \mathrm{RESP}_{m + 1}\).
  Both \(L_m\) and \(\mathrm{RESP}_{m + 1}\) belong to
  \(\mMCFL{(m + 1)}\).
  Since \(\mMCFL{(m + 1)}\) is closed under concatenation and union
  \cite{seki1991}*{Theorem~3.9(2)}, we have \(K_m \in \mMCFL{(m + 1)}\).
  The languages \(\mathtt{0}\Sigma^*\) and \(\mathtt{1}\Sigma^*\) are regular,
  and we have
  \(L_m = \proj_\Sigma(K_m \cap \mathtt{0}\Sigma^*)\) and
  \(\mathrm{RESP}_{m + 1}
    = \proj_\Sigma(K_m \cap \mathtt{1}\Sigma^*)\).
  The class \(\mMCFL{m}\) is closed under intersection with regular languages and under substitutions, and hence under homomorphisms
  \cite{seki1991}*{Theorem~3.9(1) and (3)}.
  If \(K_m \in \mMCFL{m}\), these closure properties would imply \(L_m \in \mMCFL{m}\), a contradiction.
  Thus \(K_m \notin \mMCFL{m}\).
  Intersection with a regular language and homomorphisms are rational transductions.
  The class \(\mMCFLwn{(m + 1)}\) is closed under rational transductions
  \cite{seki-kato2008}*{Theorem~15}.
  If \(K_m \in \mMCFLwn{(m + 1)}\), this closure property would imply
  \(\mathrm{RESP}_{m + 1} \in \mMCFLwn{(m + 1)}\), a contradiction.
  Thus \(K_m \notin \mMCFLwn{(m + 1)}\).
  This proves the last assertion in \ref{item:fixed-dimension-separations}.

  For \ref{item:unbounded-dimension-separations}, well-nested multiple context-free languages are indexed
  \cite{kanazawa-salvati2010copying}*{Section~1}.
  The language \(D\) above shows that \(\mMCFL{2} \nsubseteq \MCFLwn\) and hence that \(\MCFLwn \subsetneq \MCFL\).
  The language
  \[
    \{w\#w\#w \mid w \in L_0\}
  \]
  belongs to \(\mMCFL{3}\) \cite{kanazawa-salvati2010copying}*{Section~1}.
  If it belonged to \(\IL\), the triple-copying theorem of Engelfriet and Skyum would imply that \(L_0 \in \mathrm{EDT0L}\)
  \cite{engelfriet-skyum1976copying}*{Theorem~2}.
  Thus \(\mMCFL{3} \nsubseteq \IL\).
  Finally, the language
  \[
    E = \{\tta^{2^n} \mid n \geq 1\}
  \]
  is indexed but not semilinear
  \cite{joshi-vijay-shanker-weir1991}*{Section~4}.
  Since every MCFL is semilinear \cite{seki1991}*{Theorem~3.7}, we have \(E \in \IL \setminus \MCFL\).
  It follows that \(\MCFLwn \subsetneq \IL\) and that \(\MCFL\) and \(\IL\) are incomparable.
\end{proof}

\begin{openproblem}\label{prob:remaining-class-relations}
  To our knowledge, the following questions remain open.
  \begin{enumerate}[leftmargin=*]
    \item\label{item:open-intersection} Is \(\MCFLwn = \MCFL \cap \IL\)?

    \item\label{item:open-dimension-two-separation} Is \(\mMCFL{2} \nsubseteq \IL\)?
  \end{enumerate}
\end{openproblem}

The classes \(\MCFL\) and \(\IL\) are closed under rational transductions, and every language in \(\MCFL\) is semilinear.
Hence \(\MCFL \cap \IL\) is a semilinear \emph{rational cone}.
Salvati suggested that every semilinear rational cone contained in \(\IL\) is contained in \(\MCFLwn\)
\cite{salvati2015}*{Section~5}.
Thus Salvati's suggestion implies an affirmative answer to Open Problem~\ref{prob:remaining-class-relations}\ref{item:open-intersection}.

By Theorem~\ref{thm:salvati-dimension-two}, \(\MIX \in \mMCFL{2}\).
Thus Marsh's conjecture would answer
Open Problem~\ref{prob:remaining-class-relations}\ref{item:open-dimension-two-separation}
affirmatively, with \(\MIX\) as a witness
\cite{marsh1985conjectures}.
Kanazawa and Salvati conjectured the following double-copying theorem
for OI macro languages, or equivalently for indexed languages
\cite{kanazawa-salvati2010copying}*{Introduction and Section~6}:
\[
  \{w\#w \mid w \in L_0\} \in \IL
  \implies
  L_0 \in \mathrm{EDT0L}.
\]
Since \(L_0 \notin \mathrm{EDT0L}\), this conjecture would imply
\(D \notin \IL\).
Together with \(D \in \mMCFL{2}\), this would answer
Open Problem~\ref{prob:remaining-class-relations}%
\ref{item:open-dimension-two-separation}
affirmatively, with \(D\) as a witness.

An affirmative answer to
Open Problem~\ref{prob:remaining-class-relations}\ref{item:open-intersection}
would also imply an affirmative answer to
Open Problem~\ref{prob:remaining-class-relations}\ref{item:open-dimension-two-separation}.
Indeed, the language \(D\) above belongs to \(\mMCFL{2} \setminus \MCFLwn\);
if \(\MCFLwn = \MCFL \cap \IL\), then \(D \notin \IL\).

\section*{Declaration on the use of generative AI}

The core mathematical ideas and proof strategies in this work are the author's own.
This includes the proof strategy for the Neutralization Theorem~\ref{thm:neutralization} and the statements of Theorems~\ref{thm:theorem-a} and~\ref{thm:theorem-b}.
Under the author's direction, multiple OpenAI GPT models, including GPT-5.5, GPT-5.5 Pro, GPT-5.6 Sol, and GPT-6 Astra, were used to work out proof details, prepare drafts of the manuscript based on those ideas and strategies, search the literature, and revise the manuscript.
The manuscript was developed through many rounds in which the author reviewed the resulting text, specified corrections and desired changes, and instructed the tools to revise it.
Because the text was repeatedly generated and revised in this way, there is no clear sentence-by-sentence division between AI-generated and author-written text.
The author verified every mathematical statement and proof step, checked every citation against the cited source, and reviewed every sentence of the final manuscript.
The author takes full responsibility for the content and final form of the manuscript.

\renewcommand{\sectionname}{}

\end{document}